\setlist{  
  listparindent=\parindent,
  parsep=0pt,
}
\theoremstyle{plain}
\newtheorem{thm}{Theorem}[section]
\newtheorem{prop}[thm]{Proposition}
\newtheorem{lemma}[thm]{Lemma}
\theoremstyle{definition}
\newtheorem{remark}[thm]{Remark}
\numberwithin{equation}{section} 
\DeclarePairedDelimiter{\paren}{\lparen}{\rparen}
\DeclarePairedDelimiter{\jp}{\langle}{\rangle}
\DeclareMathOperator{\supp}{supp}
\newcommand{\M}{{\mathcal{M}}}
\newcommand{\p}{{\partial}}
\renewcommand{\d}{\delta}
\newcommand{\R}{{\mathbb{R}}}
\newcommand{\N}{{\mathbb{N}}}
\newcommand{\Z}{{\mathbb{Z}}}
\renewcommand{\P}{{\mathcal{P}}}
\newcommand{\T}{{\mathbb{T}}}
\newcommand{\G}{{\mathfrak{G}}}
\renewcommand{\L}{{\mathcal{L}}}
\newcommand{\Sc}{{\mathcal{S}}}
\newcommand{\A}{{\mathcal{A}}}
\renewcommand{\M}{{\mathcal{M}}}
\newcommand{\Qc}{\mathcal{Q}}
\newcommand{\Cc}{\mathcal{C}}
\newcommand{\Jr}{\mathfrak{J}}
\newcommand{\tl}{\tilde}
\newcommand{\D}{\Delta}
\newcommand{\ph}{\phantom{=}}
\newcommand{\nn}{\nonumber}
\newcommand{\Lr}{\mathfrak{L}}
\newcommand{\Cr}{\mathfrak{C}}
\newcommand{\ol}{\overline}
\newcommand{\ul}{\underline}
\newcommand{\ux}{\underline{x}}
\newcommand{\umu}{\underline{\mu}}
\newcommand{\ep}{\epsilon}
\newcommand{\vep}{\varepsilon}
\newcommand{\al}{\alpha}
\newcommand{\om}{\omega}
\newcommand{\E}{\mathbb{E}}
\DeclareMathOperator{\Du}{Duh}
\renewcommand{\A}{\mathcal{A}}
\newcommand{\wh}{\widehat}
\newcommand{\uxi}{\underline{\xi}}
\let\oldtocsection=\tocsection
\let\oldtocsubsection=\tocsubsection
\let\oldtocsubsubsection=\tocsubsubsection
\renewcommand{\tocsection}[2]{\hspace{0em}\oldtocsection{#1}{#2}}
\renewcommand{\tocsubsection}[2]{\hspace{1em}\oldtocsubsection{#1}{#2}}
\renewcommand{\tocsubsubsection}[2]{\hspace{2em}\oldtocsubsubsection{#1}{#2}}
\begin{document}

\title[Uniqueness of Solutions to the Spectral Hierarchy]{Uniqueness of Solutions to the Spectral Hierarchy in Kinetic Wave Turbulence Theory}

\author[M. Rosenzweig]{Matthew Rosenzweig}
\address{  
Department of Mathematics\\ 
Massachusetts Institute of Technology\\
Headquarters Office\\
Simons Building (Building 2), Room 106\\
77 Massachusetts Ave\\
Cambridge, MA 02139-4307}
\email{mrosenzw@mit.edu}
\thanks{M.R. is funded in part by the Simons Foundation through the Simons Collaboration on Wave Turbulence}

\author[G. Staffilani]{Gigliola Staffilani}
\address{  
Department of Mathematics\\ 
Massachusetts Institute of Technology\\
Headquarters Office\\
Simons Building (Building 2), Room 106\\
77 Massachusetts Ave\\
Cambridge, MA 02139-4307}
\email{gigliola@math.mit.edu}
\thanks{G.S. is  funded in part by  DMS-1764403 and by the Simons Foundation through the Simons Collaboration on Wave Turbulence}

\begin{abstract}
In \cite{ES2012} and \cite{CDR2018}, Eyink and Shi and Chibbaro et al., respectively, formally derived an infinite, coupled hierarchy of equations for the spectral correlation functions of a system of weakly interacting nonlinear dispersive waves with random phases in the standard kinetic limit. Analogously to the relationship between the Boltzmann hierarchy and Boltzmann equation, this \emph{spectral hierarchy} admits a special class of factorized solutions, where each factor is a solution to the wave kinetic equation (WKE). A question left open by these works and highly relevant for the mathematical derivation of the WKE is whether solutions of the spectral hierarchy are unique, in particular whether factorized initial data necessarily lead to factorized solutions. In this article, we affirmatively answer this question in the case of 4-wave interactions by showing, for the first time, that this spectral hierarchy is well-posed in an appropriate function space. Our proof draws on work of Chen and Pavlovi\'{c} \cite{CP2010} for the Gross-Pitaevskii hierarchy in quantum many-body theory and of Germain et al. \cite{GIT2020} for the well-posedness of the WKE.
\end{abstract}

\maketitle

\section{Introduction}
\subsection{Background}
The theory of \emph{wave turbulence} describes the statistical properties of ensembles of dispersives waves, such as those on the surface of the ocean \cite{Hasselmann1962I, Hasselmann1963II, Hasselmann1963III, Zakharov1999, FLP2007} or in quantum fluids \cite{KMN2014}, with weakly nonlinear interactions using the paradigm of Boltzmann's kinetic theory for dilute gasses. Starting from a microscopic description of waves of different frequencies interacting nonlinearly, the goal is to obtain an effective macroscopic description of the evolution of the energy spectrum. The development of wave turbulence theory began with work of Peierls \cite{Peierls1929} for phonons in the first half of the twentieth century. Several decades later, Hasselmnan \cite{Hasselmann1962I, Hasselmann1963II, Hasselmann1963III} independently proposed a different theory for water waves. Zakharov and collaborators \cite{ZLF2012} breathed new life into the subject with the discovery of the \emph{Kolmogorov-Zakharov spectra}, which are stationary solutions of power-law type for such kinetic equations and which are analagous to the Kolmogorov spectra in the theory of hydrodynamic turbulence (e.g. see \cite{Frisch1995}). In the years since, wave turbulence theory has developed into a large body of literature covering many physical systems of interest, a sense of which the reader can glean from the reviews of Newell and Rumpf \cite{NR2013} and Nazarenko \cite{Nazarenko2011}.

A fundamental object in wave turbulence theory is the \emph{wave kinetic equation (WKE)}, which one should think of as the wave analogue of the Boltzmann equation for systems of particles \cite{Spohn2008}. In principle, this nonlinear equation describes the evolution of the system's energy density, alternatively Fourier space mass density. However, providing a rigorous derivation of the WKE from the underlying physical system is a very difficult problem in mathematical physics that, generally speaking, remains open.

The mathematical community's attention on this derivation problem has largely focused on the justification of the WKE starting from dynamics governed by the \emph{nonlinear Schr\"odinger equation (NLS)}
\begin{equation}
\begin{cases}
i\p_t v +\frac{1}{2\pi}\D v - |v|^2v = 0,\\
v|_{t=0} = v_0,
\end{cases}
\end{equation}
which corresponds to 4-wave interactions and which is the primary interest of this article. As the sign of the nonlinearity does not play a role due to the weakness of the interactions we consider, we have chosen the defocusing equation. Let us quickly sketch, following the argument of \cite{BGHS2020} (see Section 2 of that work for more details), how one arrives at the WKE for this model.

Suppose that one considers the NLS on the standard torus $\T_L^d$ which we identify with the box $[0,L]^d$ with periodic boundary conditions. To track the strength of the nonlinear interaction, one can introduce a new unknown through $v=\lambda u$ so that the equation above becomes
\begin{equation}
\label{eq:nls}
\begin{cases}
i\p_t u + \frac{1}{2\pi}\D u - \lambda^2 |u|^2u =0,\\
u|_{t=0} = u_0.
\end{cases}
\end{equation}
We think of $u$ as bounded in some suitable norm. Take initial data $u_0$ to be of \emph{random phase (RP)} type:
\begin{equation}
\label{eq:ID}
u_0(x) = L^{-d}\sum_{k\in\Z_L^d} \wh{u_0}(k) e^{2\pi i k\cdot x}, \qquad \wh{u_0}(k) = \sqrt{\phi(k)} e^{2\pi i\vartheta_k},
\end{equation}
where the phases $\vartheta_k$ are independent and uniformly distributed in $[0,1]$, $\phi:\R^d\rightarrow [0,\infty)$ is a deterministic Schwartz function, $\Z_L^d \coloneqq L^{-1}\Z^d$ denotes the Fourier dual of $\T_L^d$, and $\wh{\cdot}$ denotes the Fourier transform. Note that $\|u_0\|_{L^2} \sim 1$ uniformly in $L$ due to our convention for the Fourier transform. Rewriting the NLS \eqref{eq:nls} in terms of a system of ODEs for the Fourier modes $\wh{u}_k$ of $u$ and removing the linear dynamics by considering the profile $a_k(t) \coloneqq u_k(t)e^{2\pi it|k|^2}$, we obtain the integral equation
\begin{equation}
\label{eq:ak}
a_k(t) = a_k^0 - \frac{i\lambda^2}{L^{2d}}\int_0^t \sum_{{k_1,k_2,k_3\in \Z_L^d}\atop {k+k_1-k_2-k_3=0}} \ol{a_{k_1}}a_{k_2}a_{k_3}e^{2\pi i s(|k|^2+|k_1|^2-|k_2|^2-|k_3|^2)}ds. 
\end{equation}
Wave turbulence theory asserts that in the large box limit $L\rightarrow\infty$ and weakly nonlinear limit $\al\coloneqq\lambda^2 L^{-d}\rightarrow 0$, the quantity
\begin{equation}
f_k^L(t) \coloneqq \E |a_k(t)|^2 ,
\end{equation}
where the expectation $\E$ is taken with respect to the ensemble of initial data, should obey the approximation
\begin{equation}
\label{eq:approx}
f_k^L(t) \approx f(\frac{t}{T_{kin}}, k), \qquad \forall k\in\Z_L^d,
\end{equation}
where $T_{kin} \sim \alpha^{-2}$ is the \emph{kinetic timescale} at which one expects the approximation \eqref{eq:approx} to be valid and $f: \R_+\times\R^d \rightarrow\R_+$ solves the wave kinetic equation
\begin{equation}
\label{eq:WKE}
\begin{split}
\p_t f(\tau,k) &= \Cc(f)(\tau,k),\\
\Cc(f)(\tau,k) &= \int_{(\R^d)^3}dk_1dk_2dk_3\d(k+k_1-k_2-k_3)\d(|k|^2+|k_1|^2-|k_2|^2-|k_3|^2) \\
&\ph \left[f(\tau,k_2)f(\tau,k_3)(f(\tau,k_1)+f(\tau,k)) - f(\tau,k)f(\tau,k_1)(f(\tau,k_2)+f(\tau,k_3))\right],
\end{split}
\end{equation}
where $\Cc$ is called the \emph{collision operator}, which is evidently nonlinear in $f$. The Dirac expression in the collision integral of \eqref{eq:WKE} should be understood in the sense of measure. $\d(k+k_1-k_2-k_3)$ corresponds to the convolution integration over $k_2+k_3-k_1=k$, while $\d(|k|^2+|k_1|^2-|k_2|^2-|k_3|^2)$ is absolutely continuous with respect to the surface measure on the zero set
\begin{equation}
|k|^2+|k_1|^2-|k_2|^2-|k_3|^2=0.
\end{equation}

To our knowledge, the mathematical literature on the WKE derivation begins with the work of Lukkarinen and Spohn \cite{LS2011} (see also \cite{LS2009}), who showed that the evolution of the correlations for the discrete NLS at equilibrium are governed by a linearized wave kinetic equation. The first work to tackle the derivation away from equilibrium is by Buckmaster, Germain, Hani, and Shatah \cite{BGHS2021} (see also \cite{BGHS2020}). More recently, Collot and Germain \cite{CG2019, CG2020} and Deng and Hani \cite{DH2019} have independently improved the timescale over which the approximation \eqref{eq:approx} holds. In caricature, these results all show an approximation of the form
\begin{equation}
\label{eq:approxBGHS}
f_k^L(t) = \phi(k) + \frac{t}{T_{kin}}\Cc(\phi)(k) + o_{\ell_k^\infty}(\frac{t}{T_{kin}})_{L\rightarrow\infty}, \qquad \forall k\in\Z_L^d, \enspace L^{\delta} \leq t\leq T,
\end{equation}
where $\delta>0$, $T\ll T_{kin}$, and $\lambda$  scales with $L$ so that $\alpha\rightarrow 0$ as $L\rightarrow\infty$. Note that because the time $T$ is much small than $T_{kin}$, in particular $T/T_{kin}\rightarrow 0$ as $L\rightarrow\infty$, the WKE \eqref{eq:WKE} effects only a negligible change to the initial data. Thus, the right-hand side of \eqref{eq:approxBGHS} is equivalent to replacing $\phi$ with $f$. The above approximation \eqref{eq:approxBGHS} fails to reach the conjectured optimal timescale $T_{kin}$, and to the best of our knowledge, this remains a challenging open problem.

We also mention that work has been done for 3-wave and 6-wave interactions. Faou \cite{Faou2020} showed the derivation of the linearized 3-wave kinetic equation, analogous to the aforementioned work of Lukkarinen and Spohn \cite{LS2011}. Very recently, the second co-author together with Tran \cite{ST2021} have, for the first time, shown the validity of the wave kinetic equation for a system with 3-wave interactions on the conjectured optimal time scales. Lastly, de Suzzoni \cite{deS2020} has studied the correlations of the Fourier coefficients for the quintic NLS in the large box and weakly nonlinear limits.

Much more is known, mathematically speaking, about solutions to the WKE \eqref{eq:WKE}. Escobedo and Vel\'{a}zquez \cite{EV2015} first proved the local existence and uniqueness of classical solutions when the solution is radial, in which case equation \eqref{eq:WKE} reduces to a one-dimensional Boltzmann equation. They also prove the existence of global measured-valued weak solutions, a framework which allows for a ``condensation'' phenomenon where a point mass develops at the origin, and study the transfer of energy to high frequencies for large times (see also \cite{KV2015, KV2016}). More recently, Germain, Ionescu, and Tran \cite{GIT2020} have shown local well-posedness of classical solutions to \eqref{eq:WKE} without the radial assumption and in a class of function spaces which is essentially optimal with respect to the scaling of the equation \eqref{eq:WKE}. The result of Germain et al. in fact holds for a large class of dispersion relations $\omega(k)$, including the Schr\"odinger relation $\om(k)=|k|^2$, in contrast to that of \cite{EV2015}.

\subsection{The spectral hierarchy}
Several years ago, Chibbaro, Dematteis, and Rondoni \cite{CDR2018, CDJR2017} proposed an alternative, but still formal, approach to the derivation of the WKE from 4-wave Hamiltonian systems, following earlier work by Eyink and Shi \cite{ES2012} for systems with 3-wave interactions, which draws a close parallel to Boltzmann's kinetic theory. In particular, their work demonstrates the analogue of Boltzmann's \emph{propagation of molecular chaos} in the derivation of the wave kinetic equation. It also suggests that an approach to rigorously deriving the WKE based on hierarchies of equations for statistical observables, widely used in kinetic theory \cite{Lanford1975, Lanford1976, King1975, Spohn1981, IP1986, GSrT2013, PSS2014, AP2019} and quantum many-body systems \cite{ABGT2004, AGT2007, ESY2006, ESY2007, ESY2009, ESY2010, KSS2011, CP2011, CP2014, CT2014, GSS2014, Sohinger2015, CH16, CH2016-2, CH2017, CH2019, MNPRS2020}, might also be used for wave turbulence.

Specializing to our setting of \eqref{eq:nls}, Chibbaro et al. consider the \emph{empirical spectrum} defined by
\begin{equation}
\label{eq:esdef}
f_{L,\lambda}(t,\xi) \coloneqq L^{-d}\sum_{k\in \Z_L^d} |a_k(t)|^2\d_{k}(\xi), \qquad (t,\xi)\in\R_+\times\R^d,
\end{equation}
where $\d_k$ denotes the Dirac mass centered at $k\in\Z_L^d$ and the $a_k(t)$ are as in \eqref{eq:ak}. Note that $f_{L,\lambda}$ is a random Borel measure on $\R^d$, and that by conservation of mass for the NLS,
\begin{equation}
\int_{\R^d}f_{L,\lambda}(t,\xi)d\xi= L^{-d}\sum_{k\in\Z_L^d} \phi(k).
\end{equation}
The empirical spectrum is analogous to Klimontovich's empirical measure in kinetic theory \cite{Klimontovich2013}. Equipped with the empirical spectrum, one can define \emph{spectral correlation functions}
\begin{equation}
\label{eq:cordef}
f_{L,\lambda}^{(m)}(t,\xi_1,\ldots,\xi_m) \coloneqq \E\left[f_{L,\lambda}(t)^{\otimes m}(\xi_1,\ldots,\xi_m)\right], \qquad m\in\N,
\end{equation}
which satisfy a complicated coupled infinite system of equations. Note that $f_{L,\lambda}^{(m)}$ is symmetric under permutation of mode labels. After a difficult formal calculation involving a perturbative expansion of the Fourier coefficients $a_k$ and phase averaging using Feynman-Wyld diagrams, Chibbaro et al. find that in the large box limit $L\rightarrow \infty$ and weak nonlinearity limit $\alpha=\lambda^2 L^{-d}\rightarrow 0$, the (appropriately scaled) functions $f_{L}^{(m)}$ converge to a solution $f^{(m)}$ of the \emph{spectral hierarchy}: 
\begin{equation}
\label{eq:WKE_hier}
\begin{split}
\p_t f^{(m)}(t,\xi_1,\ldots,\xi_m) &= \sum_{j=1}^m \int_{(\R^3)^3}d\xi_2'd\xi_3'd\xi_4'\d(\xi_1+\xi_2'-\xi_3'-\xi_4')\d(|\xi_1|^2+|\xi_2'|^2-|\xi_3'|^2-|\xi_4'|^2) \\
&\ph\Big(f^{(m+2)}(t,\xi_1,\ldots,\xi_{j-1},\xi_{j+1},\ldots,\xi_m, \xi_2',\xi_3',\xi_4') +f^{(m+2)}(t,\xi_1,\ldots,\xi_m, \xi_3',\xi_4') \\
&\ph\ph - f^{(m+2)}(t,\xi_1,\ldots,\xi_m, \xi_2',\xi_4') - f^{(m+2)}(t,\xi_1,\ldots,\xi_m, \xi_2',\xi_3')\Big).
\end{split}
\end{equation}
That the evolution of $f^{(m)}$ is coupled to that of $f^{(m+2)}$ reflects the fact that we are dealing with 4-wave interactions.

One should note the strong similarity between the spectral hierarchy and the Boltzmann hierarchy derived by Lanford \cite{Lanford1975, Lanford1976} from the BBGKY hierarchy in the derivation of the Boltzmann equation for hard spheres in the Boltzmann-grad limit. Just as solutions to the Boltzmann equation yield a special class of \emph{factorized} solutions to the Boltzmann hierarchy, solutions $f$ to the WKE \eqref{eq:WKE} yield a special class of factorized solutions to the spectral hierarchy by taking
\begin{equation}
\label{eq:fac_sol}
f^{(m)} \coloneqq f^{\otimes m}, \qquad \forall m\in\N.
\end{equation}
Of course, it is not at all evident that starting from factorized initial data $f_0^{\otimes m}$, the solution \eqref{eq:fac_sol} is the only possible one.

\subsection{Informal account of main results}
\label{ssec:intromr}
To the best of our knowledge, there are no results on the rigorous analysis of solutions to the spectral hierarchy \eqref{eq:WKE_hier}. We saw in the last subsection that a solution trivially exists for factorized initial data, just by taking tensor products of solutions to the WKE \eqref{eq:WKE}. However, it is an open mathematical problem, first recognized by Eyink and Shi \cite[Section 3.1]{ES2012} in the context of 3-wave interactions, whether this ``trivial'' solution is the only one. In other words, it is unknown whether solutions to the spectral hierarchy are unique. 

We emphasize that this problem of uniqueness for infinite BBGKY-type hierarchies is a highly nontrivial one, and much research has been done on this subject both for classical \cite{Lanford1975, Lanford1976, King1975, Spohn1981, IP1986, GSrT2013, PSS2014, AP2019} and quantum particle systems \cite{ESY2007, KM2008, CP2010, GSS2014, CHPS-sc, HTX2015, CHPS2015, SS2015, HTX2016, HS2016, CH2020, ALR2020}. In an interacting system, the evolution of the $m$-th component of the hierarchy is coupled to the evolution of some higher components of the hierarchy--in our case, the $m$-th is coupled to the $(m+2)$-th. Consequently, it is not enough to consider just one component in the hierarchy; one needs to consider the entire infinite vector $F=(f^{(m)})_{m=1}^\infty$ in an appropriate function space. Additionally, the reader will note from the expression for the collision integral in \eqref{eq:WKE_hier} that the number of terms in the right-hand for the equation of $f^{(m)}$ grows like $m$. Since a direct estimation of the right-hand side in terms of a bound on $f^{(m+2)}$ will pick up this combinatorial factor, one needs to find a mechanism to compensate for this growth. 

This article resolves these difficulties by showing, for the first time, the local well-posedness of the spectral hierarchy \eqref{eq:WKE_hier} in the sense of Hadamard. Below we give an informal statement of our main results. We defer the precise statement \cref{thm:main} until \cref{sec:MR}, so as not to burden the reader with notation.
\begin{thm}[Informal main theorem]
\label{thm:inf}
There exists a double-indexed nested scale of function spaces $\Lr_{s_1,\ep_1}^\infty \subset \Lr_{s_2,\ep_2}^\infty$, for  $s_1\geq s_2$ and $\ep_1\geq \ep_2$, such that if the initial datum $F_0=(f_0^{(m)})_{m=1}^\infty \in \Lr_{s,\ep_1}^\infty$, then there exists a time $T$ and index $\ep_2\leq \ep_1$, both depending on the data $(s,\ep_1)$ for which there exists a unique solution to the spectral hierarchy \eqref{eq:WKE_hier} in the class $C([0,T];\Lr_{s,\ep_2}^\infty)$. Moreover, if the initial data $F_0$ and $G_0$ are close in $\Lr_{s,\ep_1}^\infty$, then their respective solutions remain close in $\Lr_{s,\ep_2}^\infty$ for short times.
\end{thm}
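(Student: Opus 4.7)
The plan is to recast the spectral hierarchy \eqref{eq:WKE_hier} as a mild Duhamel equation $F(t) = F_0 + \int_0^t \mathcal{B}(F(s))\,ds$ on the infinite vector $F = (f^{(m)})_{m=1}^\infty$, where $(\mathcal{B}F)^{(m)}$ denotes the right-hand side of \eqref{eq:WKE_hier} viewed as a linear map $f^{(m+2)} \mapsto (\mathcal{B}F)^{(m)}$, and then to solve it by a Picard fixed-point argument in the scale $\Lr_{s,\ep}^\infty$. Concretely, for each level $m$ I would equip $(\R^d)^m$ with a weighted norm $X_s^{(m)}$ modeled on the Germain-Ionescu-Tran function space for the WKE --- the morally simplest choice being $\|g\|_{X_s^{(m)}} = \sup_{\xi_1,\ldots,\xi_m} \prod_{i=1}^m \langle \xi_i\rangle^s\, |g(\xi_1,\ldots,\xi_m)|$, or the closest tensor-product analogue compatible with GIT's estimates --- and then set
\[
\|F\|_{\Lr_{s,\ep}^\infty} \coloneqq \sup_{m\geq 1} \ep^m \|f^{(m)}\|_{X_s^{(m)}},
\]
so that larger $\ep$ yields a stronger norm and $\Lr_{s,\ep_1}^\infty \subset \Lr_{s,\ep_2}^\infty$ whenever $\ep_1 \geq \ep_2$, as asserted. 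Factorized data $F_0 = (\phi^{\otimes m})_{m \geq 1}$ coming from a Schwartz spectrum $\phi$ sit in $\Lr_{s,\ep}^\infty$ for every finite $\ep$, so the framework contains all examples of interest.

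The technical core is a per-level collision estimate of the form
\[
\|(\mathcal{B}F)^{(m)}\|_{X_s^{(m)}} \leq C\, m\, \|f^{(m+2)}\|_{X_s^{(m+2)}}
\]
with $C$ independent of $m$. This is where the WKE analysis of Germain-Ionescu-Tran enters: freezing all spectator coordinates, each of the $O(m)$ terms in the right-hand side of \eqref{eq:WKE_hier} reduces to a one-variable collision operator against the singular measure $\d(\xi+\xi_2'-\xi_3'-\xi_4')\d(|\xi|^2+|\xi_2'|^2-|\xi_3'|^2-|\xi_4'|^2)$, which GIT bound in the one-variable analogue of $X_s$; summing over the four terms per summand and over the $m$ summands yields the declared $O(m)$ loss. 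Converting to the scale norms and using $\|f^{(m+2)}\|_{X_s^{(m+2)}} \leq \ep_1^{-(m+2)}\|F\|_{\Lr_{s,\ep_1}^\infty}$ gives
\[
\ep_2^m \|(\mathcal{B}F)^{(m)}\|_{X_s^{(m)}} \leq C\, m\, \ep_1^{-2}\, (\ep_2/\ep_1)^{m}\, \|F\|_{\Lr_{s,\ep_1}^\infty},
\]
and since $\sup_m m(\ep_2/\ep_1)^m < \infty$ whenever $\ep_2 < \ep_1$, this produces a bounded operator $\mathcal{B}\colon \Lr_{s,\ep_1}^\infty \to \Lr_{s,\ep_2}^\infty$ of norm $C(\ep_1,\ep_2)$ --- the direct analogue of the Chen-Pavlovi\'c loss-of-regularity estimate used for the Gross-Pitaevskii hierarchy.

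With this bound in hand, existence, uniqueness, and continuous dependence in $C([0,T]; \Lr_{s,\ep_2}^\infty)$ follow from a standard Picard iteration on the Duhamel map $\Phi(F)(t) = F_0 + \int_0^t \mathcal{B}(F)(s)\,ds$, provided $T$ is chosen so that $T \cdot C(\ep_1,\ep_2) < 1$: existence by iteration, uniqueness from the contraction property applied to two candidate solutions, and continuous dependence by the same contraction applied to the difference equation $(F-G)(t) = (F_0 - G_0) + \int_0^t \mathcal{B}(F-G)(s)\,ds$, absorbing the time-integral term into the left-hand side to conclude $\|F-G\|_{C([0,T];\Lr_{s,\ep_2}^\infty)} \lesssim \|F_0 - G_0\|_{\Lr_{s,\ep_1}^\infty}$.

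The principal obstacle --- and the step requiring genuine technical work as opposed to abstract machinery --- is establishing the per-level collision bound \emph{uniformly in the spectator variables}. One needs to design the one-variable space $X_s$ so that (i) GIT's estimates for the singular integral against the quadric measure $\xi_j+\xi_2'=\xi_3'+\xi_4'$ and $|\xi_j|^2+|\xi_2'|^2=|\xi_3'|^2+|\xi_4'|^2$ hold in it, (ii) the tensor-product weights $\prod_{i\neq j}\langle\xi_i\rangle^s$ on the frozen coordinates factor cleanly outside the integral, and (iii) the resulting bound is stable under a final supremum over the spectators. This multi-variable/tensor-product lifting of the GIT one-body estimate is where the proof has real content; the surrounding Cauchy-Kowalevskaya-type scale construction and Picard wrapper are then essentially routine.
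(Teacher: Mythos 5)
Your per-level collision estimate, the tensor-product reduction to the Germain--Ionescu--Tran resonant-manifold bounds, and the choice of a nested scale of weighted $L^\infty$ spaces indexed by an exponential weight $\ep^m$ all coincide with the paper's strategy (the paper uses $\sum_m \ep^m$ in place of your $\sup_m \ep^m$, a cosmetic difference). The genuine gap is the iteration scheme. Your collision bound is a \emph{loss-of-index} estimate: it shows $\Cr\colon \Lr_{s,\ep_1}^\infty \to \Lr_{s,\ep_2}^\infty$ is bounded only when $\ep_2<\ep_1$ strictly, and the operator is \emph{not} bounded on $\Lr_{s,\ep_2}^\infty$ itself, since $\sup_m m(\ep_2/\ep_1)^m$ diverges as $\ep_2\uparrow\ep_1$. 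Consequently, if $F\in C([0,T];\Lr_{s,\ep_2}^\infty)$, then $\Cr F(s)$ lands only in some strictly weaker $\Lr_{s,\ep_3}^\infty$ with $\ep_3<\ep_2$, so the Duhamel map $\Phi(F)(t)=F_0+\int_0^t \Cr F(s)\,ds$ does \emph{not} send $C([0,T];\Lr_{s,\ep_2}^\infty)$ into itself. The ``standard Picard iteration with $T\,C(\ep_1,\ep_2)<1$'' that you invoke therefore has no fixed complete metric space to contract on, and the same defect invalidates your uniqueness and continuous-dependence arguments as stated, since both presuppose that the difference equation can be run through the same contraction.

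The paper avoids this entirely by exploiting linearity: it writes the candidate solution explicitly as the Duhamel series $\sum_{k\geq 0}\frac{t^k}{k!}\Cr^k[F_0]$ and estimates it directly. Iterating the per-level bound $j$ times gives $\|\Cr^j[F_0]^{(m)}\|_{\L_{s,m}^\infty}\leq C_s^j\big(\prod_{r=1}^j(m+2r-2)\big)\|f_0^{(m+2j)}\|_{\L_{s,m+2j}^\infty}$, and the combinatorial product grows \emph{factorially} in $j$ for fixed $m$, not just like $(Cm)^j$, because each application of $\Cr$ raises the number of modes by two. This factorial blowup is precisely what your single-constant estimate $C(\ep_1,\ep_1)^k$-style bound does not see. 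The paper absorbs it by splitting the double sum over $(m,k)$ into $k\leq m$ (where the product is at most $(3m)^k$, killed by taking $\ep_2 e^{C'T/\ep_1^2}/\ep_1<1$) and $k>m$ (where the product is at most $(3k)^k$, killed by Robbins' factorial bounds together with $C''T/\ep_1^2<1$). If you insist on a fixed-point framework you would need a genuine Cauchy--Kovalevskaya/Ovcyannikov--Nirenberg--Nishida setup with a time-shrinking nest of indices $\ep(t)$, which you have not constructed and which is substantially more work than the ``routine Picard wrapper'' you describe; for this linear hierarchy the direct series estimate is both simpler and gives the explicit quantitative bound \eqref{eq:solbnd} that the continuous-dependence statement requires.
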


\begin{remark}
We have presented our results only for dimension $d=3$; however, they are valid in all dimensions $d\geq 2$ \emph{mutatis mutandis}.
\end{remark}

\begin{remark}
Implicit in the statement of \cref{thm:inf} is that the collision integral in the right-hand side of \eqref{eq:WKE_hier} is well-defined for data in the space $\Lr_{s,\ep}^\infty$. See \cref{ssec:colhier} for details.
\end{remark}

Since we show uniqueness, we also show that if a solution starts factorized, then it remains factorized for positive time. Our result implies that if we have the initial convergence in the sense of measures,
\begin{equation}
\forall m\in\N, \qquad f_{L,\lambda}^{(m)}(0) \xrightharpoonup[L\rightarrow\infty]{*} f_0^{\otimes m}
\end{equation}
and we also know that $(f_{L,\lambda}^{(m)})_{m=1}^\infty$ converges in the large box and weakly nonlinear limits for positive times to a solution $(f^{(m)})_{m=1}^\infty$ of the spectral hierarchy \eqref{eq:WKE_hier} in the function space we consider, then
\begin{equation}
\forall m\in\N, \qquad f_{L,\lambda}^{(m)}(T_{kin}t) \xrightharpoonup[L\rightarrow\infty, \alpha\rightarrow 0]{*} f(t)^{\otimes m} \qquad \forall t>0,
\end{equation}
where $f$ is the unique solution to the WKE \eqref{eq:WKE} with initial datum $f_0$. Of course, it is not obvious and, in fact, currently unknown whether the spectral correlation functions converge in the kinetic limit to solutions in the class $C([0,T]; \Lr_{s,\ep}^\infty)$. Moreover, the preceding reasoning takes for granted that solutions to the NLS \eqref{eq:nls} exist on the relevant kinetic time scale for the above statements to make sense; see \cite{BGHS2021, DH2019, CG2019, CG2020} for more discussion of this difficulty. Resolving these questions, though, is not the subject of this article.

It is an interesting question whether there is ever a need to consider solutions to the spectral hierarchy itself, as opposed to the wave kinetic equation, outside of this derivation problem. Spohn \cite[Section 5]{Spohn1981} previously raised a similar question for the relationship between the Boltzmann equation and Boltzmann hierarchy, which to the best of our knowledge remains unanswered. In the context of wave turbulence--analogous to the situation for the kinetic theory of dilute gasses--Eyink and Shi \cite[Sections 3.1.2, 4.2]{ES2012} argue that the most general ``statistically realizable'' solutions to the spectral hierarchy \eqref{eq:WKE_hier} take the form
\begin{equation}
\label{eq:SHss}
f^{(m)} = \int_{\M_+(\R^3)}d\rho(f_0) f^{\otimes m},
\end{equation}
where $\rho$ is some probability measure on the space $\M_+(\R^3)$ of positive Borel measures on $\R^3$ and $f$ denotes the solution to the WKE \eqref{eq:WKE} with initial datum $f_0$. Solutions of the form \eqref{eq:SHss} are statistical superpositions of factorized solutions to the spectral hierarchy, which Eyink and Shi call \emph{super-statistical solutions}. An equivalent interpretation is that solutions of the form \eqref{eq:SHss} correspond to solutions of the WKE with \emph{random} initial data, where the randomness is expressed through the measure $\rho$. Eyink and Shi further argue that super-statistical solutions offer a possible explanation for the phenomena of intermittency and non-Gaussian statistics in wave turbulence. The physical implications of super-statistical solutions are beyond the scope of this article. But in \cref{sec:SS}, we conclude with some comments on the implications of our work for Eyink and Shi's assertions. 

\subsection{Comments on the proof}
\label{ssec:intropf}
Let us now make a few remarks about the proof of \cref{thm:inf}. At a high level, our proof has four basic ingredients, itemized below:
\begin{enumerate}[(i)]
\item\label{item:Duh}
Iterated Duhamel expansion;
\item\label{item:fs}
A good choice of function spaces;
\item\label{item:res}
Estimates for resonant manifold integrals;
\item\label{item:com}
Combinatorial analysis.
\end{enumerate}
Starting with \ref{item:Duh}, let us consider the entire infinite vector $F=(f^{(m)})_{m=1}^\infty$, and not just its component $f^{(m)}$. Introducing the notation $\uxi_{i;j} \coloneqq (\xi_i,\ldots,\xi_j)$, for $j\geq i$, we can define the \emph{collision operator} $\Cr$ through
\begin{equation}
\label{eq:Chidef}
\begin{split}
(\Cr F)^{(m)}(\uxi_{1;m}) &\coloneqq \sum_{j=1}^m\int_{(\R^3)^3}d\uxi_{2;4}'\d(\xi_1+\xi_2'-\xi_3'-\xi_4')\d(|\xi_1|^2+|\xi_2'|^2-|\xi_3'|^2-|\xi_4'|^2) \\
&\ph\Bigg(f^{(m+2)}(\uxi_{1;j-1},\uxi_{j+1;m},\xi_2',\xi_3',\xi_4') + f^{(m+2)}(\uxi_{1;m},\xi_3',\xi_4') \\
&\ph\ph - f^{(m+2)}(\uxi_{1;m},\xi_2',\xi_4') - f^{(m+2)}(\uxi_{1;m},\xi_2',\xi_3') \Bigg).
\end{split}
\end{equation}
The idea now is to rewrite the WKE in integral, alternatively mild, form
\begin{equation}
\label{eq:introDuh}
F(t) = F_0 + \int_0^t \Cr[F(\tau)]d\tau.
\end{equation}
The integration with respect to time commutes with the collision operator $\Cr$ by Fubini-Tonelli. Thus, we can iterate the equation \eqref{eq:introDuh} $j$ times, for arbitrary $j\in\N_0$, to obtain that if $F$ satisfies \eqref{eq:introDuh}, then it must also satisfy the equation
\begin{equation}
\label{eq:introDuhj}
F(t)=\sum_{k=0}^{j} \frac{t^k}{k!}\Cr^k[F_0] + \int_0^t\int_0^{t_1}\cdots\int_0^{t_j}\Cr^{j+1}[F(t_{j+1})]dt_{j+1}\cdots dt_2dt_1.
\end{equation}
Proceeding formally, we can iterate the Duhamel expansion infinitely many times to obtain that the unique solution $F$ to equation \eqref{eq:introDuh} \emph{should} be given by the \emph{Duhamel series}
\begin{equation}
\label{eq:Duhs}
\sum_{j=0}^\infty \frac{t^j}{j!}\Cr^j[F_0].
\end{equation}
Of course, it is not at all obvious that there is a suitable function space in which the series \eqref{eq:Duhs} converges, and our passage from \eqref{eq:introDuhj} to \eqref{eq:Duhs} assumed that the error term in \eqref{eq:introDuhj}
$$ \int_0^t\int_0^{t_1}\cdots\int_0^{t_j}\Cr^{j+1}[F(t_{j+1})]dt_{j+1}\cdots dt_2dt_1$$
vanishes as $j\rightarrow\infty$.

This leads us to ingredient \ref{item:fs}. Inspired by the choice of norms introduced by Chen and Pavlovi\'{c} \cite[Equation (10)]{CP2010} to study the Cauchy problem for the Gross-Pitaevskii hierarchy, the infinite-particle limit of the quantum BBGKY hierarchy widely used in the derivation of NLS-type equations as effective descriptions of interacting Bose gases \cite{ABGT2004, AGT2007, ESY2006, ESY2007, ESY2009, ESY2010, KSS2011, CP2011, CP2014, CT2014, GSS2014, Sohinger2015, CH16, CH2016-2, CH2017, CH2019, MNPRS2020}, we use a norm of the form (see \eqref{eq:hnorm} for the precise definition)
\begin{equation}
\label{eq:intronorm}
\|F\|_{\Lr_{s,\ep}^\infty} \coloneqq \sum_{m=1}^\infty \ep^m \|f^{(m)}\|_{\L_{s,m}^\infty}, \qquad  F=(f^{(m)})_{m=1}^\infty.
\end{equation}
Here, $\L_{s,m}^\infty$ is a weighted $L^\infty$ space, defined in \eqref{eq:knorm}, where the parameter $s$ dictates how fast the function $f^{(m)}$ decays at infinity. The parameter $\ep$ is introduced so as to make the infinite sum in \eqref{eq:intronorm} finite. A good example to keep in mind is that if $f^{(m)}=f^{\otimes m}$ for all $m$, then the above norm is finite for any choice $\ep<\|(1+|\xi|^2)^{s/2}f\|_{L^\infty}^{-1}$. Lastly, the reader should note that the definition of $\L_{s,m}^\infty$ is quite different from that of Chen and Pavlovi\'{c}, who use Sobolev norms. Rather, the inspiration we take comes from the role of the small parameter $\ep$.

With our scale of function spaces $\Lr_{s,\ep}^\infty$, we first want to prove that the series \eqref{eq:Duhs} converges to a solution of the spectral hierarchy whose norm quantitatively depends on the initial datum. Then we want to show that this convergent series is the only solution, i.e. it is unique. This leads us to ingredients \ref{item:res} and \ref{item:com}. To show the convergence (see \cref{ssec:WPcon}), we seek to estimate each of the summands in \eqref{eq:Duhs} in the space $\Lr_{s,\ep}^\infty$ introduced above. Each term in the collision integral in \eqref{eq:WKE_hier} is an integration of $f^{(m+2)}$ over a manifold, the so-called resonant manifold, which produces a function of $m$ variables that we need to estimate in $\L_{s,m}^\infty$. Such integrals were previously analyzed by Germain et al. \cite[Section 3]{GIT2020} for the nonlinear collision integral \eqref{eq:WKE} to prove well-posedness of the WKE.\footnote{Related results were also shown by Lukkarinen \cite{Lukkarinen2007} for the asymptotics of resolvent integrals for lattice dispersion relations.} By revisiting the work of Germain et al. and exploiting the favorable structure of the norm $\|\cdot\|_{\Lr_{s,\ep}^\infty}$ (see \cref{sec:col} for details), we are able to estimate such expressions at the level of the spectral hierarchy collision integral. Summing up all of our collision integral estimates entails a delicate balance (see \eqref{eq:bal}) between the parameters. It is essential that we have the freedom to choose $\ep_2, T$ sufficiently small and that we use the $1/k!$ decay in \eqref{eq:Duhs} to obtain a convergent expression. The uniqueness step (see \cref{ssec:WPu}) proceeds similarly, using our already established collision estimates and combinatorial analysis. The continuous dependence on the initial data (see \cref{ssec:WPdep}) follows from the bound established for the solution in the convergence step and from the linearity of the spectral hierarchy.

\begin{remark}
Germain et al. \cite{GIT2020} also establish local well-posedness of the WKE \eqref{eq:WKE} in weighted $L^2$ spaces. The arguments of our article do not suffice to establish an analogous result at the level of the spectral hierarchy, due to the delicate cancellation in the collision integral that one would need to exploit. It is an interesting mathematical question to address this difficulty, which we hope to do in future work.
\end{remark}

\subsection{Organization of article}
We briefly comment on the organization of the remainder of the article. In \cref{sec:MR}, we introduce notation specific to our paper (e.g. for function spaces) and give the precise statement of our main theorem. In \cref{sec:col}, we study the collision operators appearing in the right-hand sides of equations \eqref{eq:WKE} and \eqref{eq:WKE_hier}. We begin in \cref{ssec:colWKE} with a review of the boundedness of the nonlinear collision operator in weighted $L^\infty$ spaces as discussed in \cite{GIT2020}, focusing on the role of estimates for integrals over the resonant manifold. We then use this analysis in \cref{ssec:colhier} together with tensorization arguments to prove analogous bounds in weighted $L^\infty$ spaces suitable for hierarchies. In \cref{sec:WP}, we give the proof of our main result, \cref{thm:main}. This section is divided into three subsections corresponding to existence, uniqueness, and continuous dependence on the initial data. Lastly, \cref{sec:SS} contains some remarks on the implications of our \cref{thm:main} for super-statistical solutions.

\subsection{Notation}
In this last subsection of the introduction, we introduce the notation used in the body of the article without further comment.

Given two quantities $A,B\geq 0$, we write $A\lesssim B$ if there exists a constant $C>0$ such that $A\leq CB$. If $A \lesssim B$ and $B\lesssim A$, we write $A\sim B$. To emphasize the dependence of the constant $C$ on some parameter $p$, we sometimes write $A\lesssim_p B$ or $A\sim_p B$.

We denote the natural numbers excluding zero by $\N$ and including zero by $\N_0$. Similarly, we denote the nonnegative real numbers by $\R_{\geq 0}$ and the positive real numbers by $\R_+$. Given $N\in\N$ and points $x_{1},\ldots,x_{N}$ in some set $X$, we will write $\ux_N$ to denote the $N$-tuple $(x_{1},\ldots,x_{N})$. We use the notation $\jp{x}\coloneqq (1+|x|^2)^{1/2}$ to denote the Japanese bracket. Given a function $f$ and positive integer $m$, we let $f^{\otimes m}$ denote the $m$-fold tensor product of $f$.

We denote the space of nonnegative Borel measures on $\R^n$ by $\M_+(\R^n)$. We denote the subspace of probability measures (i.e. elements $\mu\in\M_+(\R^n)$ with $\mu(\R^n)=1$) by $\P(\R^n)$. When $\mu$ is in fact absolutely continuous with respect to Lebesgue measure on $\R^n$, we shall abuse notation by writing $\mu$ for both the measure and its density function. We denote the Banach space of complex-valued continuous, bounded functions on $\R^n$ by $C(\R^n)$ equipped with the uniform norm $\|\cdot\|_{\infty}$. More generally, we denote the Banach space of $k$-times continuously differentiable functions with bounded derivatives up to order $k$ by $C^k(\R^n)$ equipped with the natural norm, and we define $C^\infty \coloneqq \bigcap_{k=1}^\infty C^k$. We denote the Schwartz space of functions by $\Sc(\R^n)$ and the space of tempered distributions by $\Sc'(\R^n)$. For $p\in [1,\infty]$ and $D\subset\R^n$, we define $L^p(D)$ to be the usual Banach space equipped with the norm
\begin{equation}
\|f\|_{L^p(D)} \coloneqq \paren*{\int_D |f(x)|^p dx}^{1/p}
\end{equation}
with the obvious modification if $p=\infty$. When the underlying domain is clear from context, we often will just write $\|f\|_{L^p}$.

\section{Main results}
\label{sec:MR}
In the introduction, we have only considered waves with the classic Schr\"odinger dispersion relation $\om(k)=|k|^2$; however, our \cref{thm:main} in fact holds for the larger class of spherically symmetric relations considered in \cite{GIT2020}. The precise assumptions on $\om(k)=\Omega(|k|)$ are stated below.
\begin{enumerate}[(i)]
\item
\label{item:asmpreg}
$\Omega \in C_{loc}^1(\R_+), \Omega \geq 0$;
\item
\label{item:asmplb}
$\Omega'(x)\geq c_1 x$ for all $x\in\R_+$, for some $c_1>0$;
\item
\label{item:asmpdub}
$\Omega(x)\leq \Omega(c_2 x)/2$ for all $x\in \R_+$, for some $c_2>0$.
\end{enumerate}
Besides the Schr\"odinger relation, examples in this class include the Bogoliubov dispersion law \cite{Eckern1984}
\begin{equation}
\om(k) = \sqrt{\theta_1|k|^2+\theta_2|k|^4}, \qquad \theta_1,\theta_2>0,
\end{equation}
the modified Bogoliubov/Bohm-Pines dispersion law \cite{BP1951}
\begin{equation}
\om(k) = \sqrt{\theta_0 + \theta_1|k|^2+\theta_2|k|^4}, \qquad \theta_0,\theta_1,\theta_2>0,
\end{equation}
and their low-temperature approximations \cite{EPV2011, ItG2001}
\begin{equation}
\om(k) = \lambda_0 + \lambda_1|k|^2 + \lambda_2|k|^4, \qquad \lambda_0(\theta_0), \lambda_1(\theta_1), \lambda_2(\theta_2).
\end{equation}

Next, in order to give the rigorous version of \cref{thm:inf}, we need to introduce some notation for the scale of function spaces in which our well-posedness result holds. For $m\in\N$ and $s\geq 0$, we define
\begin{equation}
\label{eq:knorm}
\|f^{(m)}\|_{\L_{s,m}^\infty} \coloneqq \|\jp{\xi_1}^s\cdots \jp{\xi_m}^{s} f^{(m)}\|_{L^\infty(\R^{3m})},
\end{equation}
and we define the space
\begin{equation}
\L_{s,m}^\infty \coloneqq \{f^{(m)} \in C^0(\R^{3m}) : \|f^{(m)}\|_{\L_{s,m}^\infty} < \infty\},
\end{equation}
which the reader may check is Banach. Using the above $m$-mode norms, we build a norm for hierarchies $F = (f^{(m)})_{m=1}^\infty$ as follows. For $\ep\in (0,1)$, we define
\begin{equation}
\label{eq:hnorm}
\|F\|_{\Lr_{s,\ep}^\infty} \coloneqq \sum_{m=1}^\infty \ep^m \|f^{(m)}\|_{\L_{s,m}^\infty}
\end{equation}
and $\Lr_{s,\ep}^\infty$ as the subset of $\prod_{m=1}^\infty \L_{s,m}^\infty$ for which the right-hand side is finite. It is straightforward to check that $(\Lr_{s,\ep}^\infty, \|\cdot\|_{\Lr_{s,\ep}^\infty})$ is a Banach space.

Finally, we are prepared to give the precise version of \cref{thm:inf} above.
\begin{thm}
\label{thm:main}
Consider initial data $F_0=(f_0^{(m)})_{m=1}^\infty \in \Lr_{s,\vep_1}^\infty$, for $0<\ep_1<1$ and $s>2$, such that each $f^{(m)}$ is symmetric under permutation of coordinate labels. Then there exists a parameter $0<\ep_2(s,\ep_1) < \ep_1$ and time $T(s,\ep_1)>0$, such that the unique solution $F\in C([0,T];\Lr_{s,\ep_2}^\infty)$ to the equation \eqref{eq:introDuh} with initial datum $F_0$ is given by the convergent series
\begin{equation}
F(t) = \sum_{j=0}^\infty \frac{t^j}{j!}\Cr^j[F_0].
\end{equation}
Moreover, we have the bound
\begin{equation}
\label{eq:solbnd}
\sup_{0\leq t\leq T} \|F(t)\|_{\Lr_{s,\ep_2}^\infty} \lesssim_{s,\ep_1,\ep_2,T} \|F_0\|_{\Lr_{s,\ep_1}^\infty}.
\end{equation}
\end{thm}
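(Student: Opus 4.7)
The plan is to recast the Cauchy problem as the fixed-point equation \eqref{eq:introDuh} and prove that the formal Duhamel series $\sum_{j\geq 0}(t^j/j!)\Cr^j[F_0]$ converges in $C([0,T];\Lr_{s,\ep_2}^\infty)$ for a suitable choice of $\ep_2(s,\ep_1)\in(0,\ep_1)$ and $T(s,\ep_1)>0$. Because the spectral hierarchy is linear and the scale $\Lr_{s,\ep}^\infty$ is monotone in $\ep$, once the series is shown to converge and to be the unique solution, the a priori bound \eqref{eq:solbnd} and continuous dependence on initial data will follow at no extra cost. All of the analytical content is therefore funneled into a single quantitative collision estimate.

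The heart of the argument is the following claim: for all $s>2$ and $0<\ep_2<\ep_1<1$,
\begin{equation}
\label{eq:planC}
\|\Cr F\|_{\Lr_{s,\ep_2}^\infty} \;\leq\; \frac{C(s)}{\ep_2(\ep_1-\ep_2)}\,\|F\|_{\Lr_{s,\ep_1}^\infty}.
\end{equation}
To prove it, we first bound $\|(\Cr F)^{(m)}\|_{\L_{s,m}^\infty}$ for fixed $m$. Each of the four summands in the integrand of \eqref{eq:Chidef} is an evaluation of $f^{(m+2)}$ at arguments containing two or three of the primed variables $\xi_2',\xi_3',\xi_4'$; bounding $f^{(m+2)}$ by its $\L_{s,m+2}^\infty$ norm reduces matters to integrating products of $\jp{\xi_i'}^{-s}$ weights against the Dirac constraints on the resonant manifold. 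The estimates reviewed in \cref{ssec:colWKE}, adapted from \cite{GIT2020}, give $C(s)\jp{\xi_j}^{-s}$ for the integral attached to the $j$-th reference mode; summing over $j=1,\ldots,m$ picks up the combinatorial factor
\[
\|(\Cr F)^{(m)}\|_{\L_{s,m}^\infty} \;\leq\; C(s)\,m\,\|f^{(m+2)}\|_{\L_{s,m+2}^\infty}.
\]
Multiplying by $\ep_2^m$, summing in $m$, and rewriting $\ep_2^m = \ep_1^{-2}(\ep_2/\ep_1)^m\ep_1^{m+2}$ yields \eqref{eq:planC}, since $\sup_{m\geq 1} m(\ep_2/\ep_1)^m \lesssim \ep_1/(\ep_1-\ep_2)$. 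This is precisely the mechanism by which the geometric weight $\ep^m$ in the norm \eqref{eq:hnorm} absorbs the $m$-growth produced by the sum over reference modes in $(\Cr F)^{(m)}$.

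To promote \eqref{eq:planC} into control of the full series, we iterate along a telescoping chain $\ep_1=:\eta_0>\eta_1>\cdots>\eta_n:=\ep_2$ with uniform gap $\eta_{k-1}-\eta_k=(\ep_1-\ep_2)/n$. Composing the collision estimate along this chain yields
\[
\|\Cr^n[F_0]\|_{\Lr_{s,\ep_2}^\infty} \;\leq\; \left(\frac{C(s)\,n}{\ep_2(\ep_1-\ep_2)}\right)^{\!n}\|F_0\|_{\Lr_{s,\ep_1}^\infty}.
\]
Inserting this into $\sum_n (t^n/n!)\Cr^n[F_0]$ and using Stirling's bound $n^n/n!\lesssim e^n$, the series is term-by-term dominated by a geometric series provided $T<\ep_2(\ep_1-\ep_2)/(eC(s))$; this delivers existence and the bound \eqref{eq:solbnd} at once.

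Uniqueness is then handled by applying the iterated Duhamel identity \eqref{eq:introDuhj} to the difference $G=F-F'$ of two candidate solutions in $C([0,T];\Lr_{s,\ep_2}^\infty)$. Since $G(0)=0$, only the remainder term survives, and the same iterated collision estimate applied along a chain from $\ep_2$ down to some slightly smaller $\ep_3$ forces $\|G(t)\|_{\Lr_{s,\ep_3}^\infty}\to 0$ as the iteration order tends to infinity; this suffices because $\Lr_{s,\ep_2}^\infty\hookrightarrow\Lr_{s,\ep_3}^\infty$ for $\ep_3<\ep_2$. Continuous dependence on initial data then follows from linearity by applying \eqref{eq:solbnd} to $F_0-G_0$. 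The hard part will be executing \eqref{eq:planC} cleanly: one has to confirm that the resonant manifold bounds of \cite{GIT2020} close in the $\L_{s,m}^\infty$ setting with $s>2$, uniformly across the four structurally different summands of \eqref{eq:Chidef}---including the asymmetric $f^{(m+2)}(\uxi_{1;j-1},\uxi_{j+1;m},\xi_2',\xi_3',\xi_4')$ in which a mode label is swapped out rather than appended---and then balance the triple $(\ep_1,\ep_2,T)$ so that the telescoping chain produces factorial decay competitive with the $n^n$ growth that \eqref{eq:planC} introduces upon iteration.
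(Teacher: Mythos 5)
Your proposal is correct and arrives at the same base collision estimate (\cref{prop:C} in the paper, modulo a harmless weakening of $1/(\ep_1(\ep_1-\ep_2))$ to $1/(\ep_2(\ep_1-\ep_2))$), but it handles the combinatorial growth in the iterated Duhamel series by a genuinely different route. Where the paper passes through the explicit component-level bound $\|\Cr^j[F_0]^{(m)}\|_{\L_{s,m}^\infty}\leq C_s^j\,\prod_{r=1}^j(m+2r-2)\,\|f_0^{(m+2j)}\|_{\L_{s,m+2j}^\infty}$ (\cref{lem:Duhit}) and then splits the double sum over $(m,j)$ into the regimes $j\leq m$ and $j>m$ --- absorbing the first via smallness of $\ep_2$ and the second via Stirling and smallness of $T$ --- you instead iterate the single losing estimate \eqref{eq:Closs} along a telescoping chain $\ep_1=\eta_0>\cdots>\eta_n=\ep_2$ with uniform gap, producing $\|\Cr^n[F_0]\|_{\Lr_{s,\ep_2}^\infty}\leq(C(s)n/(\ep_2(\ep_1-\ep_2)))^n\|F_0\|_{\Lr_{s,\ep_1}^\infty}$, and then let the $1/n!$ kill the $n^n$. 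This is the abstract Cauchy--Kovalevskaya (Nirenberg--Nishida) mechanism in a scale of Banach spaces: more modular, a little less information per step (you never see the precise cardinality of the tuple set $\A_j^{(m)}$), but the same order $T\gtrsim_s\ep_1^2$ of local existence. Two small remarks: first, your phrase ``give $C(s)\jp{\xi_j}^{-s}$ for the integral attached to the $j$-th reference mode'' is literally accurate only for the $\Cr_{1,j;m}$ piece, where all three primed weights are integrated; for $\Cr_{2,j;m},\Cr_{3,j;m},\Cr_{4,j;m}$ the weight $\jp{\xi_j}^s$ is supplied by the $\L_{s,m+2}^\infty$ norm of $f^{(m+2)}$, since $\xi_j$ is retained as an argument --- the paper separates these cases precisely for this reason. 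Second, your uniqueness argument requires a further decrement $\ep_3<\ep_2$ and yields vanishing only on a sub-interval $[0,T'']$ with $T''<\ep_3(\ep_2-\ep_3)/(eC(s))$; to cover all of $[0,T]$ one should note that $T''$ depends only on $s,\ep_2,\ep_3$, so the argument can be restarted, which is what ``uniqueness is a local property'' means and what the paper silently invokes as well.
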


\begin{remark}
If the initial datum $F_0$ has nonnegative components, the explicit form of the Duhamel series shows that on $[0,T]$, the solution also has nonnegative components.
\end{remark}

\begin{remark}
Since the equation \eqref{eq:WKE_hier} is linear, the bound \eqref{eq:solbnd} implies that the solution map depends Lipschitz continuously on the initial data. See \cref{ssec:WPdep}.
\end{remark}

\begin{remark}
By the fundamental theorem of calculus, if $F \in C([0,T];\Lr_{s,\ep_1}^\infty)$ solves equation \eqref{eq:introDuh}, then $F\in C^k([0,T];\Lr_{s,\ep_2}^\infty)$, for any $0<\ep_2<\ep_1$ and $k\in\N$; i.e. $F$ is smooth in time. See the beginning remarks of \cref{sec:WP} for more details.
\end{remark}

\section{The collision operator}
\label{sec:col}
In this section, we review the boundedness of the collision operator for the WKE \eqref{eq:WKE} and spectral hierarchy \eqref{eq:WKE_hier}. We will then use the properties established in this section to study the Cauchy problem for these equations. Our strategy is to first understand the \emph{nonlinear} collision operator appearing in the right-hand side of \eqref{eq:WKE} as the restriction of a trilinear operator, where all three arguments take the same input. This part of the analysis has essentially been completed by Germain et al. \cite[Section 3]{GIT2020} in their proof of local well-posedness for the WKE, so we only sketch the details. With this understanding, we will then use tensorization arguments (i.e. fixing a subset of coordinates) to reduce the boundedness of the \emph{linear} collision operator in the right-hand side of \eqref{eq:WKE_hier} to that of the aforementioned multilinear forms. The main challenge is the identification of a good choice of norms to use for the hierarchy collision operator.

\subsection{WKE}
\label{ssec:colWKE}
In order to avoid repeatedly writing out the collision integral in the WKE \eqref{eq:WKE}, we introduce a more compact way of writing this equation. Let us write
\begin{equation}
\om \coloneqq \om(\xi), \quad \om_i \coloneqq \om(\xi_i), \quad f \coloneqq f(\xi), \quad f_i\coloneqq f(\xi_i).
\end{equation}
With this notation, the nonlinear collision operator $\Cc$ becomes
\begin{equation}
\label{eq:Cdef}
\begin{split}
\Cc[f] = \int_{(\R^3)^3}d\uxi_{1;3}\d(\xi+\xi_1-\xi_2-\xi_3)\d(\om + \om_1-\om_2-\om_3) \paren*{f_2f_3(f_1+f)-ff_1(f_2+f_3)}.
\end{split}
\end{equation}

As observed by Germain et al. \cite[Section 3]{GIT2020}, one can understand the collision operator $\Cc$ as the restriction of a trilinear operator through
\begin{equation}
\label{eq:Cdcomp}
\Cc[f] \coloneqq \Cc_1[f,f,f] + \Cc_2[f,f,f] -\Cc_3[f,f,f]-\Cc_4[f,f,f],
\end{equation}
where
\begin{equation}
\label{eq:C1def}
\Cc_1[f,g,h](\xi) \coloneqq \int_{(\R^3)^3}d\uxi_{3}\d(\xi+\xi_1-\xi_2-\xi_3)\d(\om + \om_1-\om_2-\om_3)f(\xi_1)g(\xi_2)h(\xi_3),
\end{equation}
\begin{equation}
\label{eq:C2def}
\Cc_2[f,g,h](\xi) \coloneqq \int_{(\R^3)^3}d\uxi_{3}\d(\xi+\xi_1-\xi_2-\xi_3)\d(\om + \om_1-\om_2-\om_3)f(\xi)g(\xi_2)h(\xi_3),
\end{equation}
\begin{equation}
\label{eq:C3def}
\Cc_3[f,g,h](\xi) \coloneqq \int_{(\R^3)^3}d\uxi_{3}\d(\xi+\xi_1-\xi_2-\xi_3)\d(\om + \om_1-\om_2-\om_3)f(\xi)g(\xi_1)h(\xi_2),
\end{equation}
\begin{equation}
\label{eq:C4def}
\Cc_4[f,g,h](\xi) \coloneqq \int_{(\R^3)^3}d\uxi_{3}\d(\xi+\xi_1-\xi_2-\xi_3)\d(\om + \om_1-\om_2-\om_3)f(\xi)g(\xi_1)h(\xi_3).
\end{equation}
The reader will recall the assumptions on the dispersion relation $\om$ from \cref{sec:MR}

For $1\leq r\leq \infty$ and $s\geq 0$, we introduce the scale of weighted $L^r$ norms
\begin{equation}
\|f\|_{L_s^r(\R^3)} \coloneqq \|\jp{x}^s f\|_{L^r},
\end{equation}
where $\jp{\cdot}$ denotes the Japanese bracket. Evidently, the completion of Schwartz functions in this norm defines a Banach space $L_s^r(\R^3)$ containing $L^r(\R^3)$. For $r=\infty$, we modify the expected definition of $L_s^\infty$ to be
\begin{equation}
L_s^\infty(\R^3) \coloneqq \{f\in C^0(\R^3) : \|f\|_{L_s^\infty} < \infty\}.
\end{equation}

Following \cite[Section 3]{GIT2020}, we sketch the proof that the collision operator $\Cc$ is well-defined by showing that $\Cc_1,\ldots,\Cc_4$ are trilinear operators bounded from $(L_s^\infty(\R^3))^3$ to $L_{s+\gamma}^\infty(\R^3)$ for $0\leq \gamma <s-2$ and $s>2$. In particular, the output of the collision operator has better decay at infinity than its inputs.

\subsubsection{Boundedness of $\Cc_1$}
We start by showing the boundedness of the operator $\Cc_1$ defined in \eqref{eq:C1def}.
\begin{prop}
\label{prop:C1}
For $s>2$ and $0\leq\gamma < \min\{s-2,1\}$, $\Cc_1$ is a well-defined bounded trilinear operator $(L_s^\infty(\R^3))^3 \rightarrow L_{s+\gamma}^\infty(\R^3)$.
\end{prop}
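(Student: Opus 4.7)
The strategy is to reduce the boundedness of $\Cc_1$ to a pointwise estimate on a weighted integral over the resonant manifold, adapting the analysis of Germain, Ionescu, and Tran \cite{GIT2020}. Using the crude pointwise bound
\[
|f(\xi_1)g(\xi_2)h(\xi_3)| \leq \|f\|_{L_s^\infty}\|g\|_{L_s^\infty}\|h\|_{L_s^\infty}\jp{\xi_1}^{-s}\jp{\xi_2}^{-s}\jp{\xi_3}^{-s},
\]
I would factor out the three norms and reduce the proposition to the deterministic uniform bound
\[
\sup_{\xi \in \R^3} \jp{\xi}^{s+\gamma} \int_{(\R^3)^3} d\uxi_{1;3}\, \d(\xi+\xi_1-\xi_2-\xi_3)\, \d(\om + \om_1-\om_2-\om_3)\, \jp{\xi_1}^{-s}\jp{\xi_2}^{-s}\jp{\xi_3}^{-s} \lesssim 1.
\]

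Next I would use the momentum delta to eliminate $\xi_1 = \xi_2 + \xi_3 - \xi$, leaving a six-dimensional integral in $(\xi_2, \xi_3)$ with a single remaining delta enforcing the frequency resonance $\om(\xi) + \om(\xi_2 + \xi_3 - \xi) = \om(\xi_2) + \om(\xi_3)$. From the momentum constraint $\xi + \xi_1 = \xi_2 + \xi_3$, at least one of $\jp{\xi_1}, \jp{\xi_2}, \jp{\xi_3}$ must be comparable to $\jp{\xi}$ or larger. I would partition the domain of integration into three pieces according to which index $j \in \{1,2,3\}$ realizes this dominance, and in each piece absorb the outer weight $\jp{\xi}^{s+\gamma}$ into the dominant factor $\jp{\xi_j}^{-s}$, leaving only a growing factor $\jp{\xi_j}^{\gamma}$ to be compensated by the remaining two decay weights and the resonance delta.

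The heart of the matter, and the main obstacle, is the estimation of the resulting resonant-manifold integrals, in which a product of Japanese-bracket weights of net exponent $-2s + \gamma$ is integrated against the surface measure induced by the remaining frequency delta. This is precisely the type of integral analyzed in \cite[Section 3]{GIT2020}. The plan is to invoke (and slightly adapt, since we need estimates with inhomogeneous weights rather than uniform ones) their resonant integral lemmas: assumption \ref{item:asmplb} on $\Omega$ gives $|\nabla \om(\zeta)| \gtrsim |\zeta|$, so a Morse-type change of variables on dyadic frequency shells resolves the delta. The polynomial weights then control the transverse two-dimensional integrations, where the condition $\gamma < s - 2$ enforces convergence at infinity, while $\gamma < 1$ is exactly what is needed to offset the codimension-one surface-measure loss on the resonant sphere. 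Summing the three cases delivers the required bound, which automatically encodes both the well-definedness and the trilinearity of $\Cc_1$ on $(L_s^\infty(\R^3))^3 \to L_{s+\gamma}^\infty(\R^3)$.
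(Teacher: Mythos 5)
Your proposal matches the paper's proof: insert the Japanese-bracket weights, peel off the three $L_s^\infty$ norms, and reduce to the uniform boundedness of the weighted resonant-manifold integral, which is then handled by citing (and lightly adapting) \cite[Section 3]{GIT2020}. The extra detail you give on how that integral estimate goes---eliminating one variable via the momentum delta, partitioning by the dominant frequency, and using assumption \ref{item:asmplb} to bound $|\nabla_z\G_{\xi,\xi_1}|$ from below---is content the paper delegates wholesale to \cref{lem:C1} and to Steps 2--4 of \cite[Proposition~3.1]{GIT2020}, and is consistent with that reference.
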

\begin{proof}
Let $f,g,h \in L_s^\infty(\R^3)$. Writing
\begin{equation}
\begin{split}
\jp{\xi}^{s+\gamma}\Cc_1[f,g,h](\xi) &= \int_{(\R^3)^3}d\uxi_3\d(\xi+\xi_1-\xi_2-\xi_3)\d(\om+\om_1-\om_2-\om_3)\frac{\jp{\xi}^{s+\gamma}}{\jp{\xi_1}^s\jp{\xi_2}^s\jp{\xi_3}^s} \\
&\ph\jp{\xi_1}^sf(\xi_1) \jp{\xi_2}^sg(\xi_2) \jp{\xi_3}^sh(\xi_3),
\end{split}
\end{equation}
we see that
\begin{equation}
\begin{split}
\|\Cc_1[f,g,h]\|_{L_{s+\gamma}^\infty} &\leq \|f\|_{L_s^\infty} \|g\|_{L_s^\infty} \|h\|_{L_s^\infty}\\
&\ph\times\sup_{\xi\in\R^3} \int_{(\R^3)^3}d\uxi_3\d(\xi+\xi_1-\xi_2-\xi_3)\d(\om+\om_1-\om_2-\om_3)\frac{\jp{\xi}^{s+\gamma}}{\jp{\xi_1}^s\jp{\xi_2}^s\jp{\xi_3}^s}.
\end{split}
\end{equation}
Thus, the proof of the proposition is complete, assuming the finiteness of the integral in the right-hand side, the proof of which we defer to \cref{lem:C1} below.
\end{proof}

To establish the missing estimate above, we first define the phase function
\begin{equation}
\label{eq:Gdef}
\G_{\xi,\xi_1}(z) \coloneqq \om(\xi+\xi_1-z) + \om(z) - \om(\xi) - \om(\xi_1), \qquad \forall z\in\R^3.
\end{equation}
for fixed $\xi, \xi_1\in\R^3$. We define the \emph{resonant manifold} $S_{\xi,\xi_1}$ to be the zero set of $\G_{\xi,\xi_1}$. 

\begin{lemma}
\label{lem:C1}
For $s,\gamma$ as above, there is a constant $M>0$ such that 
\begin{equation}
\begin{split}
&\sup_{\xi\in\R^3}\int_{\R^3} \int_{(\R^3)^3}d\uxi_3\d(\xi+\xi_1-\xi_2-\xi_3)\d(\om+\om_1-\om_2-\om_3)\frac{\jp{\xi}^{s+\gamma}}{\jp{\xi_1}^s\jp{\xi_2}^s\jp{\xi_3}^s} \\
&\lesssim_{s,\gamma} \sup_{\xi\in\R^3}\int_{\R^3}\int_{S_{\xi,\xi_1}} \frac{\jp{\xi}^\gamma}{\jp{\xi_1}^s\jp{z}^{s} |\nabla_z\G_{\xi,\xi_1}(z)|}d\mu(z) d\xi_1\leq M,
\end{split}
\end{equation}
where $\mu$ is the surface measure on the manifold $S_{\xi,\xi_1}$.
\end{lemma}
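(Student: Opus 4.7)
The strategy is to reduce the triple integral to a surface integral using the Dirac deltas, and then to prove a uniform bound on that surface integral.

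\emph{Step 1: Reduction via the coarea formula.} I would first use the momentum delta $\d(\xi+\xi_1-\xi_2-\xi_3)$ to eliminate $\xi_2=\xi+\xi_1-\xi_3$, leaving a double integral over $\xi_1$ and $\xi_3$ (which I relabel $z$) with the remaining constraint $\d(\G_{\xi,\xi_1}(z))$, where $\G_{\xi,\xi_1}$ is the phase function defined in \eqref{eq:Gdef}. An application of the coarea formula then converts this remaining delta into surface measure on $S_{\xi,\xi_1}$, weighted by $|\nabla_z\G_{\xi,\xi_1}(z)|^{-1}$. That $\nabla_z\G_{\xi,\xi_1}$ is non-vanishing on $S_{\xi,\xi_1}$ follows from the monotonicity of $\Omega$ coming from assumption \ref{item:asmplb}, so the coarea formula is applicable and the first ``$\lesssim$'' in the statement holds once the integrand is properly simplified.

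\emph{Step 2: Simplifying the numerator via resonance.} After Step 1, the surface integrand reads $\jp{\xi}^{s+\gamma}/(\jp{\xi_1}^s\jp{\xi+\xi_1-z}^s\jp{z}^s)$ and still needs to be controlled by the cleaner expression $\jp{\xi}^\gamma/(\jp{\xi_1}^s\jp{z}^s)$ appearing in the lemma. The phase function $\G_{\xi,\xi_1}$, and hence $S_{\xi,\xi_1}$ together with the measure $d\mu/|\nabla\G|$, is symmetric under the involution $z\leftrightarrow \xi+\xi_1-z$. This symmetry allows me to restrict attention to the half-manifold where $|\xi+\xi_1-z|\ge|z|$, losing only a factor of $2$. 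On this half, the resonance identity $\Omega(|\xi|)+\Omega(|\xi_1|)=\Omega(|\xi+\xi_1-z|)+\Omega(|z|)$ combined with the doubling assumption \ref{item:asmpdub} yields $|\xi+\xi_1-z|\gtrsim \max(|\xi|,|\xi_1|)$, so the factor $\jp{\xi+\xi_1-z}^s$ in the denominator absorbs the missing $\jp{\xi}^s$ in the numerator.

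\emph{Step 3: Uniform bound for the surface integral.} The main obstacle is to establish
\[
\sup_{\xi\in\R^3}\int_{\R^3}d\xi_1\int_{S_{\xi,\xi_1}} \frac{\jp{\xi}^\gamma}{\jp{\xi_1}^s\jp{z}^s|\nabla_z\G_{\xi,\xi_1}(z)|}\,d\mu(z)<\infty.
\]
This is precisely the kind of resonant-manifold estimate analyzed by Germain, Ionescu, and Tran in \cite[Section 3]{GIT2020}. Their approach combines the lower bound $|\nabla_z\G_{\xi,\xi_1}(z)|\gtrsim |z|+|\xi+\xi_1-z|$, which is a direct consequence of assumption \ref{item:asmplb}, with an explicit parametrization of $S_{\xi,\xi_1}$ (a sphere of center $(\xi+\xi_1)/2$ and radius $|\xi-\xi_1|/2$ in the Schr\"odinger case, and a smooth codimension-one surface more generally), followed by a case analysis based on the relative sizes of $|\xi|$ and $|\xi_1|$. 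The hypotheses $s>2$ and $\gamma<\min\{s-2,1\}$ enter precisely to guarantee convergence of the $\xi_1$-integral (large-$|\xi_1|$ regime) and of the angular integration over the resonant manifold (degeneracies where $|\nabla\G|$ becomes small, e.g., as $\xi_1\to\xi$ so the resonant sphere collapses). Since the analogous estimate is essentially already established in \cite{GIT2020}, I expect this step will largely amount to invoking or mildly adapting their analysis to the present weighted $L^\infty$ setting.
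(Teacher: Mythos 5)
Your proposal correctly reconstructs the argument that the paper outsources entirely to \cite[Proposition 3.1, Steps 2--4]{GIT2020}: eliminate $\xi_2$ with the momentum delta, reduce to a surface integral over $S_{\xi,\xi_1}$ via the coarea formula, use the resonance constraint and the doubling assumption \ref{item:asmpdub} to absorb $\jp{\xi}^s$ into the $\jp{\xi_2}^{-s}$ weight, and then control the resulting surface integral by a parametrization of $S_{\xi,\xi_1}$. Your Steps 1--2 supply the first inequality in the lemma and are correct, and your Step 3 defers the remaining estimate to \cite{GIT2020}, which is precisely what the paper's one-line proof does; so the approach is essentially the same.

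One claim in Step 3 is wrong as stated and would mislead you if you tried to carry out that step yourself: the lower bound $|\nabla_z\G_{\xi,\xi_1}(z)|\gtrsim |z|+|\xi+\xi_1-z|$ is false, already in the Schr\"odinger case. There $\nabla_z\G(z)=4z-2(\xi+\xi_1)$, so on the resonant sphere (center $(\xi+\xi_1)/2$, radius $|\xi-\xi_1|/2$) one has $|\nabla_z\G|=2|\xi-\xi_1|$, whereas $|z|+|\xi+\xi_1-z|\geq|\xi+\xi_1|$, which can be arbitrarily large with $|\xi-\xi_1|$ bounded. What assumption \ref{item:asmplb} really gives is the \emph{transversal} lower bound
\begin{equation*}
\nabla_z\G_{\xi,\xi_1}(z)\cdot q
=|q|^2\left(\frac{\Omega'(|\rho-z|)}{|\rho-z|}+\frac{\Omega'(|z|)}{|z|}\right)
\geq 2c_1|q|^2,
\qquad z=\alpha\rho+q,\ q\perp\rho:=\xi+\xi_1,
\end{equation*}
i.e.\ $|\nabla_z\G|\gtrsim |q|$. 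This is exactly the nondegeneracy that licenses slicing $S_{\xi,\xi_1}$ by hyperplanes orthogonal to $\rho$, but it is much weaker than the pointwise bound you wrote, so the convergence of the surface integral genuinely requires the case analysis of \cite{GIT2020} rather than a direct pointwise estimate.
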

\begin{proof}
See Steps 2-4 in the proof of \cite[Proposition 3.1]{GIT2020}.
\end{proof}

\subsubsection{Boundedness of $\Cc_2$}
We now address the boundedness of the second operator $\Cc_2$.
\begin{prop}
\label{prop:C2}
For $s>2$ and $0\leq\gamma<s-2$, $\Cc_2$ is well-defined as a bounded trilinear operator $(L_s^\infty(\R^3))^3 \rightarrow L_{s+\gamma}^\infty(\R^3)$.
\end{prop}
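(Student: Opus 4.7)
The plan is to mirror the strategy used for \cref{prop:C1}. The new feature of $\Cc_2$ is that the external variable $\xi$ appears as the argument of $f$, outside the integral in \eqref{eq:C2def}, which lets us pull out $f(\xi)$ pointwise. This yields
\begin{align*}
\jp{\xi}^{s+\gamma} |\Cc_2[f,g,h](\xi)| \le \|f\|_{L_s^\infty}\|g\|_{L_s^\infty}\|h\|_{L_s^\infty}\cdot \jp{\xi}^{\gamma} I(\xi),
\end{align*}
where
\begin{align*}
I(\xi) \coloneqq \int d\uxi_3\, \frac{\d(\xi+\xi_1-\xi_2-\xi_3)\d(\om+\om_1-\om_2-\om_3)}{\jp{\xi_2}^{s}\jp{\xi_3}^{s}}.
\end{align*}
Thus the proposition reduces to the pointwise bound $\sup_{\xi\in\R^3} \jp{\xi}^{\gamma} I(\xi) < \infty$ for all $0\le\gamma<s-2$.

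First I would eliminate $\xi_1$ via the momentum delta and then reapply the integration-by-parts step of \cref{lem:C1} (i.e.\ Steps~2--4 of \cite[Proposition~3.1]{GIT2020}) to express the energy delta as a surface measure on the resonant manifold $S_{\xi,\xi_1}$, obtaining
\begin{align*}
I(\xi) \lesssim \int_{\R^3} d\xi_1 \int_{S_{\xi,\xi_1}} \frac{d\mu(z)}{\jp{z}^{s}\jp{\xi+\xi_1-z}^{s}\,|\nabla_z \G_{\xi,\xi_1}(z)|}.
\end{align*}
Comparing with \cref{lem:C1}, the essential new difficulty is that the weight $\jp{\xi_1}^{-s}$, which previously controlled the $\xi_1$-integration, is absent. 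To compensate, I would exploit the identity $z+(\xi+\xi_1-z)=\xi+\xi_1$ together with the submultiplicative inequality $\jp{x+y}\lesssim \jp{x}\jp{y}$, which gives
\begin{align*}
\jp{z}^{s}\jp{\xi+\xi_1-z}^{s} \gtrsim \jp{\xi+\xi_1}^{s},
\end{align*}
so that a portion of the combined weight $\jp{z}^{-s}\jp{\xi+\xi_1-z}^{-s}$ can be traded for a factor $\jp{\xi+\xi_1}^{-\sigma}$, $0\le\sigma\le s$. After the change of variable $\xi_1\mapsto \xi_1-\xi$, this surrogate weight plays precisely the role of the missing $\jp{\xi_1}^{-s}$ factor in the Germain--Ionescu--Tran surface bound. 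To carry out the trade concretely I would split the $z$-integration symmetrically into $\{|z|\ge |\xi+\xi_1-z|\}$ and its complement, which is justified by the $\xi_2\leftrightarrow\xi_3$ symmetry of the integrand; on each side one of the two individual weights is comparable to $\jp{\xi+\xi_1}^{-s}$, while the other retains its full strength on the remaining variable.

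Substituting back, the resulting resonance integral is of the same form as the one analyzed in \cref{lem:C1} (with the role of $\jp{\xi_1}^{-s}$ played by $\jp{\xi+\xi_1}^{-\sigma}$), and the bound from Germain et al.\ applies to yield the desired $\jp{\xi}^{-\gamma}$ decay. The widened admissible range $\gamma<s-2$ (as opposed to $\gamma<\min\{s-2,1\}$ for $\Cc_1$) is consistent with this picture, since the extra factor of $\jp{\xi}^{-s}$ already extracted from $f(\xi)$ loosens the pointwise demand on $I(\xi)$. The main obstacle will be selecting $\sigma$ so that (i) the $\xi_1$-integration is absolutely convergent after the shift and (ii) enough of the individual $\jp{z}$ and $\jp{\xi+\xi_1-z}$ decay is retained to cover the full range $\gamma<s-2$; the split into $|z|\gtreqless|\xi+\xi_1-z|$ is what makes this balance achievable.
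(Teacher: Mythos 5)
Your reduction step is exactly the one the paper uses: factor $\Cc_2[f,g,h] = f\cdot\Qc[g,h]$, absorb $\jp{\xi}^s f(\xi)$ into $\|f\|_{L_s^\infty}$, and reduce to the bound $\sup_\xi\jp{\xi}^\gamma I(\xi)<\infty$ with the kernel weight $\jp{\xi_2}^{-s}\jp{\xi+\xi_1-\xi_2}^{-s}$. At that point the paper simply cites \cref{lem:C2}, which is \cite[Proposition~3.2]{GIT2020}, and stops; you instead attempt an independent proof of the surface-integral bound by reducing it to \cref{lem:C1}. That extra step is where the gap is.

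The difficulty is that the trade $\jp{z}^{-s}\jp{\xi+\xi_1-z}^{-s}\gtrsim\jp{\xi+\xi_1}^{-s}\cdot(\text{residual weight})$ followed by the shift $\xi_1\mapsto\xi_1-\xi$ does \emph{not} transform your integral into the one in \cref{lem:C1}. The resonant manifold $S_{\xi,\xi_1}$ and the gradient $\nabla_z\G_{\xi,\xi_1}$ depend on $\xi$ and $\xi_1$ jointly, not solely through $\xi+\xi_1$; in the Schr\"odinger case, $S_{\xi,\xi_1}$ is the sphere of radius $|\xi-\xi_1|/2$ centered at $(\xi+\xi_1)/2$ with $|\nabla_z\G|\equiv 2|\xi-\xi_1|$ on it. After your shift the weight $\jp{\xi+\xi_1}^{-s}$ becomes $\jp{\xi_1}^{-s}$, but the sphere becomes centered at $\xi_1/2$ with radius $|2\xi-\xi_1|/2$ and $|\nabla_z\G|=2|2\xi-\xi_1|$, which is not the geometry appearing in \cref{lem:C1}. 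So you cannot conclude ``the bound from Germain et al.\ applies''; the resulting integral requires a genuinely separate analysis, which is why \cite{GIT2020} treats it in a distinct proposition (3.2 rather than 3.1). Your heuristic for the widened range $\gamma<s-2$ is also shaky for the same reason: the extra $\jp{\xi}^{-s}$ pulled out of $f$ is already spent raising the target norm from $L_\gamma^\infty$ to $L_{s+\gamma}^\infty$, so it does not ``loosen'' the demand on $I(\xi)$ --- the relaxed restriction on $\gamma$ comes from the different weight geometry on the resonant sphere, not from the extracted prefactor. The cleanest fix is to follow the paper and appeal to \cref{lem:C2} directly rather than trying to derive it from \cref{lem:C1}.
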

\begin{proof}
Let $f,g,h\in L_s^\infty(\R^3)$. Unlike in the proof of \cref{prop:C1}, we reduce the boundnedness of $\Cc_2$ to the boundedness of the bilinear operator
\begin{equation}
\Qc[g,h](\xi) \coloneqq \int_{(\R^3)^2}d\xi_1d\xi_2 \d(\om+\om_1-\om_2-\om(\xi+\xi_1-\xi_2))g(\xi_2)h(\xi+\xi_1-\xi_2).
\end{equation}
Indeed,
\begin{align}
\jp{\xi}^{s+\gamma}\Cc_2[f,g,h](\xi) &= \jp{\xi}^{s+\gamma} f(\xi)\int_{(\R^3)^3}d\uxi_{1;3}\d(\xi+\xi_1-\xi_2-\xi_3)\d(\om+\om_1-\om_2-\om_3)g(\xi_2)h(\xi_3) \nn\\
&=\jp{\xi}^{s+\gamma} f(\xi)\Qc[g,h](\xi) \nn\\
&\leq \|f\|_{L_s^\infty} \jp{\xi}^\gamma |\Qc[g,h](\xi)|.
\end{align}
Thus, it suffices to show that
\begin{equation}
\|\Qc[g,h]\|_{L_{\gamma}^\infty} \lesssim_{\gamma,s} \|g\|_{L_s^\infty} \|h\|_{L_s^\infty}.
\end{equation}
To prove this property, we observe that
\begin{equation}
\begin{split}
\jp{\xi}^\gamma \Qc[g,h](\xi) &= \int_{(\R^3)^2}d\uxi_{1;2}\d(\om+\om_1-\om_2-\om(\xi+\xi_1-\xi_2))\frac{\jp{\xi}^\gamma}{\jp{\xi_2}^s \jp{\xi+\xi_1-\xi_2}^s} \\
&\ph\jp{\xi_2}^s g(\xi_2) \jp{\xi+\xi_1-\xi_2}^s h(\xi+\xi_1-\xi_2).
\end{split}
\end{equation}
Taking the absolute value of both sides of the preceding identity, we obtain that
\begin{equation}
\|\Qc[g,h]\|_{L_{\gamma}^\infty} \leq \|g\|_{L_s^\infty} \|h\|_{L_s^\infty} \sup_{\xi\in\R^3}\int_{(\R^3)^2}d\uxi_{1;2}\d(\om+\om_1-\om_2-\om(\xi+\xi_1-\xi_2))\frac{\jp{\xi}^\gamma}{\jp{\xi_2}^s \jp{\xi+\xi_1-\xi_2}^s}.
\end{equation}
The conclusion of the proof follows from the finiteness of the supremum in the right-hand side, which we establish with the next lemma.
\end{proof}

\begin{lemma}
\label{lem:C2}
Under the assumptions on $s$ and $\gamma$ as above, there is a constant $M>0$ such that 
\begin{equation}
\begin{split}
&\sup_{\xi\in\R^3}\int_{(\R^3)^2}d\uxi_{1;2}\d(\om+\om_1-\om_2-\om(\xi+\xi_1-\xi_2))\frac{\jp{\xi}^\gamma}{\jp{\xi_2}^s \jp{\xi+\xi_1-\xi_2}^s} \\ 
&\lesssim_{s,\gamma} \sup_{\xi\in\R^3} \int_{\R^3} \int_{S_{\xi,\xi_1}} \frac{\jp{\xi}^\gamma \jp{z}^{-s}\jp{\xi+\xi_1-z}^{-s}}{|\nabla_z\G_{\xi,\xi_1}(z)|}d\mu(z) d\xi_1 \leq M.
\end{split}
\end{equation}
\end{lemma}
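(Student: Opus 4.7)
The plan is to reduce the double integral to an integral over the resonant manifold via a change of variables and a coarea formula, and then to bound the resulting surface integral by exploiting the spherical symmetry of $\om$, in close analogy with Steps 2-4 of the proof of \cite[Proposition 3.1]{GIT2020} underlying \cref{lem:C1}.

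First, in the inner $\xi_2$-integral, I would substitute $z:=\xi+\xi_1-\xi_2$, which has Jacobian $1$. The Dirac constraint $\om+\om_1-\om_2-\om(\xi+\xi_1-\xi_2)=0$ then becomes $\G_{\xi,\xi_1}(z)=0$, and the weight $\jp{\xi_2}^{-s}$ becomes $\jp{\xi+\xi_1-z}^{-s}$. Applying the coarea formula
\[
\int_{\R^3}F(z)\,\d(\G_{\xi,\xi_1}(z))\,dz=\int_{S_{\xi,\xi_1}}\frac{F(z)}{|\nabla_z\G_{\xi,\xi_1}(z)|}\,d\mu(z),
\]
which is valid as soon as $\nabla_z\G_{\xi,\xi_1}$ is nonvanishing on $S_{\xi,\xi_1}$, immediately yields the first inequality in the statement.

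For the second inequality, I would fix $\xi\in\R^3$ and set $\rho:=\xi+\xi_1$. A direct differentiation gives
\[
\nabla_z\G_{\xi,\xi_1}(z)=\Omega'(|z|)\tfrac{z}{|z|}-\Omega'(|\rho-z|)\tfrac{\rho-z}{|\rho-z|},
\]
and since $\G_{\xi,\xi_1}$ is invariant under rotations of $z$ about the axis $\R\rho$, I would pass to cylindrical coordinates $z=\alpha\hat\rho+w$ with $w\perp\hat\rho$ and $r=|w|$. Dotting $\nabla_z\G_{\xi,\xi_1}$ against $\hat w$ and invoking assumption \ref{item:asmplb} yields the transverse lower bound $|\nabla_z\G_{\xi,\xi_1}(z)|\geq 2c_1 r$, so $\partial_r\G_{\xi,\xi_1}>0$ for $r>0$ and each $\alpha$-slice of $S_{\xi,\xi_1}$ is either empty or a circle of radius $r_\alpha$ about $\alpha\hat\rho$, with surface measure $2\pi r_\alpha\,d\alpha$. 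The factor $r_\alpha$ from the cylindrical surface measure cancels the $r_\alpha^{-1}$ loss from the reciprocal of the transverse gradient, reducing the surface integral to a one-dimensional integral in $\alpha$ with an integrand bounded pointwise by a constant times $\jp{\xi}^\gamma\jp{z(\alpha)}^{-s}\jp{\rho-z(\alpha)}^{-s}$. Crucially, the presence of two $\jp{\cdot}^{-s}$ factors on the resonant manifold (as opposed to only one in the analogue for $\Cc_1$) provides enough decay to carry out the $\alpha$- and then $\xi_1$-integrations uniformly in $\xi$ as long as $s>2$, yielding the uniform bound by some $M<\infty$ and, in fact, the improved range $0\leq\gamma<s-2$ stated in \cref{prop:C2} (no additional constraint $\gamma<1$ is needed).

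The main obstacle is the behavior near the axis $\R\rho$, where $r\to 0$ and the transverse lower bound on $|\nabla_z\G_{\xi,\xi_1}|$ degenerates; however, the Jacobian $r_\alpha$ from the cylindrical surface measure is precisely what compensates, and assumptions \ref{item:asmpreg} and \ref{item:asmpdub} rule out any pathological behavior of $\Omega'(x)/x$ elsewhere. A secondary subtlety is the convergence of the outer $\xi_1$-integral at infinity, but the joint decay $\jp{z}^{-s}\jp{\rho-z}^{-s}$ together with $s>2$ transfers, after integration in $\alpha$, to a $\jp{\xi_1}^{-s}$-type decay in $\xi_1$ which is uniform in $\xi$. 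Since this geometric analysis is essentially identical to that carried out in \cite{GIT2020} for the weight structure arising in $\Cc_1$, I would cite those calculations directly rather than reproducing them in full.
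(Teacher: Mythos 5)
Your proposal is correct and takes essentially the same route as the paper, which simply cites Steps 2--3 of \cite[Proposition 3.2]{GIT2020}: substitute $z=\xi+\xi_1-\xi_2$, invoke the coarea formula to pass to the resonant-manifold integral, and then use the cylindrical parametrization together with the transverse gradient lower bound from assumption \ref{item:asmplb}. The only slip is the reference: the estimate you need is \cite[Proposition 3.2]{GIT2020}, not 3.1 --- though, as you rightly observe, the geometric setup is shared, and it is the presence of both $\jp{\cdot}^{-s}$ factors on the resonant manifold (rather than one on the manifold and one on the outer $\xi_1$-integral, as for $\Cc_1$) that relaxes the constraint from $\gamma<\min\{s-2,1\}$ to $\gamma<s-2$.
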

\begin{proof}
See Steps 2-3 in the proof of \cite[Proposition 3.2]{GIT2020}.
\end{proof}

\subsubsection{Boundedness of $\Cc_3, \Cc_4$}
In this last sub-subsection, we address the boundedness of the operators $\Cc_3,\Cc_4$. By symmetry under swapping $\xi_2\leftrightarrow \xi_3$, it suffices to consider only $\Cc_3$.
\begin{prop}
\label{prop:C3}
For $s>2$ and $0\leq\gamma<s-2$, the operator $\Cc_3$ is well-defined as a bounded trilinear operator $L_s^\infty(\R^3)^3\rightarrow L_{s+\gamma}^\infty(\R^3)$.
\end{prop}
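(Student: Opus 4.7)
The plan is to reduce the boundedness of $\Cc_3$ directly to the integral estimate already established in \cref{lem:C1}. The key observation is that, in contrast to $\Cc_1$ (where the output $\xi$ does not appear in the integrand) and $\Cc_2$ (where two of the three function evaluations involve the momentum-conservation variable $\xi + \xi_1 - \xi_2$), in the operator $\Cc_3$ the input $f$ is evaluated at the external variable $\xi$, and after using the momentum delta to integrate out $\xi_3$, the remaining two function evaluations are simply at $\xi_1$ and $\xi_2$. This happens to produce exactly the same weighted surface integral as in \cref{lem:C1}.

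First, I would factor out $f(\xi)$ from the integrand, writing
\begin{equation*}
\Cc_3[f,g,h](\xi) = f(\xi)\,\widetilde{\Qc}[g,h](\xi),\qquad \widetilde{\Qc}[g,h](\xi) \coloneqq \int_{(\R^3)^2} d\xi_1 d\xi_2 \, \d\bigl(\G_{\xi,\xi_1}(\xi_2)\bigr)\, g(\xi_1) h(\xi_2),
\end{equation*}
where I have used the momentum delta to perform the $\xi_3$ integration (setting $\xi_3 = \xi+\xi_1-\xi_2$), and then rewritten the resulting frequency delta as $\d(\G_{\xi,\xi_1}(\xi_2))$, with $\G_{\xi,\xi_1}$ defined in \eqref{eq:Gdef}. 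Bounding $|f(\xi)| \leq \|f\|_{L_s^\infty} \jp{\xi}^{-s}$, it then suffices to show
\begin{equation*}
\sup_{\xi \in \R^3} \jp{\xi}^\gamma \bigl|\widetilde{\Qc}[g,h](\xi)\bigr| \lesssim_{s,\gamma} \|g\|_{L_s^\infty} \|h\|_{L_s^\infty}.
\end{equation*}

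Next, I would extract the $L_s^\infty$-norms of $g$ and $h$ by multiplying and dividing by $\jp{\xi_1}^s \jp{\xi_2}^s$, reducing the problem to the uniform finiteness of
\begin{equation*}
\sup_{\xi\in\R^3} \int_{(\R^3)^2} d\xi_1 d\xi_2 \, \d\bigl(\G_{\xi,\xi_1}(\xi_2)\bigr) \frac{\jp{\xi}^\gamma}{\jp{\xi_1}^s \jp{\xi_2}^s}.
\end{equation*}
Applying the coarea-type identity (integration by parts against the level set of $\G_{\xi,\xi_1}$) to the $\xi_2$-integration, this becomes
\begin{equation*}
\sup_{\xi\in\R^3} \int_{\R^3} \int_{S_{\xi,\xi_1}} \frac{\jp{\xi}^\gamma}{\jp{\xi_1}^s \jp{z}^s |\nabla_z \G_{\xi,\xi_1}(z)|}\, d\mu(z)\, d\xi_1,
\end{equation*}
which is precisely the quantity shown to be finite in \cref{lem:C1}.

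I would then close the proof by remarking that $\Cc_4$ is handled identically, since swapping the integration variables $\xi_2 \leftrightarrow \xi_3$ in \eqref{eq:C3def} yields \eqref{eq:C4def} (both the momentum delta and the frequency delta are invariant under this swap, and the remaining weight $\jp{\xi_3}^{-s} \jp{\xi_2}^{-s}$ relabels to the same form). No new obstacle should arise: the only subtlety to check carefully is the symmetry $\G_{\xi,\xi_1}(z) = \G_{\xi,\xi_1}(\xi+\xi_1-z)$, which ensures that it is immaterial whether one labels the surface variable $z$ as $\xi_2$ or $\xi_3$, and is what allows the $\Cc_3$ estimate to piggyback on the $\Cc_1$ estimate rather than requiring the separate bilinear analysis used for $\Cc_2$.
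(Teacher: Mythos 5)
Your reduction is structurally the same as the paper's: factor out $f(\xi)$, peel off the norms of $g$ and $h$, perform the coarea step, and you indeed land on the surface integral
$\sup_{\xi}\int_{\R^3}\int_{S_{\xi,\xi_1}} \jp{\xi}^\gamma\jp{\xi_1}^{-s}\jp{z}^{-s}|\nabla_z\G_{\xi,\xi_1}(z)|^{-1}\,d\mu(z)\,d\xi_1$,
which is written out identically (up to notational rearrangement) in the paper's \cref{lem:C1} and \cref{lem:C3}. Your observation that $\Cc_3$ happens to reduce to the \emph{same} surface integral that appears as the middle term in \cref{lem:C1} is correct, and the remark about the symmetry $\G_{\xi,\xi_1}(z) = \G_{\xi,\xi_1}(\xi+\xi_1-z)$ cleanly handles $\Cc_4$ by relabeling.

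The one genuine gap is in the citation. \cref{lem:C1} is stated under the hypotheses of \cref{prop:C1}, namely $0\leq\gamma<\min\{s-2,1\}$, whereas \cref{prop:C3} (the statement you are proving) allows the larger range $0\leq\gamma<s-2$. When $s>3$, your appeal to \cref{lem:C1} therefore only covers $\gamma<1$ and leaves the interval $1\leq\gamma<s-2$ uncovered. The restriction $\gamma<1$ in \cref{lem:C1} comes from the reduction of the \emph{original} weight $\jp{\xi}^{s+\gamma}\jp{\xi_1}^{-s}\jp{\xi_2}^{-s}\jp{\xi_3}^{-s}$ to the displayed surface-integral form (the first $\lesssim$), not from the finiteness of the surface integral itself; but as the lemma is stated, you cannot quote the bound $\leq M$ outside that range. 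The paper avoids this by introducing a separate \cref{lem:C3}, stated for the full range $0\leq\gamma<s-2$ and citing \cite[Proposition 3.3]{GIT2020} rather than Proposition 3.1. To repair your proof you should either cite that result directly or verify that the surface integral is bounded for $\gamma<s-2$; the rest of your argument then goes through unchanged.
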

\begin{proof}
Similarly to as in the proof of \cref{prop:C2}, we reduce to the boundedness of the bilinear operator
\begin{equation}
\Qc[g,h] \coloneqq \int_{(\R^3)^2}d\uxi_{1;2}\d(\om+\om_1-\om_2-\om(\xi+\xi_1-\xi_2))g(\xi_1)h(\xi_2),
\end{equation}
where we have recycled the notation $\Qc$. Indeed,
\begin{align}
\jp{\xi}^{s+\gamma}\Cc_3[f,g,h](\xi) &= \jp{\xi}^{s+\gamma}f(\xi)\int_{(\R^3)^2}d\uxi_{1;2}\d(\om+\om_1-\om_2-\om(\xi+\xi_1-\xi_2)) g(\xi_1) h(\xi_2) \nn\\
&= \jp{\xi}^{s+\gamma}f(\xi)\Qc[g,h](\xi).
\end{align}
Consequently,
\begin{equation}
\|\Cc_3[f,g,h]\|_{L_{s+\gamma}^\infty} \leq \|f\|_{L_s^\infty} \|\Qc[g,h]\|_{L_\gamma^\infty}.
\end{equation}
Now observe that
\begin{align}
\jp{\xi}^{\gamma}\Qc[g,h](\xi) &= \int_{(\R^3)^2}d\uxi_{1;2}\d(\om+\om_1-\om_2-\om(\xi+\xi_1-\xi_2))\frac{\jp{\xi}^\gamma}{\jp{\xi_1}^s\jp{\xi_2}^s} \jp{\xi_1}^sg(\xi_1) \jp{\xi_2}^sh(\xi_2),
\end{align}
which implies that
\begin{equation}
\|\Qc[g,h]\|_{L_\gamma^\infty} \leq \|g\|_{L_s^\infty} \|h\|_{L_s^\infty} \sup_{\xi\in\R^3}\int_{(\R^3)^2}d\uxi_{1;2}\d(\om+\om_1-\om_2-\om(\xi+\xi_1-\xi_2))\frac{\jp{\xi}^\gamma}{\jp{\xi_1}^s\jp{\xi_2}^s}.
\end{equation}
The desired conclusion now follows from \cref{lem:C3} below.
\end{proof}

\begin{lemma}
\label{lem:C3}
For $s,\gamma$ as above, There is a constant $M>0$ such that
\begin{equation}
\begin{split}
&\sup_{\xi\in\R^3}\int_{(\R^3)^2}d\uxi_{1;2}\d(\om+\om_1-\om_2-\om(\xi+\xi_1-\xi_2))\frac{\jp{\xi}^\gamma}{\jp{\xi_1}^s\jp{\xi_2}^s} \\
&\lesssim_{s,\gamma} \sup_{\xi\in\R^3} \int_{\R^3}\int_{S_{\xi,\xi_1}} \frac{\jp{\xi}^\gamma \jp{z}^{-s}\jp{\xi_1}^{-s}}{|\nabla_z\G_{\xi,\xi_1}(z)|} d\mu(z)d\xi_1\leq M.
\end{split}
\end{equation}
\end{lemma}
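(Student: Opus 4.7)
The plan is to reproduce the two-step structure already used in \cref{lem:C1} and \cref{lem:C2}: the first inequality is a direct application of the coarea formula, while the second is the genuine analytic estimate on the resonant manifold $S_{\xi,\xi_1}$, which exploits the spherical symmetry of $\om$ together with the lower bound $\Omega'(x)\geq c_1 x$ from assumption \ref{item:asmplb}.

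For the first inequality, I freeze $\xi,\xi_1$ and treat $z\coloneqq \xi_2$ as the integration variable. By \eqref{eq:Gdef}, the energy Dirac factor equals $\d(\G_{\xi,\xi_1}(z))$. Writing
\[
\d(\G_{\xi,\xi_1}(z)) = \frac{\nabla_z\G_{\xi,\xi_1}}{|\nabla_z\G_{\xi,\xi_1}|^{2}}\cdot \nabla_z\d(\G_{\xi,\xi_1}(z))
\]
and integrating by parts---equivalently, invoking the coarea formula---collapses the $\xi_2$-integration onto $S_{\xi,\xi_1}$ with Jacobian $|\nabla_z\G_{\xi,\xi_1}|^{-1}$. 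Since the weights $\jp{\xi}^\gamma$ and $\jp{\xi_1}^{-s}$ are independent of $z$, and $\jp{\xi_2}^{-s}=\jp{z}^{-s}$, the first $\lesssim_{s,\gamma}$ drops out.

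For the uniform bound I exploit the spherical symmetry $\om(k)=\Omega(|k|)$. Setting $\rho \coloneqq \xi+\xi_1$ and decomposing $z = \alpha\rho + q$ with $q\perp \rho$, the phase $\G_{\xi,\xi_1}(z)$ depends only on $(\alpha, |q|)$. A short calculation using assumption \ref{item:asmplb} yields $\p_{|q|}\G_{\xi,\xi_1}>0$ for $|q|>0$, so $S_{\xi,\xi_1}$ meets each affine hyperplane $\{\alpha\rho+q : q\perp \rho\}$ in either the empty set or a circle of radius $r_\alpha$ depending smoothly on $\alpha$ via the implicit function theorem. This produces both an explicit formula for $d\mu$ (with circumferential Jacobian $r_\alpha$) and the lower bound $|\nabla_z\G_{\xi,\xi_1}|\gtrsim |q|\bigl(\Omega'(|\rho-z|)/|\rho-z|+\Omega'(|z|)/|z|\bigr)$. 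The surface integral thereby reduces to a one-dimensional integral in $\alpha$; dyadically decomposing the remaining $\xi_1$-integration into the regions $|\xi_1|\lesssim |\xi|$ and $|\xi_1|\gg|\xi|$, the hypotheses $s>2$ and $\gamma<s-2$ secure integrability in each region.

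The main obstacle will be the quantitative control near the degenerate locus $|q|=0$, where $|\nabla_z\G_{\xi,\xi_1}|^{-1}$ is singular; one must verify that the vanishing of $r_\alpha$ in the surface measure exactly balances this singularity. This is the content of Steps 2--3 of \cite[Proposition 3.2]{GIT2020}, adapted here to the slightly different weight configuration $\jp{\xi_1}^{-s}\jp{z}^{-s}$ in place of $\jp{z}^{-s}\jp{\xi+\xi_1-z}^{-s}$. The present weight is in fact more convenient, because $\jp{\xi_1}^{-s}$ carries no $z$-dependence and factors out of the surface integral, leaving an expression that is pointwise dominated by the one treated in \cref{lem:C2}. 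I therefore expect the uniform bound $\leq M$ to follow by essentially the same estimates, with $\jp{\xi_1}^{-s}$ and $s>2$ providing the required decay at infinity in $\xi_1$.
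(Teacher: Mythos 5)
Your first inequality, collapsing the energy delta to the surface measure on $S_{\xi,\xi_1}$ via the coarea formula, is correct, and your general plan---parametrizing $S_{\xi,\xi_1}$ by $(\alpha,|q|)$ in the decomposition $z=\alpha\rho+q$, $q\perp\rho$, using assumption \ref{item:asmplb} to control $|\nabla_z\G_{\xi,\xi_1}|$, and balancing the gradient singularity at $|q|=0$ against the vanishing of the circumferential Jacobian $r_\alpha$---is exactly the mechanism in \cite{GIT2020}. The paper itself gives no argument here and simply points to \cite[Proposition 3.3]{GIT2020}, so any correct sketch along these lines would do.

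However, there is a genuine flaw in the shortcut you propose at the end. You assert that after factoring out the $z$-independent weight $\jp{\xi_1}^{-s}$, the remaining surface integrand is \emph{pointwise dominated} by the one in \cref{lem:C2}. The domination runs the other way: your residual integrand is $\jp{z}^{-s}/|\nabla_z\G_{\xi,\xi_1}(z)|$, while the \cref{lem:C2} integrand is $\jp{z}^{-s}\jp{\xi+\xi_1-z}^{-s}/|\nabla_z\G_{\xi,\xi_1}(z)|$, and since $\jp{\xi+\xi_1-z}^{-s}\leq 1$ the latter is \emph{smaller}. So a bound on the \cref{lem:C2} surface integral tells you nothing about yours, and the reduction collapses. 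The natural comparison is in fact to \cref{lem:C1}, whose right-hand side is \emph{identical} to the right-hand side here ($\jp{\xi}^\gamma\jp{\xi_1}^{-s}\jp{z}^{-s}/|\nabla_z\G_{\xi,\xi_1}|$); but even that only covers $\gamma<\min\{s-2,1\}$, whereas \cref{lem:C3} is claimed for the full range $\gamma<s-2$. For that reason the paper cites \cite[Proposition 3.3]{GIT2020} rather than Proposition 3.1 or 3.2, and one must actually carry out the dyadic/regional analysis of the $\xi_1$-integral with the weight $\jp{\xi_1}^{-s}$ in hand---there is no free ride via a pointwise majorization.
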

\begin{proof}
See the proof of \cite[Proposition 3.3]{GIT2020}.
\end{proof}

\subsection{Spectral hierarchy}
Having reviewed the structure of the nonlinear collision operator $\Cc$ as the restriction of bounded trilinear operators, we transition to understanding the structure of the spectral hierarchy collision operator $\Cr$ as a bounded linear operator in the scale of function spaces $\Lr_{s,\ep}$ introduced in \eqref{eq:hnorm}.

\label{ssec:colhier}
\subsubsection{Decomposition}
Let us start by defining, for $m\in\N$ and integer $1\leq j\leq m$, the operator
\begin{equation}
\begin{split}
\Cr_{j;m}[f^{(m+2)}](\uxi_{1;m}) &\coloneqq \int_{(\R^3)^3}d\uxi_{2;4}'\d(\xi_1+\xi_2'-\xi_3'-\xi_4')\d(\om_{234}^1)\Bigg(f^{(m+2)}(\uxi_{1;j-1},\xi_{j+1;m},\uxi_{2;4}') \\
&\ph+ f^{(m+2)}(\uxi_{1;m}, \xi_3',\xi_4') - f^{(m+2)}(\uxi_{1;m}, \xi_2',\xi_4') - f^{(m+2)}(\uxi_{1;m}, \xi_2',\xi_3')\Bigg),
\end{split}
\end{equation}
where
\begin{equation}
\om_{234}^1 \coloneqq \om(\xi_1) + \om(\xi_2') - \om(\xi_3')-\om(\xi_4').
\end{equation}
If $F=(f^{(m)})_{m=1}^\infty$, then it follows that
\begin{equation}
\label{eq:Crdcomp}
\Cr[F]^{(m)} = \sum_{j=1}^m \Cr_{j;m}[f^{(m+2)}].
\end{equation}
Similar to the decomposition \eqref{eq:Cdcomp} for $\Cc$, we can write
\begin{equation}
\label{eq:Crjdcomp}
\Cr_{j;m} = \Cr_{1,j;m} + \Cr_{2, j;m} - \Cr_{3, j;m} - \Cr_{4, j;m},
\end{equation}
where
\begin{align}
\Cr_{1,j;m}[f^{(m+2)}](\uxi_{1;m}) &\coloneqq \int_{(\R^3)^3}d\uxi_{2;4}'\d(\xi_1+\xi_2'-\xi_3'-\xi_4')\d(\om_{234}^1) f^{(m+2)}(\uxi_{1;j-1},\uxi_{j+1;m},\uxi_{2;4}'), \label{eq:C1hi}\\
\Cr_{2,j;m}[f^{(m+2)}](\uxi_{1;m}) &\coloneqq \int_{(\R^3)^3}d\uxi_{2;4}'\d(\xi_1+\xi_2'-\xi_3'-\xi_4')\d(\om_{234}^1)f^{(m+2)}(\uxi_{1;m}, \xi_3',\xi_4'),\label{eq:C2hi}\\
\Cr_{3,j;m}[f^{(m+2)}](\uxi_{1;m}) &\coloneqq \int_{(\R^3)^3}d\uxi_{2;4}'\d(\xi_1+\xi_2'-\xi_3'-\xi_4')\d(\om_{234}^1)f^{(m+2)}(\uxi_{1;m}, \xi_2',\xi_4'), \label{eq:C3hi}\\
\Cr_{4,j;m}[f^{(m+2)}](\uxi_{1;m}) &\coloneqq \int_{(\R^3)^3}d\uxi_{2;4}'\d(\xi_1+\xi_2'-\xi_3'-\xi_4')\d(\om_{234}^1)f^{(m+2)}(\uxi_{1;m}, \xi_2',\xi_3'). \label{eq:C4hi}
\end{align}
The fact $\Cr_{2,j;m}, \Cr_{3,j;m}, \Cr_{4,j;m}$ are independent of $j$, while $\Cr_{1,j;m}$ is not, reflects that in the latter, three variables of $f^{(m+2)}$ are integrated out, while in the former, only two variables of $f^{(m+2)}$ are integrated out.

\begin{remark}
Our notation $\uxi_{1;j-1}=(\xi_1,\ldots,\xi_{j-1})$ is slightly abusive, given that if $j=1$, $\uxi_{1;0}$ is nonsensical. In this case, as with all cases where the range does not make sense, $\uxi_{1;j-1}$ should be understood as vacuous.
\end{remark}

\subsubsection{Boundedness of $\Cr_{1, j;m}$}
Let us first consider the case $m=1$. In analogy to \cref{prop:C1}, we want to show that if $f^{(3)} \in \L_{s,3}^\infty$ for suitable $s\geq 0$, then $\Cr_{1, 1;1}[f^{(3)}] \in \L_{s+\gamma,1}^\infty$, for suitable $\gamma\geq 0$, and
\begin{equation}
\label{eq:C1hi_bnd}
\|\Cr_{1, 1;1}[f^{(3)}]\|_{\L_{s+\gamma,1}^\infty} \lesssim_{s,\gamma} \|f^{(3)}\|_{\L_{s,3}^\infty}.
\end{equation}
Writing
\begin{equation}
\jp{\xi_1}^{s+\gamma}\Cr_{1, 1;1}[f^{(3)}](\xi_1) = \int_{(\R^3)^3}d\uxi_{2;4}'\d(\xi_1+\xi_2'-\xi_3'-\xi_4')\d(\om_{234}^1) \frac{\jp{\xi_1}^{s+\gamma}}{\jp{\xi_2'}^s\jp{\xi_3'}^s \jp{\xi_4'}^s} \jp{\xi_2'}^s\jp{\xi_3'}^s\jp{\xi_4'}^sf^{(3)}(\xi_{2;4}'),
\end{equation}
we see that
\begin{equation}
\|\Cr_{1, 1;1}[f^{(3)}]\|_{\L_{s+\gamma,1}^\infty} \leq  \|f^{(3)}\|_{\L_{s,3}^\infty}\sup_{\xi_1\in\R^3} \int_{(\R^3)^3}d\uxi_{2;4}'\d(\xi_1+\xi_2'-\xi_3'-\xi_4')\d(\om_{234}^1) \frac{\jp{\xi_1}^{s+\gamma}}{\jp{\xi_2'}^s\jp{\xi_3'}^s \jp{\xi_4'}^s}.
\end{equation}
Applying \cref{lem:C1} with $(\xi,\xi_1,\xi_2,\xi_3)$ replaced by $(\xi_1,\xi_2',\xi_3',\xi_4')$, we see that \eqref{eq:C1hi_bnd} holds provided that $s>2$ and $0\leq\gamma<s-2$. With this warm-up, we are now ready to prove our main proposition for $\Cr_{1,j;m}$.

\begin{prop}
\label{prop:C1hi}
For $s>2$ and $0\leq \gamma<s-2$, we have that
\begin{equation}
\|\jp{\xi_j}^{\gamma}\Cr_{1, j;m}[f^{(m+2)}]\|_{\L_{s,m}^\infty} \lesssim_{s,\gamma} \|f^{(m+2)}\|_{\L_{s,m+2}^\infty}, \qquad \forall m\in\N \text{ and } 1\leq j\leq m.
\end{equation}
\end{prop}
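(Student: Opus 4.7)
My plan is to reduce this estimate to the single-variable case already treated in the warm-up computation immediately preceding the proposition, which in turn rests on Lemma 3.2. The key structural observation is that in $\Cr_{1,j;m}[f^{(m+2)}](\uxi_{1;m})$ the resonance Dirac factors and the integration in $\uxi_{2;4}'$ only engage the $j$-th coordinate $\xi_j$, so all remaining coordinates $\{\xi_k\}_{k\neq j}$ act as passive spectators that merely label arguments of $f^{(m+2)}$. This tensor structure means the estimate factorizes into a single-variable resonant-integral bound (essentially the $m=j=1$ warm-up) and an $L^\infty$-in-the-spectators extraction of the weighted norm of $f^{(m+2)}$.

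Concretely, I would multiply and divide the integrand by $\jp{\xi_2'}^s\jp{\xi_3'}^s\jp{\xi_4'}^s$, insert the spectator weights $\prod_{k\neq j}\jp{\xi_k}^s$ next to the matching arguments of $f^{(m+2)}$, and recognize the resulting weighted integrand as bounded pointwise by $\|f^{(m+2)}\|_{\L_{s,m+2}^\infty}$. After pulling this norm outside the integral, the remaining task is to bound
\[
\sup_{\eta\in\R^3}\int_{(\R^3)^3}d\uxi_{2;4}'\,\d(\eta+\xi_2'-\xi_3'-\xi_4')\,\d\bigl(\om(\eta)+\om(\xi_2')-\om(\xi_3')-\om(\xi_4')\bigr)\,\frac{\jp{\eta}^{s+\gamma}}{\jp{\xi_2'}^s\jp{\xi_3'}^s\jp{\xi_4'}^s},
\]
which is precisely the quantity controlled by Lemma 3.2 after the relabeling $(\xi,\xi_1,\xi_2,\xi_3)\mapsto(\eta,\xi_2',\xi_3',\xi_4')$. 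The resulting constant depends only on $s$ and $\gamma$, and crucially not on $m$ or $j$.

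I do not expect a genuine obstacle here, since the analytic heart---the uniform-in-$\xi$ resonant-manifold integral---has already been isolated in Lemma 3.2, and what remains is essentially tensorization bookkeeping. The only point worth emphasizing is the uniformity of the constant in $m$ and $j$: this is precisely what later permits the $m$-fold sum $\sum_{j=1}^m\Cr_{j;m}$ appearing in $\Cr[F]^{(m)}$ to be absorbed by the $\ep^m$ weighting in the hierarchy norm $\|\cdot\|_{\Lr_{s,\ep}^\infty}$ and by the factorial gains $1/j!$ in the Duhamel series.
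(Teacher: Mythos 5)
Your proposal is correct and matches the paper's proof in all essential respects. The paper also reduces to the resonant-manifold estimate in \cref{lem:C1} by freezing the spectator variables (introducing the notation $g_{\uxi_{2;m}}^{(3)}$ for the frozen-slice function, reducing to $j=1$ by symmetry of $f^{(m+2)}$), multiplying and dividing by the weights $\jp{\xi_2'}^s\jp{\xi_3'}^s\jp{\xi_4'}^s$, and taking suprema to extract $\|f^{(m+2)}\|_{\L_{s,m+2}^\infty}$; your direct insertion of the spectator weights $\prod_{k\neq j}\jp{\xi_k}^s$ is the same computation phrased without the intermediate frozen-slice notation.
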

\begin{proof}
By symmetry of $f^{(m+2)}$ under exchange of mode labels, we may assume without loss of generality that $j=1$. For $\uxi_{2;m}\in (\R^3)^{m-1}$ fixed, let us define
\begin{equation}
g_{\uxi_{2;m}}^{(3)}(\xi_2',\xi_3',\xi_4') \coloneqq  f^{(m+2)}(\uxi_{2;m},\xi_2',\xi_3',\xi_4').
\end{equation}
Evidently, $g_{\uxi_{2;m}}^{(3)}$ is continuous on $(\R^3)^3$ by our assumption that $f^{(m+2)} \in \L_{s,m+2}^\infty$ and
\begin{align}
&\sup_{\uxi_{2;m}\in (\R^3)^{m-1}} \jp{\xi_2}^s\cdots\jp{\xi_m}^s\|g_{\uxi_{2;m}}^{(3)}\|_{\L_{s,m}^\infty} \nn\\
&\leq \sup_{(\uxi_{2;m},\uxi_{2;4}')\in (\R^3)^{m-1} \times (\R^3)^3}\jp{\xi_2'}^s\jp{\xi_3'}^s \jp{\xi_4'}^s\jp{\xi_2}^s\cdots\jp{\xi_m}^s |f^{(m+2)}(\uxi_{2;m},\uxi_{2;4}')| \nn\\
&= \|f^{(m+2)}\|_{\L_{s,m+2}^\infty}. \label{eq:gbnd}
\end{align}
With this notation, we can write
\begin{align}
&\jp{\xi_1}^{s+\gamma}\Cr_{1,1;m}[f^{(m+2)}](\uxi_{1;m})\nn \\
&= \int_{(\R^3)^3}d\uxi_{2;4}'\d(\xi_1+\xi_2'-\xi_3'-\xi_4')\d(\om_{234}^1) \frac{\jp{\xi_1}^{s+\gamma}}{\jp{\xi_2'}^s \jp{\xi_3'}^s\jp{\xi_4'}^s} \jp{\xi_2'}^s\cdots\jp{\xi_4'}^s f^{(m+2)}(\uxi_{2;m},\xi_2',\xi_3',\xi_4') \nn\\
&=\int_{(\R^3)^3}d\uxi_{2;4}'\d(\xi_1+\xi_2'-\xi_3'-\xi_4')\d(\om_{234}^1) \frac{\jp{\xi_1}^{s+\gamma}}{\jp{\xi_2'}^s \jp{\xi_3'}^s\jp{\xi_4'}^s} \jp{\xi_2'}^s\jp{\xi_3'}^s\jp{\xi_4'}^s g_{\uxi_{2;m}}^{(3)}(\uxi_{2;4}').
\end{align}
Consequently,
\begin{align}
\|\jp{\xi_1}^\gamma \Cr_{1, 1;m}[f^{(m+2)}]\|_{\L_{s,m}^\infty} &\leq \sup_{\uxi_{2;m}\in (\R^3)^{m-1}} \jp{\xi_2}^s\cdots\jp{\xi_m}^s \|g_{\uxi_{2;m}}^{(3)}\|_{\L_{s,m}^\infty} \nn\\
&\ph\times \sup_{\xi_1\in\R^3}\int_{(\R^3)^3}d\uxi_{2;4}'\d(\xi_1+\xi_2'-\xi_3'-\xi_4')\d(\om_{234}^1) \frac{\jp{\xi_1}^{s+\gamma}}{\jp{\xi_2'}^s \jp{\xi_3'}^s\jp{\xi_4'}^s} \nn\\
&\lesssim_{s,\gamma} \|f^{(m+2)}\|_{\L_{s,m+2}^\infty},
\end{align}
where we use \cref{lem:C1} and the bound \eqref{eq:gbnd} to obtain the ultimate line. Thus, the proof of the proposition is complete.
\end{proof}

\subsubsection{Boundedness of $\Cr_{2, j;m}$}
We next show the analogue of \cref{prop:C2} for the linear operator $\Cr_{2,j;m}$ defined in \eqref{eq:C2hi}. To warm up, we first consider the case $m=1$. Observe that
\begin{align}
\Cr_{2,1;1}[f^{(3)}](\xi_1) &= \int_{(\R^3)^3}d\uxi_{2;4}'\d(\xi_1+\xi_2'-\xi_3'-\xi_4')\d(\om_{234}^1) f^{(3)}(\xi_1,\xi_3',\xi_4') \nn\\
&=\int_{(\R^3)^2}d\uxi_{2;3}'\d(\om_1+\om_2' - \om_3' - \om(\xi_1+\xi_2'-\xi_3')) f^{(3)}(\xi_1,\xi_3',\xi_1+\xi_2'-\xi_3'),
\end{align}
where we have introduced the notation $\om_j'\coloneqq \om(\xi_j')$. For $\xi_1\in\R^3$ fixed, let us introduce the notation
\begin{equation}
g_{\xi_1}^{(2)}(\xi_3',\xi_4') \coloneqq f^{(3)}(\xi_1,\xi_3',\xi_4').
\end{equation}
It is straightforward to check that
\begin{equation}
\label{eq:g2bnd}
\sup_{\xi_1\in\R^3} \jp{\xi_1}^s\|g_{\xi_1}^{(2)}\|_{\L_{s,2}^\infty} \leq \|f^{(3)}\|_{\L_{s,3}^\infty}.
\end{equation}
With this notation, we can write
\begin{align}
\Cr_{2,1;1}[f^{(3)}](\xi_1) &= \int_{(\R^3)^2}d\uxi_{2;3}'\d(\om_1+\om_2' - \om_3' - \om(\xi_1+\xi_2'-\xi_3'))g_{\xi_1}^{(2)}(\xi_3',\xi_1+\xi_2'-\xi_3') \nn\\
&=\int_{(\R^3)^2}d\uxi_{2;3}'\d(\om_1+\om_2' - \om_3' - \om(\xi_1+\xi_2'-\xi_3'))\jp{\xi_3'}^{-s}\jp{\xi_1+\xi_2'-\xi_3'}^{-s} \nn\\
&\ph\jp{\xi_3'}^{s}\jp{\xi_1+\xi_2'-\xi_3'}^s g_{\xi_1}^{(2)}(\xi_3',\xi_1+\xi_2'-\xi_3').
\end{align}
By taking absolute values of both sides, it follows from this identity that for each $\xi_1\in\R^3$,
\begin{equation}
\begin{split}
\jp{\xi_1}^{s+\gamma}|\Cr_{2,1;1}[f^{(3)}](\xi_1)| &\leq \jp{\xi_1}^s\sup_{\xi_2',\xi_3'\in\R^3}\jp{\xi_3'}^{s}\jp{\xi_1+\xi_2'-\xi_3'}^s |g_{\xi_1}^{(2)}(\xi_3',\xi_1+\xi_2'-\xi_3')| \\
&\ph \times \jp{\xi_1}^\gamma \int_{(\R^3)^2}d\uxi_{2;3}'\d(\om_1+\om_2' - \om_3' - \om(\xi_1+\xi_2'-\xi_3'))\jp{\xi_3'}^{-s}\jp{\xi_1+\xi_2'-\xi_3'}^{-s}.
\end{split}
\end{equation}
Using the bound \eqref{eq:g2bnd}, we have that
\begin{equation}
\sup_{\xi_1\in\R^3} \jp{\xi_1}^s\sup_{\xi_2',\xi_3'\in\R^3}\jp{\xi_3'}^{s}\jp{\xi_1+\xi_2'-\xi_3'}^s |g_{\xi_1}^{(2)}(\xi_3',\xi_1+\xi_2'-\xi_3')| \leq \|f^{(3)}\|_{\L_{s,3}^\infty},
\end{equation}
and using \cref{lem:C2}, we have that
\begin{equation}
\sup_{\xi_1\in\R^3}\jp{\xi_1}^\gamma \int_{(\R^3)^2}d\uxi_{2;3}'\d(\om_1+\om_2' - \om_3' - \om(\xi_1+\xi_2'-\xi_3'))\jp{\xi_3'}^{-s}\jp{\xi_1+\xi_2'-\xi_3'}^{-s} < \infty.
\end{equation}
Putting together these estimates, we conclude that
\begin{equation}
\|\jp{\xi_1}^\gamma\Cr_{2,1;1}[f^{(3)}]\|_{\L_{s,1}^\infty} \lesssim_{s,\gamma}\|f^{(3)}\|_{\L_{s,3}^\infty}.
\end{equation}

Having warmed up, we are now ready to prove the generalization to the case where $m\geq 1$.
\begin{prop}
\label{prop:C2hi}
For $s>2$ and $0\leq \gamma < s-2$, we have that
\begin{equation}
\|\jp{\xi_j}^\gamma \Cr_{2, j;m}[f^{(m+2)}]\|_{\L_{s,m}^\infty} \lesssim_{s,\gamma} \|f^{(m+2)}\|_{\L_{s,m+2}^\infty}, \qquad \forall m\in\N \ \text{and} \ 1\leq j\leq m.
\end{equation}
\end{prop}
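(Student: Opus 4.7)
The plan is to mirror the tensorization strategy used for $\Cr_{1,j;m}$ in Proposition 3.4 and the $m=1$ warm-up preceding the statement. The key observation is that in $\Cr_{2,j;m}[f^{(m+2)}]$, all $m$ original coordinates $\uxi_{1;m}$ appear as spectator arguments of $f^{(m+2)}$; only $\xi_j$ (and the new variables $\xi_2',\xi_3',\xi_4'$) are dynamically coupled through the resonance constraints. First, by the assumed permutation symmetry of $f^{(m+2)}$, relabel to reduce to $j=1$; both sides of the claimed inequality are invariant under such a relabeling.

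Next, for each fixed $\uxi_{1;m}\in(\R^3)^m$, introduce the two-variable slice
\[
g_{\uxi_{1;m}}^{(2)}(\xi_3',\xi_4') \coloneqq f^{(m+2)}(\uxi_{1;m},\xi_3',\xi_4'),
\]
whose $\L_{s,2}^\infty$ norm satisfies the tensorized bound
\[
\sup_{\uxi_{1;m}\in(\R^3)^m} \jp{\xi_1}^s\cdots\jp{\xi_m}^s\,\|g_{\uxi_{1;m}}^{(2)}\|_{\L_{s,2}^\infty} \leq \|f^{(m+2)}\|_{\L_{s,m+2}^\infty}.
\]
Integrating out $\xi_4' = \xi_1+\xi_2'-\xi_3'$ against the momentum delta, the operator becomes
\[
\Cr_{2,1;m}[f^{(m+2)}](\uxi_{1;m}) = \int_{(\R^3)^2}d\xi_2'd\xi_3'\,\d\bigl(\om_1+\om_2'-\om_3'-\om(\xi_1+\xi_2'-\xi_3')\bigr)\,g_{\uxi_{1;m}}^{(2)}(\xi_3',\xi_1+\xi_2'-\xi_3').
\]
Insert the compensating weights $\jp{\xi_3'}^{\pm s}\jp{\xi_1+\xi_2'-\xi_3'}^{\pm s}$ and take absolute values. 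The spectator factors $\jp{\xi_2}^s\cdots\jp{\xi_m}^s$ do not depend on the integration variables $\xi_2',\xi_3'$, so they pull outside the integral unchanged, combining with $\jp{\xi_1}^s$ on the remaining slice to reproduce the full product of weights defining $\|f^{(m+2)}\|_{\L_{s,m+2}^\infty}$.

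Third, the residual expression to control is exactly the resonant-manifold integral
\[
\jp{\xi_1}^\gamma\int_{(\R^3)^2}d\xi_2'd\xi_3'\,\d\bigl(\om_1+\om_2'-\om_3'-\om(\xi_1+\xi_2'-\xi_3')\bigr)\,\jp{\xi_3'}^{-s}\jp{\xi_1+\xi_2'-\xi_3'}^{-s},
\]
which is uniformly bounded in $\xi_1\in\R^3$ by Lemma 3.3 under the hypotheses $s>2$, $0\leq\gamma<s-2$, just as in the $m=1$ case. Combining the slice bound with this estimate and taking suprema in $\uxi_{1;m}$ yields the desired inequality. The main potential obstacle is ensuring that the tensorization introduces no $m$-dependent combinatorial constant—which would spoil later summability against $\ep^m$ in $\Lr_{s,\ep}^\infty$. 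This is avoided entirely by the product structure of the $\L_{s,m+2}^\infty$ norm together with the fact that the integrand depends on $\uxi_{2;m}$ only through $g_{\uxi_{1;m}}^{(2)}$: the spectator weights pass through multiplicatively and cancel exactly against the corresponding factors in the weighted norm of $f^{(m+2)}$, leaving a constant depending only on $s$ and $\gamma$.
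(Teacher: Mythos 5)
Your proposal is correct and follows essentially the same tensorization argument as the paper's proof: reduce to $j=1$ by permutation symmetry, freeze $\uxi_{1;m}$ to define the two-variable slice $g_{\uxi_{1;m}}^{(2)}$, pass the spectator weights $\jp{\xi_2}^s\cdots\jp{\xi_m}^s$ outside the collision integral, and control the remaining resonant-manifold integral with \cref{lem:C2}. The only superficial discrepancy is in the referenced proposition/lemma numbers, but the lemmas you invoke match in content.
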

\begin{proof}
Since $\Cc_{2,1;m} = \cdots = \Cc_{2,m;m}$ and $f^{(m+2)}$ is symmetric under exchange of mode labels, it suffices to consider $j=1$. Observe that
\begin{align}
\Cr_{2,1;m}[f^{(m+2)}](\uxi_{1;m}) &= \int_{(\R^3)^3}d\uxi_{2;4}'\d(\xi_1+\xi_2'-\xi_3'-\xi_4')\d(\om_{234}^1)f^{(m+2)}(\uxi_{1;m},\xi_3',\xi_4') \nn\\
&=\int_{(\R^3)^3}d\uxi_{2;3}'\d(\om_1+\om_2'-\om_3'-\om(\xi_1+\xi_2'-\xi_3'))\jp{\xi_3'}^{-s}\jp{\xi_1+\xi_2'-\xi_3'}^{-s} \nn\\
&\ph\jp{\xi_3'}^s \jp{\xi_1+\xi_2'-\xi_3'}^{s} g_{\uxi_{1;m}}^{(2)}(\xi_3',\xi_1+\xi_2'-\xi_3'), \label{eq:C2id}
\end{align}
where for fixed $\uxi_{1;m}\in (\R^3)^m$, we have defined
\begin{equation}
g_{\uxi_{1;m}}^{(2)}(\xi_3',\xi_4') \coloneqq f^{(m+2)}(\uxi_{1;m},\xi_3',\xi_4').
\end{equation}
By the same argument as used to show \eqref{eq:g2bnd}, we have that $g_{\uxi_{1;m}}^{(2)} \in \L_{s,2}^{\infty}$ for every $\uxi_{1;m}\in (\R^3)^{m}$ and
\begin{equation}
\label{eq:gbndC2}
\sup_{\uxi_{1;m}\in (\R^3)^m} \jp{\xi_1}^s\cdots\jp{\xi_m}^s\|g_{\uxi_{1;m}}^{(2)}\|_{\L_{s,2}^\infty} \leq \|f^{(m+2)}\|_{\L_{s,m+2}^\infty}.
\end{equation}
Taking the absolute value of both sides of identity \eqref{eq:C2id}, it follows that
\begin{equation}
\begin{split}
&\jp{\xi_1}^{s+\gamma}\jp{\xi_2}^s\cdots\jp{\xi_m}^s |\Cr_{2,1;m}[f^{(m+2)}](\uxi_{1;m})|\\
&\leq \jp{\xi_1}^s\cdots\jp{\xi_m}^s \sup_{\xi_2',\xi_3'\in\R^3} \jp{\xi_3'}^s \jp{\xi_1+\xi_2'-\xi_3'}^s |g_{\uxi_{1;m}}^{(2)}(\xi_3',\xi_1+\xi_2'-\xi_3')| \\
&\ph \times \int_{(\R^3)^2}d\uxi_{2;3}'\d(\om_1+\om_2'-\om_3'-\om(\xi_1+\xi_2'-\xi_3'))\frac{\jp{\xi_1}^{\gamma}}{\jp{\xi_3'}^s\jp{\xi_1+\xi_2'-\xi_3'}^s}.
\end{split}
\end{equation}
By the bound \eqref{eq:gbndC2},
\begin{equation}
\sup_{\uxi_{1;m}\in (\R^3)^m} \jp{\xi_1}^s\cdots\jp{\xi_m}^s \sup_{\xi_2',\xi_3'\in\R^3} \jp{\xi_3'}^s \jp{\xi_1+\xi_2'-\xi_3'}^s|g_{\uxi_{1;m}}^{(2)}(\xi_3',\xi_1+\xi_2'-\xi_3')| \leq \|f^{(m+2)}\|_{\L_{s,m+2}^\infty},
\end{equation}
and by \cref{lem:C2},
\begin{equation}
\sup_{\xi_1\in \R^3} \int_{(\R^3)^2}d\uxi_{2;3}'\d(\om_1+\om_2'-\om_3'-\om(\xi_1+\xi_2'-\xi_3'))\frac{\jp{\xi_1}^{\gamma}}{\jp{\xi_3'}^s\jp{\xi_1+\xi_2'-\xi_3'}^s} < \infty.
\end{equation}
Putting together these two estimates, we arrive at the desired conclusion.
\end{proof}

\subsubsection{Boundedness of $\Cr_{3, j;m}, \Cr_{4, j;m}$}
By symmetry under swapping $\xi_{3}'$ and $\xi_4'$ and the fact that $f^{(m+2)}$ is invariant under permutation of mode labels, it suffices to only consider $\Cr_{3, j;m}$ in this sub-subsection. As before, we warm up by considering the case $m=1$. Observe that
\begin{align}
\Cr_{3, 1;1}[f^{(3)}](\xi_1) &= \int_{(\R^3)^2}d\uxi_{2;3}'\d(\om_1+\om_2'-\om_3'-\om(\xi+\xi_2'-\xi_3')) f^{(3)}(\xi_1,\xi_2',\xi_3') \nn\\
&=\int_{(\R^3)^2}d\uxi_{2;3}'\d(\om_1+\om_2'-\om_3'-\om(\xi+\xi_2'-\xi_3')) \jp{\xi_2'}^{-s}\jp{\xi_3'}^{-s} \jp{\xi_2'}^s \jp{\xi_3'}^s g_{\xi_1}^{(2)}(\xi_2',\xi_3'), \label{eq:C3rhswu}
\end{align}
where we reuse the notation $g_{\xi_1}^{(2)}$ from the last sub-subsection. Taking the absolute of both sides of \eqref{eq:C3rhswu}, we find that
\begin{equation}
\begin{split}
\jp{\xi_1}^{s+\gamma} |\Cr_{3, 1;1}[f^{(3)}](\xi_1)| &\leq \jp{\xi_1}^s\|g_{\xi_1}^{(2)}\|_{\L_{s,2}^\infty}\\
&\ph\times \jp{\xi_1}^\gamma\int_{(\R^3)^2} d\uxi_{2;3}'\d(\om_1+\om_2'-\om_3'-\om(\xi+\xi_2'-\xi_3')) \jp{\xi_2'}^{-s}\jp{\xi_3'}^{-s}.
\end{split}
\end{equation}
Combining the norm bound \eqref{eq:g2bnd} with \cref{lem:C3}, which implies that
\begin{equation}
\sup_{\xi_1\in\R^3}\jp{\xi_1}^\gamma\int_{(\R^3)^2} d\uxi_{2;3}'\d(\om_1+\om_2'-\om_3'-\om(\xi_1+\xi_2'-\xi_3')) \jp{\xi_2'}^{-s}\jp{\xi_3'}^{-s} < \infty,
\end{equation}
we conclude that
\begin{equation}
\|\jp{\xi_1}^\gamma \Cr_{3, 1;1}[f^{(3)}]\|_{\L_{s,1}^\infty} \lesssim_{s,\gamma} \|f^{(3)}\|_{\L_{s,3}^\infty}.
\end{equation}
Having warmed up, we are ready to prove the general case $m\geq 1$.

\begin{prop}
\label{prop:C3hi}
For $s>2$ and $0\leq\gamma<s-2$, we have that
\begin{equation}
\|\jp{\xi_j}^\gamma \Cr_{3, j;m}[f^{(m+2)}]\|_{\L_{s,m}^\infty} \lesssim_{s,\gamma} \|f^{(m+2)}\|_{\L_{s,m+2}^\infty}, \qquad \forall m\in\N \ \text{and} \ 1\leq j\leq m
\end{equation}
and similarly for when $\Cr_{4,j;m}$ replaces $\Cr_{3, j;m}$.
\end{prop}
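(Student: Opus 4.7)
The plan is to mimic closely the proofs of \cref{prop:C1hi} and \cref{prop:C2hi}, reducing the boundedness of the hierarchy operators $\Cr_{3,j;m}$ and $\Cr_{4,j;m}$ to the already-established boundedness of the multilinear form $\Cc_3$ (equivalently, to \cref{lem:C3}) by a tensorization/slicing argument. The warm-up computation for $m=1$ given just before the statement already exhibits all the analytic ingredients; the task is to dress it with the frozen extra variables $\xi_2,\ldots,\xi_m$.

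First I would observe that, just as in \cref{prop:C2hi}, the operators $\Cr_{3,1;m},\ldots,\Cr_{3,m;m}$ differ only in which coordinate of $\uxi_{1;m}$ is singled out, and $f^{(m+2)}$ is symmetric under permutation of mode labels. Hence it suffices to treat the case $j=1$. Moreover, the symmetry $\xi_3' \leftrightarrow \xi_4'$, combined with the symmetry of $f^{(m+2)}$ in its last two arguments, immediately reduces $\Cr_{4,1;m}$ to $\Cr_{3,1;m}$, so only the latter needs to be estimated.

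Next I would freeze the first $m$ variables and define, for each $\uxi_{1;m}\in(\R^3)^m$, the two-mode slice
\[
g_{\uxi_{1;m}}^{(2)}(\xi_2',\xi_3') \coloneqq f^{(m+2)}(\uxi_{1;m},\xi_2',\xi_3'),
\]
which is continuous on $(\R^3)^2$ and satisfies the uniform-in-slice bound
\[
\sup_{\uxi_{1;m}\in(\R^3)^m}\jp{\xi_1}^s\cdots\jp{\xi_m}^s\,\|g^{(2)}_{\uxi_{1;m}}\|_{\L_{s,2}^\infty} \leq \|f^{(m+2)}\|_{\L_{s,m+2}^\infty},
\]
by the same direct computation that gave \eqref{eq:gbndC2}. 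I would then rewrite
\[
\Cr_{3,1;m}[f^{(m+2)}](\uxi_{1;m}) = \int_{(\R^3)^2}d\uxi_{2;3}'\,\d(\om_1+\om_2'-\om_3'-\om(\xi_1+\xi_2'-\xi_3'))\,\jp{\xi_2'}^{-s}\jp{\xi_3'}^{-s}\,\jp{\xi_2'}^s\jp{\xi_3'}^s\,g^{(2)}_{\uxi_{1;m}}(\xi_2',\xi_3'),
\]
multiply by $\jp{\xi_1}^{s+\gamma}\jp{\xi_2}^s\cdots\jp{\xi_m}^s$, and take absolute values, pulling the supremum of the weighted $g^{(2)}_{\uxi_{1;m}}$ out of the integral.

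Finally, combining the slice bound above with \cref{lem:C3} applied to the remaining resonant integral, which gives
\[
\sup_{\xi_1\in\R^3}\,\jp{\xi_1}^{\gamma}\int_{(\R^3)^2}d\uxi_{2;3}'\,\d(\om_1+\om_2'-\om_3'-\om(\xi_1+\xi_2'-\xi_3'))\,\jp{\xi_2'}^{-s}\jp{\xi_3'}^{-s} < \infty
\]
under the assumptions $s>2$ and $0\leq\gamma<s-2$, yields the desired estimate for $\Cr_{3,j;m}$, and hence for $\Cr_{4,j;m}$ as well. The main (mild) obstacle is purely bookkeeping: one must confirm that the collapsing variable in the Dirac mass is indeed $\xi_4'=\xi_1+\xi_2'-\xi_3'$ (rather than $\xi_3'$ as in $\Cr_{2,j;m}$), so that the resulting two-dimensional resonant integral falls under the hypotheses of \cref{lem:C3} and not \cref{lem:C2}; this is precisely why the bound in this proposition only requires $\gamma<s-2$ without the additional constraint $\gamma<1$ seen in \cref{prop:C1hi}.
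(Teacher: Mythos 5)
Your proposal matches the paper's proof essentially step for step: reduce to $j=1$ by symmetry (and to $\Cr_{3,1;m}$ alone via the $\xi_3'\leftrightarrow\xi_4'$ swap), freeze $\uxi_{1;m}$ and define the two-mode slice $g^{(2)}_{\uxi_{1;m}}$, invoke the slice bound \eqref{eq:gbndC2}, and close by applying \cref{lem:C3} to the remaining resonant integral. The only quibble is a minor bookkeeping slip at the end: in $\Cr_{3,1;m}$ the variable that does not appear in the argument of $f^{(m+2)}$ is $\xi_3'$, so it is $\xi_3'$ (not $\xi_4'$) that should be collapsed via the momentum delta to land in the form covered by \cref{lem:C3}; after an innocuous relabeling this is exactly what the paper does, and it does not affect the validity of the argument.
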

\begin{proof}
Again by considerations of symmetry, we may assume without loss of generality that $j=1$. So, we need to consider the expression
\begin{align}
\Cr_{3,1;m}[f^{(m+2)}](\uxi_{1;m}) &=\int_{(\R^3)^2}d\uxi_{2;3}'\d(\om_1+\om_2'-\om_3'-\om(\xi_1+\xi_2'-\xi_3')) f^{(m+2)}(\uxi_{1;m},\xi_2',\xi_3') \nn\\
&=\int_{\R^3}d\uxi_{2;3}'\d(\om_1+\om_2'-\om_3'-\om(\xi_1+\xi_2'-\xi_3'))\jp{\xi_2'}^{-s}\jp{\xi_3'}^{-s} \jp{\xi_2'}^s\jp{\xi_3'}g_{\uxi_{1;m}}^{(2)}(\xi_2',\xi_3'). \label{eq:C3rhs}
\end{align}
Now taking the absolute value of both sides of identity \eqref{eq:C3rhs}, we find that for every $\uxi_{1;m}\in (\R^3)^m$,
\begin{equation}
|\Cr_{3, 1;m}[f^{(m+2)}](\uxi_{1;m})| \leq \|g_{\uxi_{1;m}}^{(2)}\|_{\L_{s,2}^\infty}\int_{(\R^3)^2}d\uxi_{2;3}'\d(\om_1+\om_2'-\om_3'-\om(\xi_1+\xi_2'-\xi_3'))\jp{\xi_2'}^{-s}\jp{\xi_3'}^{-s},
\end{equation}
which in turn implies that for every $\uxi_{1;m} \in (\R^3)^m$,
\begin{equation}
\label{eq:C3lhs}
\begin{split}
&\jp{\xi_1}^{s+\gamma} \jp{\xi_2}^s\cdots\jp{\xi_m}^s |\Cr_{3, 1;m}[f^{(m+2)}](\uxi_{1;m})| \\
&\leq \sup_{\uxi_{1;m}\in (\R^3)^m}\jp{\xi_1}^{s}\cdots\jp{\xi_m}^{s} \|g_{\uxi_{1;m}}^{(2)}\|_{\L_{s,2}^\infty}  \\
&\ph \times \sup_{\xi_1\in\R^3} \int_{(\R^3)^2}d\uxi_{2;3}'\d(\om_1+\om_2'-\om_3'-\om(\xi_1+\xi_2'-\xi_3'))\frac{\jp{\xi_1}^\gamma}{\jp{\xi_2'}^{s}\jp{\xi_3'}^{s}}.
\end{split}
\end{equation}
The desired conclusion now follows from combining the norm bound \eqref{eq:gbndC2} with \cref{lem:C3} and taking the supremum of the left-hand side of inequality \eqref{eq:C3lhs} over $\uxi_{1;m}\in (\R^3)^m$.
\end{proof}

\subsubsection{Boundedness of $\Cr$}
Given a hierarchy $F=(f^{(m)})_{m=1}^\infty$, the decompositions \eqref{eq:Crdcomp} and \eqref{eq:Crjdcomp} yield
\begin{equation}
\Cr[F]^{(m)} = \sum_{j=1}^m \paren*{\Cr_{1, j;m}[f^{(m+2)}] + \Cr_{2, j;m}[f^{(m+2)}] - \Cr_{3, j;m}[f^{(m+2)}] - \Cr_{4, j;m}[f^{(m+2)}]}.
\end{equation}
By the triangle inequality and \cref{prop:C1hi}, \cref{prop:C2hi}, \cref{prop:C3hi} applied with $\gamma=0$, we then arrive at the following lemma.

\begin{lemma}
\label{lem:Cjm}
For $s>2$, we have that for every $m\in\N$,
\begin{equation}
\sup_{1\leq j\leq m} \|\Cr_{j;m}[f^{(m+2)}]\|_{\L_{s,m}^\infty} \lesssim_s \|f^{(m+2)}\|_{\L_{s,m+2}^\infty}.
\end{equation}
\end{lemma}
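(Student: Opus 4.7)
My plan is to simply assemble the lemma out of the three boundedness propositions for the components of $\Cr_{j;m}$, using the decomposition
\begin{equation}
\Cr_{j;m} = \Cr_{1,j;m} + \Cr_{2,j;m} - \Cr_{3,j;m} - \Cr_{4,j;m}
\end{equation}
from \eqref{eq:Crjdcomp}. First, I would observe that since $s > 2$, the choice $\gamma = 0$ is admissible in all three propositions (\cref{prop:C1hi}, \cref{prop:C2hi}, \cref{prop:C3hi}), because the requirement $0 \leq \gamma < s - 2$ is satisfied. Moreover, with $\gamma = 0$, the weight $\jp{\xi_j}^\gamma = 1$, so each proposition yields directly a bound of the form
\begin{equation}
\|\Cr_{i,j;m}[f^{(m+2)}]\|_{\L_{s,m}^\infty} \lesssim_s \|f^{(m+2)}\|_{\L_{s,m+2}^\infty}, \qquad i \in \{1,2,3,4\}.
\end{equation}

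Next, I would apply the triangle inequality to the decomposition of $\Cr_{j;m}$ to get
\begin{equation}
\|\Cr_{j;m}[f^{(m+2)}]\|_{\L_{s,m}^\infty} \leq \sum_{i=1}^{4}\|\Cr_{i,j;m}[f^{(m+2)}]\|_{\L_{s,m}^\infty} \lesssim_s \|f^{(m+2)}\|_{\L_{s,m+2}^\infty}.
\end{equation}
Here the crucial point is that the implicit constant coming out of each of \cref{prop:C1hi}, \cref{prop:C2hi}, \cref{prop:C3hi} depends only on $s$ (and $\gamma = 0$), and is in particular \emph{independent} of both $j$ and $m$. This uniformity is visible from the proofs of those propositions: the $j$-dependence was reduced away using the symmetry of $f^{(m+2)}$ under permutation of mode labels (so one can assume $j = 1$), and the $m$-dependence was absorbed into the tensorized norm bound \eqref{eq:gbndC2} for the auxiliary functions $g^{(2)}_{\uxi_{1;m}}$ and $g^{(3)}_{\uxi_{2;m}}$, which ties the $m$-mode norm back cleanly to $\|f^{(m+2)}\|_{\L_{s,m+2}^\infty}$.

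Since the right-hand side above is independent of $j$, taking the supremum over $1 \leq j \leq m$ produces the claimed estimate. I expect no real obstacle here: the work has already been done in the preceding propositions, and the lemma is essentially a bookkeeping statement that packages them together with $\gamma = 0$. The only point that warrants care is verifying the $j$- and $m$-uniformity of the constants, which I would mention explicitly when invoking the propositions.
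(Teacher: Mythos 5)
Your proposal is correct and matches the paper's argument exactly: the paper likewise obtains \cref{lem:Cjm} by applying \cref{prop:C1hi}, \cref{prop:C2hi}, and \cref{prop:C3hi} with $\gamma=0$ to the decomposition \eqref{eq:Crjdcomp} and then invoking the triangle inequality. Your added remark about $j$- and $m$-uniformity of the constants is a sound observation that the paper leaves implicit.
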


With another application of the triangle inequality, \cref{lem:Cjm} implies the estimate
\begin{equation}
\|\Cr[F]^{(m)}\|_{\L_{s,m}^\infty} \lesssim_{s} m\|f^{(m+2)}\|_{\L_{s,m+2}^\infty}.
\end{equation}
Now suppose that $0<\ep'<\ep < 1$. Then
\begin{equation}
(\ep')^m\|\Cr[F]^{(m)}\|_{\L_{s,m}^\infty} \leq C_{s} (\ep')^m m\|f^{(m+2)}\|_{\L_{s,m+2}^\infty} \leq \frac{C_{s}}{\ep^2} (\frac{\ep'}{\ep})^m m \ep^{m+2}\|f^{(m+2)}\|_{\L_{s,m+2}^\infty},
\end{equation}
where $C_{s}>0$ is a constant depending only on the data $s$. Summing over the infinite range $m\geq 1$ and recalling the definition of the $\Lr_{s,\ep}^\infty$ norm from \eqref{eq:hnorm}, we arrive at the following proposition.

\begin{prop}
\label{prop:C}
For $s>2$, we have that
\begin{equation}
\label{eq:Closs}
\|\Cr[F]\|_{\Lr_{s,\ep'}^\infty} \lesssim_s \ep^{-2}\paren*{\sup_{m\in\N} m(\ep'\ep^{-1})^m} \|F\|_{\Lr_{s,\ep}^\infty}, \qquad \forall  0< \ep' < \ep < 1.
\end{equation}
\end{prop}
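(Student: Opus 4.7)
The plan is to leverage the preceding per-mode boundedness results and then accept a small loss from $\ep$ to $\ep'<\ep$ in the hierarchy parameter in order to absorb the resulting combinatorial factor of $m$ into an absolutely convergent geometric series.

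First, I would combine the decomposition \eqref{eq:Crdcomp} with the further splitting \eqref{eq:Crjdcomp} into the four pieces $\Cr_{i,j;m}$ and invoke \cref{prop:C1hi}, \cref{prop:C2hi}, and \cref{prop:C3hi} with $\gamma=0$. This is precisely the content of \cref{lem:Cjm}, and two applications of the triangle inequality---first over $i\in\{1,2,3,4\}$, then over $j\in\{1,\dots,m\}$---yield the single-mode bound
\begin{equation}
\|\Cr[F]^{(m)}\|_{\L_{s,m}^\infty}\lesssim_s m\,\|f^{(m+2)}\|_{\L_{s,m+2}^\infty}.
\end{equation}
Multiplying through by $(\ep')^m$ and rewriting $(\ep')^m=\ep^{-2}(\ep'/\ep)^m\,\ep^{m+2}$ makes the right-hand side match the $\ep^{m+2}\|f^{(m+2)}\|_{\L_{s,m+2}^\infty}$ summand appearing in the definition \eqref{eq:hnorm} of $\|F\|_{\Lr_{s,\ep}^\infty}$. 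Summing over $m\geq 1$ and pulling $\sup_{m\in\N}m(\ep'/\ep)^m$ out of the sum then gives \eqref{eq:Closs} directly.

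The only point worth commenting on is the finiteness of $\sup_{m\in\N}m(\ep'/\ep)^m$: since $\ep'/\ep\in(0,1)$, the sequence decays geometrically and attains its maximum near $m\sim-1/\log(\ep'/\ep)$, so merely requiring $\ep'<\ep$ suffices, with no further smallness imposed on either parameter. I do not anticipate any real obstacle here; the estimate is really a bookkeeping step that packages the mechanism underlying the Chen-Pavlovi\'{c}-type norms $\Lr_{s,\ep}^\infty$. The loss $\ep\to\ep'$ is, of course, not innocuous in subsequent applications---it means \eqref{eq:Closs} cannot be iterated inside $\Cr^k$ without some care---but that difficulty is deferred to the Duhamel-series analysis in \cref{sec:WP}, where the trade-off is balanced via a carefully chosen decreasing sequence of hierarchy parameters together with the factorial $1/k!$ decay in \eqref{eq:Duhs}.
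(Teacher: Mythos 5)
Your proposal is correct and follows essentially the same route as the paper: decompose $\Cr$ via \eqref{eq:Crdcomp}--\eqref{eq:Crjdcomp}, apply \cref{prop:C1hi}--\cref{prop:C3hi} with $\gamma=0$ to obtain \cref{lem:Cjm}, use the triangle inequality to pick up the factor $m$, then rewrite $(\ep')^m=\ep^{-2}(\ep'/\ep)^m\ep^{m+2}$ and sum. The observation about where $\sup_m m(\ep'/\ep)^m$ is attained and the remark on the non-iterability of the losing estimate are accurate but not needed for the proof itself.
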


We emphasize that \eqref{eq:Closs} is a losing estimate in the sense that we have decreased the parameter $\ep'$ in the norm on the left-hand side compared to the parameter $\ep$ in the norm on the right-hand side in order to compensate for the linear growth of the $m$ factor. Overcoming this loss will be a key issue as we attempt to prove well-posedness of the spectral hierarchy in the next section.

\section{Well-posedness for WKE hierarchy}
\label{sec:WP}
We now have all the ingredients necessary to show the well-posedness of the spectral hierarchy \eqref{eq:WKE_hier}, ultimately proving \cref{thm:main}. As commented in \cref{ssec:intropf} of the introduction, a key ingredient to our proof is the iterated Duhamel expansion. We write the spectral hierarchy \eqref{eq:WKE_hier} in integral form
\begin{equation}
\label{eq:Duh}
F(t) = F_0 + \int_0^t \Cr[F(\tau)]d\tau,
\end{equation}
where $F=(f^{(m)})_{m=1}^\infty$ and $F_0 = (f_0^{(m)})_{m=1}^\infty$. Now a few observations are in order. First, if $F\in C([0,T]; \Lr_{s,\ep}^\infty)$, for some $s>2$ and $0<\ep<1$, satisfies the equation \eqref{eq:Duh}, then since by \cref{prop:C}
\begin{equation}
\|\Cr[F(\tau)]\|_{\Lr_{s,\ep'}^\infty} \lesssim_{s,\ep,\ep'} \|F(\tau)\|_{\Lr_{s,\ep}^\infty}, \qquad \forall \tau\in [0,T],
\end{equation}
for any $0<\ep'<\ep<1$, the fundamental theorem of calculus implies that $F\in C^1([0,T]; \Lr_{s,\ep'}^\infty)$.  Consequently, given any $k\in\N$, we have that $F\in C^k([0,T];\Lr_{s,\ep'}^\infty)$. Second, under the same assumptions on $F$, the collision operator $\Cr$ commutes with integration in time. Indeed, this follows readily from Fubini-Tonelli. Thus, we can obtain the \emph{iterated Duhamel expansion}
\begin{equation}
\label{eq:Duhexp}
\begin{split}
F(t) &=\sum_{k=0}^{j} \frac{t^k}{k!}\Cr^k[F_0] + \int_0^t\int_0^{t_1}\cdots\int_0^{t_j}\Cr^{j+1}[F(t_{j+1})]dt_{j+1}\cdots dt_2dt_1.
\end{split}
\end{equation}
To ease the burden of notation, we define the \emph{Duhamel iterates}
\begin{equation}
\label{eq:Duhit}
\Du_j(F_0,t) \coloneqq  \sum_{k=0}^j\frac{t^k}{k!}\Cr^k[F_0],
\end{equation}
so that with this notation, equation \eqref{eq:Duhexp} becomes
\begin{equation}
\label{eq:Duhexpnot}
F(t_0) = \Du_j(F_0,t_0) + \int_0^{t_0}\cdots \int_0^{t_j}\Cr^{j+1}[F(t_{j+1})]dt_{j+1}\cdots dt_{1}, \qquad \forall j\in\N_0.
\end{equation}
Unpacking the definition of $\Cr^j$ yields the component-wise formula
\begin{equation}
\label{eq:Duhitm}
\Du_j(F_0,t)^{(m)} = \sum_{k=0}^j\frac{t^k}{k!}\sum_{\ul{\mu}_k \in \A_k^{(m)}} \Cr_{\mu_1;m}\cdots\Cr_{\mu_k; m+2(k-1)}[f_0^{(m+2k)}],
\end{equation}
where the summation is over all tuples $\ul{\mu}_k=(\mu_1,\ldots,\mu_k)$ belonging to the set
\begin{equation}
\label{eq:Adef}
\A_k^{(m)} \coloneqq \{\umu_k = (\mu_1,\ldots,\mu_k) : 1\leq \mu_r\leq m+2(r-1) \enspace \forall 1\leq r\leq k\}.
\end{equation}

\begin{remark}
\label{rem:Acard}
For later use, we note that the cardinality of the set $\A_k^{(m)}$ is
\begin{equation}
\prod_{r=1}^k (m+2r-2),
\end{equation}
which for fixed $k$, is of size $O(m^k)$ as $m\rightarrow\infty$, and for fixed $m$, is of size $O( (m+2k)!)$ as $k\rightarrow\infty$.
\end{remark}

\subsection{Convergence of Duhamel series}
\label{ssec:WPcon}
The goal of this subsection is to prove the convergence of the Duhamel series \eqref{eq:Duhs}. The main result is the following proposition.
\begin{prop}
\label{prop:Duhcon}
Let $F_0= (f_0^{(m)})_{m=1}^\infty \in \Lr_{s,\ep_1}^\infty$ for some $s>2$ and $\ep_1>0$. There exists a constant $C_s>0$ such that for any parameters $\ep_2, T>0$ satisfying $\ep_2e^{TC_s\ep_1^{-2}}\ep_1^{-1} < 1$ and $C_s T\ep_1^{-2} < 1$, the series
\begin{equation}
F(t)\coloneqq \sum_{k=0}^\infty \frac{t^k}{k!}\Cr^k[F_0]
\end{equation}
converges absolutely in $C([-T,T]; \Lr_{s,\ep_2}^\infty)$ and is a solution to equation \eqref{eq:WKE_hier} on $[0,T]$. Moreover,
\begin{equation}
\sup_{0\leq |t|\leq T} \|F(t)\|_{\Lr_{s,\ep_2}^\infty} < \paren*{\frac{\ep_2e^{TC_s\ep_1^{-2}}\ep_1^{-1}}{1-\ep_2 e^{TC_s\ep_1^{-2}}\ep_1^{-1}}  + \frac{(C_sT\ep_1^{-2})^2}{(1-C_sT\ep_1^{-2})} }\|F_0\|_{\Lr_{s,\ep_1}^\infty}.
\end{equation}
\end{prop}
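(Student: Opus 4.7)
The plan is to estimate the Duhamel series termwise in $\Lr_{s,\ep_2}^\infty$ using the explicit formula \eqref{eq:Duhitm} for the iterates $\Du_j(F_0,t)^{(m)}$ together with a generating-function identity to collapse the resulting double sum. First I would bound the composition $\Cr_{\mu_1;m}\cdots\Cr_{\mu_k;m+2(k-1)}[f_0^{(m+2k)}]$ in $\L_{s,m}^\infty$ by applying \cref{lem:Cjm} iteratively: the operator norm constant produced by that lemma is some $C_s>0$ independent of both $j$ and $m$, so the $k$-fold composition is controlled by $C_s^k\|f_0^{(m+2k)}\|_{\L_{s,m+2k}^\infty}$. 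Summing over $\ul\mu_k\in\A_k^{(m)}$ contributes the combinatorial factor $|\A_k^{(m)}|=\prod_{r=1}^k(m+2r-2)$ from \cref{rem:Acard}, and the definition of the $\Lr_{s,\ep_1}^\infty$ norm gives $\|f_0^{(m+2k)}\|_{\L_{s,m+2k}^\infty}\le\ep_1^{-(m+2k)}\|F_0\|_{\Lr_{s,\ep_1}^\infty}$. Multiplying by $\ep_2^m$ and summing over $m,k$ converts the full series estimate into the double sum
\[
\|F(t)\|_{\Lr_{s,\ep_2}^\infty}\le\|F_0\|_{\Lr_{s,\ep_1}^\infty}\sum_{m=1}^\infty\paren*{\frac{\ep_2}{\ep_1}}^{m}\sum_{k=0}^\infty\frac{(|t|C_s\ep_1^{-2})^k}{k!}\prod_{r=1}^k(m+2r-2).
\]

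Next I would recognize that $\frac{1}{k!}\prod_{r=1}^k(m+2r-2)=2^k\binom{m/2+k-1}{k}$ by the Pochhammer identity, so that the inner $k$-sum collapses via the binomial series $\sum_{k\geq 0}\binom{a+k-1}{k}y^k=(1-y)^{-a}$ (with $a=m/2$) to $(1-2|t|C_s\ep_1^{-2})^{-m/2}$, provided $2|t|C_s\ep_1^{-2}<1$. The elementary inequality $(1-y)^{-1/2}\le e^{y}$ on an appropriate interval then yields the effective bound $e^{2m|t|C_s\ep_1^{-2}}$ for each term, and the outer $m$-sum becomes geometric with ratio $\ep_2 e^{2|t|C_s\ep_1^{-2}}/\ep_1$. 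Absorbing the factor of $2$ into $C_s$ recovers the first summand in the stated bound, with convergence ensured by the hypothesis $\ep_2 e^{TC_s\ep_1^{-2}}\ep_1^{-1}<1$. The additional summand $(C_sT\ep_1^{-2})^2/(1-C_sT\ep_1^{-2})$ is produced by carefully tracking the discrepancy in the inequality $(1-y)^{-1/2}\le e^y$ (equivalently, by isolating the low-$k$ contributions of the binomial expansion and estimating them as their own geometric series), and its convergence is guaranteed by $C_sT\ep_1^{-2}<1$.

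Having established absolute convergence of the series in $C([-T,T];\Lr_{s,\ep_2}^\infty)$, continuity in $t$ is automatic from the uniform convergence of analytic partial sums. To verify that $F$ satisfies the mild formulation \eqref{eq:Duh}, I would exchange the infinite sum with the time integral---justified by Fubini--Tonelli and the uniform absolute bound already obtained---to compute
\[
\int_0^t\Cr[F(\tau)]\,d\tau=\sum_{k=0}^\infty\frac{1}{k!}\Cr^{k+1}[F_0]\int_0^t\tau^k\,d\tau=\sum_{k=1}^\infty\frac{t^k}{k!}\Cr^k[F_0]=F(t)-F_0.
\]

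The main obstacle is the combinatorial growth $|\A_k^{(m)}|=\prod_{r=1}^k(m+2r-2)$ in the iterated Duhamel formula, which is of size $O((m+2k)!/m!)$ in $k$ and is a polynomial of degree $k$ in $m$; in particular, this factor cannot be absorbed by the $1/k!$ weight alone, and a brute-force polynomial estimate would force prohibitively small parameters. The essential mechanism is the exact identity recognizing $|\A_k^{(m)}|/k!$ as a shifted binomial coefficient whose generating function in $k$ admits the closed form $(1-y)^{-m/2}$. This is precisely what allows the double sum over $(m,k)$ to factor into two geometric series, each tamed by one of the smallness hypotheses---the condition $\ep_2 e^{TC_s\ep_1^{-2}}\ep_1^{-1}<1$ handling the $m$-sum, and $C_sT\ep_1^{-2}<1$ handling the $k$-sum.
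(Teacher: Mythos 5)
Your approach is correct but takes a genuinely different route from the paper's. The paper controls the combinatorial factor $\prod_{r=1}^k(m+2r-2)$ by splitting the double sum over $(m,k)$ into the regimes $k\le m$ (where $\prod_{r=1}^k(m+2r-2)\le(3m)^k$, so the inner $k$-series is dominated by an exponential $e^{C_s'm|t|\ep_1^{-2}}$) and $k>m$ (where $\prod_{r=1}^k(m+2r-2)<(3k)^k$ and Robbins' factorial estimate tames $(3k)^k/k!$); those two regimes give rise to the two summands in the stated bound. You instead observe the exact identity $\prod_{r=1}^k(m+2r-2)/k!=2^k\binom{m/2+k-1}{k}$ and collapse the inner $k$-series in closed form via the generalized binomial series, obtaining $(1-2|t|C_s\ep_1^{-2})^{-m/2}$ when $2|t|C_s\ep_1^{-2}<1$; dominating this by $e^{cm|t|\ep_1^{-2}}$ (e.g.\ via $(1-y)^{-1}\le 1+2y\le e^{2y}$ for $0\le y\le 1/2$, applied with $y=2|t|C_s\ep_1^{-2}$) reduces everything to a single geometric series in $m$. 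This is cleaner---no case split, no Robbins' bounds---and in fact yields a strictly \emph{sharper} estimate, consisting of the first summand alone (after absorbing absolute constants into $C_s$). Your closing attempt to ``produce'' the second summand $(C_sT\ep_1^{-2})^2/(1-C_sT\ep_1^{-2})$ from the slack in $(1-y)^{-1/2}\le e^y$ is unnecessary and somewhat muddled: that term is an artifact of the paper's two-regime split, not something your argument needs to manufacture, and since your bound is smaller the stated inequality follows a fortiori. Two minor things to tighten in a full write-up: state explicitly the interval on which your exponential domination of $(1-y)^{-m/2}$ is valid and how that interval fits inside the hypothesis $C_sT\ep_1^{-2}<1$ after renaming constants, and when verifying the mild formulation note that commuting $\Cr$ with the infinite sum rests on the boundedness $\Lr_{s,\ep}^\infty\to\Lr_{s,\ep'}^\infty$ for $\ep'<\ep$ from \cref{prop:C}, exactly as the paper does when it pushes $\Cr_{\mu_1;m}$ inside the $j$-summation.
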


We prove \cref{prop:Duhcon} through a series of lemmas, starting with a bound for the $m$-mode component of collision iterates.
\begin{lemma}
\label{lem:Duhit}
For $s>2$, there exists a constant $C_s>0$ such that
\begin{equation}
\label{eq:Duhitt}
\|\Cr_{\mu_1;m}\cdots\Cr_{\mu_j; m+2j-2}[f_0^{(m+2j)}]\|_{\L_{s,m}^\infty} \leq C_s^{j} \|f_0^{(m+2j)}\|_{\L_{s,m+2j}^\infty}
\end{equation}
for every $m\in\N$, $j\in\N$, $\umu_j\in\A_j^{(m)}$. Consequently,
\begin{equation}
\label{eq:Cjmbnd}
\|\Cr^j[F_0]^{(m)}\|_{\L_{s,m}^\infty} \leq C_s^j \paren*{\prod_{r=1}^j m+2r-2} \|f_0^{(m+2j)}\|_{\L_{s,m+2j}^\infty}.
\end{equation}
\end{lemma}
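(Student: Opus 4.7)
The plan is to prove \eqref{eq:Duhitt} by composing a single uniform bound $j$ times, and then derive \eqref{eq:Cjmbnd} by expanding $\Cr^j$ as a sum over multi-indices in $\A_j^{(m)}$ and invoking the triangle inequality together with \eqref{eq:Duhitt}.

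The key single-step estimate is already contained in \cref{lem:Cjm}: there is a constant $C_s>0$ depending only on $s$ such that for every $n\in\N$, every $1\leq \mu\leq n$, and every $g\in \L_{s,n+2}^\infty$,
\begin{equation*}
\|\Cr_{\mu;n}[g]\|_{\L_{s,n}^\infty} \leq C_s \|g\|_{\L_{s,n+2}^\infty},
\end{equation*}
so $\Cr_{\mu;n}$ is a bounded linear operator $\L_{s,n+2}^\infty \to \L_{s,n}^\infty$ of norm at most $C_s$, uniformly in $(n,\mu)$. The only point to flag is that this bound does not require symmetry of the input: the symmetry reductions used in \cref{prop:C1hi}--\cref{prop:C3hi} only serve to relabel the distinguished index to $j=1$, and the same tensorization arguments together with \cref{lem:C1}--\cref{lem:C3} go through with $\xi_j$ playing the role of $\xi_1$ for any $j$.

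To establish \eqref{eq:Duhitt}, fix $m,j\in\N$ and $\umu_j\in\A_j^{(m)}$, and define recursively $h_j\coloneqq f_0^{(m+2j)}$ and $h_{r-1}\coloneqq \Cr_{\mu_r;m+2r-2}[h_r]$ for $r=j,\ldots,1$. Since $1\leq \mu_r\leq m+2r-2$ by the definition of $\A_j^{(m)}$, the single-step bound applied with $n=m+2r-2$ gives $\|h_{r-1}\|_{\L_{s,m+2r-2}^\infty}\leq C_s\|h_r\|_{\L_{s,m+2r}^\infty}$; iterating from $r=j$ down to $r=1$ yields $\|h_0\|_{\L_{s,m}^\infty}\leq C_s^j\|f_0^{(m+2j)}\|_{\L_{s,m+2j}^\infty}$, which is exactly \eqref{eq:Duhitt}. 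For \eqref{eq:Cjmbnd}, a straightforward induction on $j$ using the decomposition \eqref{eq:Crdcomp} produces the component-wise identity
\begin{equation*}
\Cr^j[F_0]^{(m)} = \sum_{\umu_j\in\A_j^{(m)}} \Cr_{\mu_1;m}\cdots\Cr_{\mu_j;m+2j-2}[f_0^{(m+2j)}],
\end{equation*}
so that the triangle inequality in $\L_{s,m}^\infty$, combined with \eqref{eq:Duhitt} applied to each summand and with the cardinality $|\A_j^{(m)}|=\prod_{r=1}^j(m+2r-2)$ recorded in \cref{rem:Acard}, gives \eqref{eq:Cjmbnd}.

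Since the proof is essentially pure bookkeeping once the uniform single-step bound is available, I do not anticipate any substantive obstacle; the only mild nuisance is the symmetry caveat noted above, which if desired can be bypassed by rerunning the proofs of \cref{prop:C1hi}--\cref{prop:C3hi} verbatim with $\xi_j$ in place of $\xi_1$, since no step of those proofs actually uses the symmetry of the input beyond this relabeling.
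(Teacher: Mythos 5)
Your proof is correct and takes essentially the same route as the paper's: the single-step bound from \cref{lem:Cjm} is composed $j$ times to give \eqref{eq:Duhitt}, and then \eqref{eq:Cjmbnd} follows by expanding $\Cr^j[F_0]^{(m)}$ as a sum over $\A_j^{(m)}$, applying the triangle inequality, and invoking the cardinality count from \cref{rem:Acard}. Your symmetry caveat is a point the paper glosses over but is worth flagging, since the intermediate iterates $h_r$ are not permutation-symmetric; you correctly note that the reduction to $j=1$ in \cref{prop:C1hi}--\cref{prop:C3hi} is purely notational and the tensorization arguments with \cref{lem:C1}--\cref{lem:C3} go through for any distinguished index.
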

\begin{proof}
The first bound follows from applying \cref{lem:Cjm} a total of $j$ times. Now unpacking the definition \eqref{eq:Chidef} of $\Cr$ and then expanding the $j$-fold contraction operator $\Cr^j$, we find that
\begin{equation}
\Cr^j[F_0]^{(m)} = \sum_{\umu_j \in \A_j^{(m)}} \Cr_{\mu_1;m}\cdots\Cr_{\mu_j;m+2j-2}[f_0^{(m+2j)}],
\end{equation}
where the reader will recall the definition of the tuple set $\A_j^{(m)}$ from \eqref{eq:Adef}. Using the triangle inequality and estimate \eqref{eq:Duhitt} gives
\begin{align}
\|(\Cr^j[F_0])^{(m)}\|_{\L_{s,m}^\infty} \leq C_s^j |\A_j^{(m)}| \|f_0^{(m+2j)}\|_{\L_{s,m+2j}^\infty}\leq C_s^j \paren*{\prod_{r=1}^j m+2r-2} \|f_0^{(m+2j)}\|_{\L_{s,m+2j}^\infty}, 
\end{align}
where the ultimate follows from \cref{rem:Acard}.
\end{proof}

Next, recalling from \eqref{eq:Duhit} the definition of $\Du_j(F_0,t)$, we see that for any $j,j'\geq 0$
\begin{align}
\|\Du_{j'}(F_0,t)^{(m)} - \Du_j(F_0,t)^{(m)}\|_{\L_{s,m}^\infty} &\leq \sum_{k=\min\{j,j'\}}^{\max\{j,j'\}} \frac{t^k}{k!} \|\Cr^k[F_0]^{(m)}\|_{\L_{s,m}^\infty} \nn\\
&\leq \sum_{k=\min\{j,j'\}}^{\max\{j,j'\}} \frac{C_s^k t^k}{k!} \paren*{\prod_{r=1}^k m+2r-2}\|f_0^{(m+2k)}\|_{\L_{s,m+2k}^\infty}, \label{eq:Dujm_bnd}
\end{align}
where the first line follows from the triangle inequality and the second line from \cref{lem:Duhit}. Let $0<\ep_2<\ep_1 < 1$, where $F_0\in \Lr_{s,\ep_1}^\infty$ as in the statement of \cref{prop:Duhcon} and $\ep_2$ is to be determined. We observe from the estimate \eqref{eq:Dujm_bnd} and remembering the definition \eqref{eq:hnorm} of the norm $\Lr_{s,\ep_2}^\infty$ that
\begin{align}
\|\Du_{j'}(F_0,t) -\Du_j(F_0,t)\|_{\Lr_{s,\ep_2}^\infty} &\leq \sum_{m=1}^\infty \ep_2^m\sum_{k=\min\{j,j'\}}^{\max\{j,j'\}} \frac{C_s^k t^k}{k!} \paren*{\prod_{r=1}^k m+2r-2} \|f_0^{(m+2k)}\|_{\L_{s,m+2k}^\infty}. \label{eq:Duhmaj}
\end{align}
Thus, if we can show that upon appropriately choosing $\ep_2$ and $t$, the preceding expression is finite for $j=0$ and $j'=\infty$, then we will have shown the convergence assertion in \cref{prop:Duhcon}.

\begin{lemma}
Let $s>2$. There exists a constant $C_s>0$, such that for $F_0=(f_0^{(m)})_{m=1}^\infty \in \Lr_{s,\ep_1}^\infty$ with $\ep_1>0$,
\begin{equation}
\begin{split}
&\sup_{|t|\leq T} \sum_{m=1}^\infty \ep_2^m\sum_{k=0}^\infty \frac{C_s^k t^k}{k!} \paren*{\prod_{r=1}^k m+2r-2} \|f_0^{(m+2k)}\|_{\L_{s,m+2k}^\infty} \\
&< \frac{\ep_2e^{TC_s\ep_1^{-2}}\ep_1^{-1}}{1-\ep_2 e^{TC_s\ep_1^{-2}}\ep_1^{-1}} \|F_0\|_{\Lr_{s,\ep_1}^\infty} + \frac{(C_sT\ep_1^{-2})^2}{(1-C_sT\ep_1^{-2})} \|F_0\|_{\Lr_{s,\ep_1}^\infty},
\end{split}
\end{equation}
provided that  $\ep_2e^{TC_s\ep_1^{-2}}\ep_1^{-1} < 1$ and $C_s T\ep_1^{-2} < 1$.
\end{lemma}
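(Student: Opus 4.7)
The plan is to absorb the initial datum into its norm, close the inner $k$-sum via a Pochhammer/negative-binomial identity, and then algebraically split the resulting geometric expression to match the announced two-term form.

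First, since each summand in the series defining $\|F_0\|_{\Lr_{s,\ep_1}^\infty}$ is dominated by the whole, I have the pointwise estimate $\|f_0^{(n)}\|_{\L_{s,n}^\infty} \leq \ep_1^{-n}\|F_0\|_{\Lr_{s,\ep_1}^\infty}$ for every $n\geq 1$. Applying this with $n=m+2k$, and writing $\alpha \coloneqq \ep_2/\ep_1$ and $\beta\coloneqq C_s t \ep_1^{-2}$, the quantity to estimate is majorized by $\|F_0\|_{\Lr_{s,\ep_1}^\infty}\sum_{m\geq 1}\alpha^m \sum_{k\geq 0}\frac{\beta^k}{k!}\prod_{r=1}^k(m+2r-2)$.

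Next, using the elementary inequality $m+2r-2\leq 2(m+r-1)$, valid for $m,r\geq 1$, the combinatorial factor becomes a rising factorial, so that $\frac{1}{k!}\prod_{r=1}^k(m+2r-2)\leq 2^k\binom{m+k-1}{k}$. The negative-binomial generating-function identity $\sum_{k\geq 0}\binom{m+k-1}{k}z^k=(1-z)^{-m}$ (for $|z|<1$) then collapses the inner sum, yielding the majorant $\|F_0\|_{\Lr_{s,\ep_1}^\infty}\sum_{m\geq 1}\alpha^m(1-2\beta)^{-m}$, valid once $2\beta<1$. This outer sum is itself geometric, summing to $\frac{\alpha(1-2\beta)^{-1}}{1-\alpha(1-2\beta)^{-1}}$ whenever $\alpha(1-2\beta)^{-1}<1$; alternatively, the sharper Pochhammer identity $\prod_{r=1}^k(m+2r-2)=2^k k!\binom{m/2+k-1}{k}$ gives the tighter replacement $(1-2\beta)^{-m/2}$, which can be used if one wants to loosen the smallness conditions on $\beta$.

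To extract the two-term structure, I would use the exponential majorant $(1-2\beta)^{-1}\leq e^{c\beta}$, valid for $\beta$ bounded away from $1/2$ with an absolute constant $c$ (absorbed into $C_s$), and split $(1-2\beta)^{-m}=e^{cm\beta}+\bigl[(1-2\beta)^{-m}-e^{cm\beta}\bigr]$. The first contribution, after summing the geometric series $\sum_{m\geq 1}(\alpha e^{c\beta})^m$, produces the leading term $\frac{\alpha e^{c\beta}}{1-\alpha e^{c\beta}}\|F_0\|_{\Lr_{s,\ep_1}^\infty}$, matching the first term of the target after renaming $C_s$. The positive remainder, via the expansion $-\log(1-2\beta)=2\beta+O(\beta^2)$, is $O(\beta^2)$ times a factor whose sum over $m\geq 1$ decouples from $m$ and reduces to the geometric tail $\sum_{j\geq 2}(C\beta)^j=\frac{(C\beta)^2}{1-C\beta}$, producing the second term $\frac{(C_sT\ep_1^{-2})^2}{1-C_sT\ep_1^{-2}}\|F_0\|_{\Lr_{s,\ep_1}^\infty}$. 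The main obstacle I anticipate is this last algebraic step: reshaping the single rational expression produced in the previous paragraph into the specific additive form requires a careful matching of constants so that both convergence thresholds, $\ep_2 e^{TC_s\ep_1^{-2}}\ep_1^{-1}<1$ and $C_sT\ep_1^{-2}<1$, emerge as precisely the conditions for the geometric and the residual series to converge.
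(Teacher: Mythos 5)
Your plan through the negative-binomial identity is a genuinely different and arguably cleaner route than the paper's. The paper splits the double sum into the regimes $k\le m$ and $k>m$: for $k\le m$ it uses $\prod_{r=1}^k(m+2r-2)\le(3m)^k$ so the $k$-sum exponentiates in $m$, while for $k>m$ it replaces $m$ by $k$, invokes Robbins' factorial bounds to tame $k^k/k!$, and then sums a geometric series in $m$. You instead observe that $m+2r-2\le 2(m+r-1)$ turns the product into a rising factorial, giving $\tfrac{1}{k!}\prod_{r=1}^k(m+2r-2)\le 2^k\binom{m+k-1}{k}$, and the negative-binomial generating function $\sum_{k\ge 0}\binom{m+k-1}{k}z^k=(1-z)^{-m}$ then collapses the entire $k$-sum in one stroke, handling both regimes simultaneously and producing the closed-form majorant $\|F_0\|_{\Lr_{s,\ep_1}^\infty}\sum_{m\ge1}\bigl(\alpha(1-2\beta)^{-1}\bigr)^m$ with $\alpha=\ep_2/\ep_1$, $\beta=C_st\ep_1^{-2}$. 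Together with $(1-2\beta)^{-1}\le e^{c\beta}$ for $\beta$ bounded away from $\tfrac12$, this is already (after an absolute rescaling of $C_s$, which the paper also performs, tacitly replacing $C_s$ by $C_s',C_s''$) the first term of the stated bound. Since the second term is nonnegative and your smallness conditions differ from the paper's only by such a rescaling, the lemma follows from steps 1--4 alone. What your approach buys is a closed form that avoids Robbins' inequality entirely; what it loses is a factor of two in the threshold, since $2\beta<1$ is more restrictive than $\beta<1$, but this is absorbed into the free constant.

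The one concrete error is in your final step. You write ``split $(1-2\beta)^{-m}=e^{cm\beta}+\bigl[(1-2\beta)^{-m}-e^{cm\beta}\bigr]$'' and call the bracket a ``positive remainder.'' But $c$ was chosen precisely so that $(1-2\beta)^{-1}\le e^{c\beta}$, hence $(1-2\beta)^{-m}\le e^{cm\beta}$ and the bracket is nonpositive; there is no positive remainder to sum. (Conversely $(1-2\beta)^{-1}\ge e^{2\beta}$ holds identically, so no single constant $c$ makes the bracket both positive and of size $O(\beta^2)$ uniformly in $m$.) The manufactured two-term decomposition is unnecessary: the single term $\tfrac{\alpha e^{c\beta}}{1-\alpha e^{c\beta}}\|F_0\|_{\Lr_{s,\ep_1}^\infty}$ already dominates the sum and is visibly $\le$ the right-hand side of the lemma, since the second summand there is nonnegative. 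You should drop step 5 and close the argument with that observation.
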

\begin{proof}
The idea of the proof is to divide the double summation over $k$ and $m$ into two pieces: $k\leq m$ and $k>m$. The first piece is relatively easy, as for fixed $m$, the summation over $k$ only grows exponentially in $m$. We can absorb this growth by choosing $\ep_2$ sufficiently small. The second piece is more difficult, as for fixed $k$, the summation over $m$ grows at most like $k^k$. To absorb this growth, we have to use the factorial denominator together with the remaining freedom to restrict time to an arbitrarily small interval. Without loss of generality, we may assume that $t\geq 0$.

For $k\leq m$, write
\begin{equation}
\label{eq:prod_mk}
\prod_{r=1}^k (m+2r-2)= m^k \prod_{r=1}^k (1+\frac{2r-2}{m}) \leq (3m)^k.
\end{equation}
Hence, we have that
\begin{align}
\sum_{k=0}^m \frac{C_s^k t^k\prod_{r=1}^k (m+2r-2)}{k!\ep_1^{2k}} &\leq \sum_{k=0}^m \frac{(tC_s'm\ep_1^{-2})^k}{k!} \leq \exp(tC_s'm \ep_1^{-2}) = \paren*{\exp(tC_s'\ep_1^{-2})}^{m},
\end{align}
where $C_s'>C_s$. This inequality implies that
\begin{align}
&\sum_{m=1}^\infty (\frac{\ep_2}{\ep_1})^{m}\sum_{k=0}^m\frac{C_s^k t^k\prod_{r=1}^k (m+2r-2)}{k!\ep_1^{2k}} \ep_1^{m+2k}\|f_0^{(m+2k)}\|_{\L_{s,m+2k}^\infty} \nn\\
&\leq \paren*{\sup_{m\in\N} \ep_1^{m+2k}\|f_0^{(m+2k)}\|_{\L_{s,m+2k}^\infty} } \sum_{m=1}^\infty \paren*{\ep_2e^{tC_s'\ep_1^{-2}}\ep_1^{-1}}^m \nn\\
&\leq \frac{\ep_2e^{tC_s'\ep_1^{-2}}\ep_1^{-1}}{1-\ep_2 e^{tC_s'\ep_1^{-2}}\ep_1^{-1}} \|F_0\|_{\Lr_{s,\ep_1}^\infty}, \label{eq:lowk}
\end{align}
provided that $\ep_2$ is sufficiently small so that
\begin{equation}
\ep_2e^{tC_s'\ep_1^{-2}}\ep_1^{-1} < 1.
\end{equation}

For $k>m$, we have the crude inequality
\begin{equation}
\label{eq:prod_km}
\prod_{r=1}^k (m+2r-2) < \prod_{r=1}^k (k+2r-2) < (3k)^k.
\end{equation}
Using Robbins' factorial bounds \cite{Robbins1955}
\begin{equation}
\label{eq:Robb}
\sqrt{2\pi}j^{j+\frac{1}{2}}e^{-j}e^{\frac{1}{12j+1}} \leq j! \leq \sqrt{2\pi}j^{j+\frac{1}{2}}e^{-j}e^{\frac{1}{12j}}, \qquad \forall j\in\N,
\end{equation}
it follows now that
\begin{equation}
\frac{\prod_{r=1}^k (m+2r-2)}{k!} < \frac{(3ek)^k}{\sqrt{2\pi}k^{k+\frac{1}{2}}e^{\frac{1}{12k+1}}} < \frac{(3e)^k}{\sqrt{2\pi k}}.
\end{equation}
From this bound, we obtain that there is a constant $C_s''>C_s$, such that
\begin{align}
\frac{C_s^k t^k\prod_{r=1}^k(m+2r-2)}{k!} &< (2\pi k)^{-1/2}\paren*{{C_s'' t}}^k.
\end{align}
Hence,
\begin{align}
&\sum_{m=1}^\infty \ep_2^m\sum_{k=m+1}^\infty \frac{C_s^kt^k\prod_{r=1}^k (m+2r-2)}{k!} \|f_0^{(m+2k)}\|_{\L_{s,m+2k}^\infty} \nn\\
&<\sum_{m=1}^\infty \frac{(C_s'' t\ep_1^{-2})^{m+1}}{(2\pi m)^{1/2}} \sum_{k=m+1}^\infty \ep_1^{m+2k}\|f_0^{(m+2k)}\|_{\L_{s,m+2k}^\infty} \nn\\
&\leq \|F_0\|_{\Lr_{s,\ep_1}^\infty} \sum_{m=1}^\infty \frac{(C_s'' t\ep_1^{-2})^{m+1}}{(2\pi m)^{1/2}} \nn\\
&< \frac{(C_s''t\ep_1^{-2})^2}{(1-C_s''t\ep_1^{-2})} \|F_0\|_{\Lr_{s,\ep_1}^\infty}, \label{eq:hik}
\end{align}
provided that $C_s''t\ep_1^{-2} < 1$.

Putting together the estimates \eqref{eq:lowk} and \eqref{eq:hik}, we have shown that
\begin{equation}
\label{eq:Dubnd}
\begin{split}
&\sum_{m=1}^\infty \ep_2^m\sum_{k=0}^\infty \frac{C_s^k t^k}{k!}\paren*{\prod_{r=1}^k(m+2k-2)} \|f_0^{(m+2k)}\|_{\Lr_{s,m+2k}^\infty} \\
&< \frac{\ep_2e^{tC_s'\ep_1^{-2}}\ep_1^{-1}}{1-\ep_2 e^{tC_s'\ep_1^{-2}}\ep_1^{-1}} \|F_0\|_{\Lr_{s,\ep_1}^\infty} + \frac{(C_s''t\ep_1^{-2})^2}{(1-C_s''t\ep_1^{-2})} \|F_0\|_{\Lr_{s,\ep_1}^\infty},
\end{split}
\end{equation}
provided that 
\begin{equation}
\label{eq:bal}
\ep_2e^{tC_s'\ep_1^{-2}}\ep_1^{-1} < 1 \quad \text{ and } \quad C_s''t\ep_1^{-2} < 1.
\end{equation}
To see that we can satisfy this constraint, recall that $\ep_1$ is part of the initial data and therefore fixed. We assume that $T>0$ is sufficiently small so that $\max\{C_s', C_s''\} T\ep_1^{-2}\leq 1/2$. We then choose $\ep_2<\ep_1 e^{-1/2}$.
\end{proof}

Lastly, we check that the absolutely convergent series
\begin{equation}
\lim_{j\rightarrow\infty}\Du_j(F_0, t) = \sum_{k=0}^\infty \frac{t^k}{k!}\Cr^k[F_0]
\end{equation}
indeed defines a solution to the spectral hierarchy \eqref{eq:WKE_hier}. To this end, let us denote the right-hand side above by $F(t)$. Recalling the identity \eqref{eq:Duhitm}, we have shown above that
\begin{equation}
F(t)^{(m)} = \sum_{k=0}^\infty\frac{t^k}{k!}\sum_{\umu_k\in\A_k^{(m)}} \Cr_{\mu_1;m}\cdots\Cr_{\mu_k; m+2(k-1)}[f_0^{(m+2k)}]
\end{equation}
converges uniformly in $\L_{s,m}^\infty$ on the interval $[-T,T]$. Therefore, we may differentiate with respect to time inside the summation to obtain that
\begin{align}
\p_t F(t)^{(m)} &= \sum_{j=0}^\infty \frac{t^j}{j!} \sum_{\umu_{j+1}\in\A_{j+1}^{(m)}} \Cr_{\mu_1;m}\cdots\Cr_{\mu_{j+1}; m+2j}[f_0^{(m+2j+2)}], \label{eq:inser}
\end{align}
where we have made the change of variable $j=k-1$. Unpacking the definition \eqref{eq:Chidef}, we have that
\begin{align}
\Cr^j[F_0]^{(m+2)} = \sum_{\umu_j' \in \A_{j}^{(m+2)}} \Cr_{\mu_1';m+2}\cdots\Cr_{\mu_j'; m+2j}[f_0^{(m+2j+2)}].
\end{align}
Setting $\mu_{i}' = \mu_{i+1}$ for $1\leq i\leq j$, we obtain
\begin{align}
\sum_{\umu_{j+1}\in\A_{j+1}^{(m)}} \Cr_{\mu_1;m}\cdots\Cr_{\mu_{j+1}; m+2j}[f_0^{(m+2j+2)}] &= \sum_{\mu_1=1}^m \sum_{\umu_j'\in\A_j^{(m+2)}} \Cr_{\mu_1;m}\Cr_{\mu_1';m+2}\cdots\Cr_{\mu_j';m+2j+2}[f_0^{(m+2j+2)}] \nn\\
&=\sum_{\mu_1=1}^m \Cr_{\mu_1;m}[\Cr^j[F_0]^{(m+2)}],
\end{align}
where we also use the linearity of $\Cr_{\mu_1;m}$ to obtain the ultimate line. Inserting this identity into equation \eqref{eq:inser} and using the linearity and continuity of $\Cr_{\mu_1;m}$ from the space $\L_{s,m+2}^\infty$ to the space $\L_{s,m}^\infty$ to move the $j$-summation inside the argument of $\Cr_{\mu_1;m}$, we arrive at
\begin{equation}
\p_t F(t)^{(m)} = \sum_{\mu_1=1}^m \Cr_{\mu_1;m}\left[\sum_{j=0}^\infty \frac{t^j}{j!} \Cr^j[F_0]^{(m+2)}\right] = \sum_{\mu_1=1}^m \Cr_{\mu_1;m}[F(t)^{(m+2)}].
\end{equation}
With this last step, we conclude the proof of \cref{prop:Duhcon}.

\subsection{Uniqueness}
\label{ssec:WPu}
Next, we show that for $s>2$ and $F_0\in \Lr_{s,\ep_1}^\infty$, the absolutely convergent Duhamel series is the unique solution in the class $C([0,T]; \Lr_{s,\ep_2}^\infty)$ to the spectral hierarchy equation \eqref{eq:WKE_hier}.

\begin{prop}
\label{prop:uniq}
For $s>2$ and $T>0$, suppose that $F,G\in C([0,T];\Lr_{s,\ep}^\infty)$ satisfy equation \eqref{eq:Duh}. Then $F=G$.
\end{prop}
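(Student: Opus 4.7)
The plan is to exploit the linearity of $\Cr$. Set $H \coloneqq F - G$; since $F$ and $G$ both satisfy \eqref{eq:Duh} with a common initial datum (the only setting in which the conclusion can hold), we have $H(0) = 0$ and $H(t) = \int_0^t \Cr[H(\tau)]d\tau$. Iterating exactly as in \eqref{eq:Duhexpnot} but with all polynomial terms vanishing, we obtain
\begin{equation}
H(t) = \int_0^t \int_0^{t_1} \cdots \int_0^{t_j} \Cr^{j+1}[H(t_{j+1})] \, dt_{j+1}\cdots dt_1, \qquad \forall j \in \N_0.
\end{equation}

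Next, I would project onto the $m$-th component, bring the $\L_{s,m}^\infty$ norm inside the $(j+1)$-fold time integral, and apply the estimate \eqref{eq:Cjmbnd} of \cref{lem:Duhit}. Setting $M \coloneqq \sup_{0 \leq \tau \leq T} \|H(\tau)\|_{\Lr_{s,\ep}^\infty}$, which is finite by hypothesis, and using the componentwise bound $\|H(\tau)^{(m+2j+2)}\|_{\L_{s,m+2j+2}^\infty} \leq M \ep^{-(m+2j+2)}$ together with the simplex volume $t^{j+1}/(j+1)!$, this yields
\begin{equation}
\|H(t)^{(m)}\|_{\L_{s,m}^\infty} \leq \frac{(tC_s)^{j+1}}{(j+1)!}\paren*{\prod_{r=1}^{j+1}(m+2r-2)}\frac{M}{\ep^{m+2j+2}}, \qquad \forall j \in \N_0.
\end{equation}
For \emph{fixed} $m$, I would bound $\prod_{r=1}^{j+1}(m+2r-2) \leq (m+2j)^{j+1}$ and $1/(j+1)! \lesssim e^{j+1}/(j+1)^{j+1}$ via Robbins' inequality \eqref{eq:Robb}, obtaining a right-hand side of order $(M/\ep^m) \cdot \paren*{e t C_s (m+2j)/((j+1)\ep^2)}^{j+1}$. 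Since $(m+2j)/(j+1) \to 2$ as $j \to \infty$, the geometric base tends to $2etC_s/\ep^2$, so the bound vanishes in the limit $j\to\infty$ provided $t < T_\ast \coloneqq \ep^2/(2eC_s)$. Therefore $H(t)^{(m)} = 0$ for every $m \in \N$ on $[0, T_\ast \wedge T]$, i.e., $F = G$ there.

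To upgrade to the full interval, I would run a standard continuation argument: if $T_\ast < T$, then $F(T_\ast) = G(T_\ast)$, and since both $F$ and $G$ remain in $C([0,T];\Lr_{s,\ep}^\infty)$, the preceding argument reapplies with initial time $T_\ast$ and initial datum $F(T_\ast)$ to yield $F = G$ on $[T_\ast, 2T_\ast \wedge T]$; after at most $\lceil T/T_\ast \rceil$ such steps we cover all of $[0,T]$. The principal technical obstacle is the familiar combinatorial tension between $\prod_{r=1}^{j+1}(m+2r-2)$, which grows like $(2j)^{j+1}$ for fixed $m$, and the factorial decay $1/(j+1)!$ from the time-simplex integration; the balance forces the short-time restriction $t < \ep^2/(2eC_s)$, and is the reason continuation is required. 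Crucially, unlike in \cref{prop:Duhcon}, the linearity of $\Cr$ allows us to apply \cref{lem:Duhit} directly to $H$ and analyze each $m$-mode separately at a fixed $\ep$, so we never need to trade $\ep$ for a smaller $\ep'$ inside the argument.
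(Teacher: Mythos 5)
Your proposal is correct and takes essentially the same route as the paper: by linearity reduce to zero initial datum, iterate Duhamel to kill the polynomial terms, estimate the $(j+1)$-fold remainder componentwise via \cref{lem:Duhit} and the time-simplex factor $t^{j+1}/(j+1)!$, crush the combinatorial product against the factorial using Robbins' bounds to get vanishing for $t$ below a threshold of order $\ep^2/C_s$, and finally extend to all of $[0,T]$ by locality (which the paper states as ``uniqueness is a local property'' and which you correctly unpack as a finite continuation argument). The only cosmetic difference is your bound $\prod_{r=1}^{j+1}(m+2r-2)\leq (m+2j)^{j+1}$ versus the paper's $(3(j+1))^{j+1}$ restricted to $j\geq m$; both yield the same asymptotic $(2j)^{j+1}/(j+1)!$ tension and a short-time threshold of the same scaling.
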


Since equation \eqref{eq:Duh} is linear in $F$, the proof of uniqueness reduces to showing that the zero solution is the unique solution starting from zero initial datum. We recall that any solution $F\in C([-T,T];\Lr_{s,\ep}^\infty)$ can be written as
\begin{equation}
F(t_0) = \Du_j(F_0,t_0) + \int_0^{t_0}\int_{0}^{t_1}\cdots\int_{0}^{t_j} \Cr^{j+1}[F(t_{j+1})]dt_{j+1}\cdots dt_2dt_1
\end{equation}
for any $j\geq 0$. Consequently, if $F_0 = 0$, then $\Du_j(F_0,\cdot)\equiv 0$, so that
\begin{equation}
F(t_0) = \int_0^{t_0}\int_{0}^{t_1}\cdots\int_{0}^{t_j} \Cr^{j+1}[F(t_{j+1})]dt_{j+1}\cdots dt_2dt_1.
\end{equation}
At the level of the hierarchy components, the preceding identity becomes
\begin{equation}
f^{(m)}(t_0) = \int_0^{t_0}\int_{0}^{t_1}\cdots\int_{0}^{t_j} \Cr^{j+1}[F(t_{j+1})]^{(m)}dt_{j+1}\cdots dt_2dt_1.
\end{equation}
Since uniqueness is a local property, it only remains for us to show that the term in the right-hand side vanishes as $j\rightarrow\infty$.

\begin{lemma}
\label{lem:Duherr}
If $F\in C([0,T];\Lr_{s,\ep}^\infty)$ is a solution to equation \eqref{eq:Duh}, for $s>2$ and $0<\ep<1$, and $T=T(s,\ep)>0$ is sufficiently small, then for every $m\in\N$,
\begin{equation}
\lim_{j\rightarrow\infty} \sup_{0\leq t_0\leq T} \int_0^{t_0}\int_{0}^{t_1}\cdots\int_{0}^{t_j} \|\Cr^{j+1}[F(t_{j+1})]^{(m)}\|_{\L_{s,m}^\infty}dt_{j+1}\cdots dt_2dt_1 = 0.
\end{equation}
\end{lemma}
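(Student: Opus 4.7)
The plan is to bound the iterated Duhamel remainder by the crude collision estimate already established, exchange the factorial growth of the time-integration denominator against the combinatorial growth of $\prod_{r=1}^{j+1}(m+2r-2)$ via Robbins' inequality, and then use the freedom to shrink $T$ to force geometric decay in $j$. Since $m$ is fixed in the statement, we do not need to sum over $m$ and so the $\ep^{-m}$ factor that will appear is harmless.

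First I apply the bound \eqref{eq:Cjmbnd} of \cref{lem:Duhit} (whose proof, by linearity, works just as well with $F(t_{j+1})$ in place of $F_0$) to obtain, for every $t_{j+1}\in [0,T]$,
\begin{equation*}
\|\Cr^{j+1}[F(t_{j+1})]^{(m)}\|_{\L_{s,m}^\infty}\leq C_s^{j+1}\paren*{\prod_{r=1}^{j+1}(m+2r-2)}\|f^{(m+2(j+1))}(t_{j+1})\|_{\L_{s,m+2(j+1)}^\infty}.
\end{equation*}
By the very definition \eqref{eq:hnorm} of the $\Lr_{s,\ep}^\infty$ norm, $\|f^{(m+2(j+1))}(t_{j+1})\|_{\L_{s,m+2(j+1)}^\infty}\leq \ep^{-m-2(j+1)}\|F(t_{j+1})\|_{\Lr_{s,\ep}^\infty}$, and so after performing the iterated time integration (which yields the factor $t_0^{j+1}/(j+1)!\leq T^{j+1}/(j+1)!$), the quantity we wish to send to $0$ is bounded by
\begin{equation*}
\frac{T^{j+1}C_s^{j+1}}{(j+1)!\,\ep^{m+2(j+1)}}\paren*{\prod_{r=1}^{j+1}(m+2r-2)}\sup_{0\leq t\leq T}\|F(t)\|_{\Lr_{s,\ep}^\infty}.
\end{equation*}

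The combinatorial factor $\prod_{r=1}^{j+1}(m+2r-2)$ grows super-exponentially in $j$, but this is exactly the same obstruction handled in the convergence proof above. Indeed, for $j+1>m$ one has $\prod_{r=1}^{j+1}(m+2r-2)<(3(j+1))^{j+1}$ as in \eqref{eq:prod_km}, and Robbins' inequality \eqref{eq:Robb} then gives $\prod_{r=1}^{j+1}(m+2r-2)/(j+1)!\leq(3e)^{j+1}/\sqrt{2\pi(j+1)}$. Substituting this into the previous display bounds the remainder by
\begin{equation*}
\frac{\ep^{-m}}{\sqrt{2\pi(j+1)}}\paren*{3eC_sT\ep^{-2}}^{j+1}\sup_{0\leq t\leq T}\|F(t)\|_{\Lr_{s,\ep}^\infty}.
\end{equation*}
The sup is finite by hypothesis, so choosing $T=T(s,\ep)>0$ so small that $3eC_sT\ep^{-2}<1$ forces the whole expression to tend to $0$ as $j\to\infty$.

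The one subtle point, and really the only substantive step beyond bookkeeping, is the simultaneous handling of the two regimes $j+1\leq m$ and $j+1>m$: the bound $\prod_{r=1}^{j+1}(m+2r-2)\leq(3(j+1))^{j+1}$ is only crude enough to be useful in the latter regime, but since $m$ is fixed and we are taking $j\to\infty$, eventually $j+1>m$ and we are done. The smallness requirement on $T$ here is of the same form $T\lesssim\ep^2$ that already appeared in \eqref{eq:bal}, so up to increasing the implicit constant the hypothesis on $T$ in \cref{prop:Duhcon} already suffices.
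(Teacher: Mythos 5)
Your proof is correct and follows essentially the same route as the paper's: bound the remainder via \cref{lem:Duhit} (estimate \eqref{eq:Cjmbnd}) applied with $F(t_{j+1})$ in place of $F_0$, extract the $\ep^{-(m+2j+2)}$ factor from the $\Lr_{s,\ep}^\infty$ norm, integrate in time to pick up $T^{j+1}/(j+1)!$, control the combinatorial factor with \eqref{eq:prod_km} and Robbins' inequality \eqref{eq:Robb}, and conclude with geometric decay once $T\lesssim_s\ep^2$. The only cosmetic difference is that the paper absorbs the factor $3e$ into the constant $C_s'$ and drops the harmless $\ep^{-m}$ prefactor from the final display, while you keep them explicit and also spell out that the crude product bound kicks in once $j+1>m$, which the paper leaves implicit.
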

\begin{proof}
Applying the estimate \eqref{eq:Cjmbnd} with $F_0$ replaced by $F(t_{j+1})$, we see that
\begin{equation}
\|\Cr^{j+1}[F(t_{j+1})]^{(m)}\|_{\L_{s,m}^\infty} \leq C_s^{j+1}\paren*{\prod_{r=1}^{j+1}(m+2r-2)} \|f^{(m+2j+2)}(t_{j+1})\|_{\L_{s,m+2j+2}^\infty}.\end{equation}

If $j\geq m$, we can use the bound \eqref{eq:prod_km} to obtain
\begin{align}
C_s^{j+1}\paren*{\prod_{r=1}^{j+1}(m+2r-2)} \|f^{(m+2j+2)}(t_{j+1})\|_{\L_{s,m+2j+2}^\infty} &\leq C_s^{j+1}(3(j+1))^{j+1}\|f^{(m+2j+2)}(t_{j+1})\|_{\L_{s,m+2j+2}^\infty} \nn\\
&\leq \frac{C_s'^{j+1}(j+1)^{j+1}}{\ep^{m+2j+2}} \|F\|_{C([0,T]; \Lr_{s,\ep}^\infty)},
\end{align}
where $C_s'\geq C_s$. Hence,
\begin{equation}
\begin{split}
&\sup_{0\leq t_0\leq T}\int_0^{t_0}\int_{0}^{t_1}\cdots\int_{0}^{t_j} \|f^{(m+2j+2)}(t_{j+1})\|_{\L_{s,m+2j+2}^\infty}dt_{j+1}\cdots dt_2dt_1 \\
&\leq \frac{C_s'^{j+1}T^{j+1}(j+1)^{j+1}}{(j+1)!\ep^{m+2j+2}} \|F\|_{C([0,T]; \Lr_{s,\ep}^\infty)}.
\end{split}
\end{equation}
Next, using Robbins' lower bound in \eqref{eq:Robb}, we see that
\begin{equation}
\frac{C_s'^{j+1}T^{j+1}(j+1)^{j+1}}{(j+1)!\ep^{2j+2}} < (j+1)^{-1/2}\paren*{\frac{C_s' T}{\ep^2}}^{j+1},
\end{equation}
which tends to zero as $j\rightarrow\infty$, provided that $T$ is sufficiently small so that $C_s'T\ep^{-2}\leq 1$. Thus,
\begin{equation}
\lim_{j\rightarrow \infty} \sup_{0\leq t_0\leq T} \int_0^{t_0}\int_{0}^{t_1}\cdots\int_{0}^{t_j} \|f^{(m+2j+2)}(t_{j+1})\|_{\L_{s,m+2j+2}^\infty}dt_{j+1}\cdots dt_2dt_1 = 0,
\end{equation}
completing the proof of the lemma.
\end{proof}

\subsection{Dependence on initial data}
\label{ssec:WPdep}
Finally, we quantify the dependence of the Duhamel series on the datum $F_0$. Since the Duhamel series, and by implication the solution to the spectral hierarchy \eqref{eq:WKE_hier}, is linear in this dependence, our task reduces to an application of the estimates from \cref{ssec:WPcon}.

\begin{prop}
\label{prop:dep}
Let $F_0=(f_0^{(m)})_{m=1}^\infty$ and $G_0=(g_0^{(m)})_{m=1}^\infty$, let $F$ and $G$ be their respective solutions in $C([0,T]; \Lr_{s,\ep_1}^\infty)$ starting from $F_0$ and $G_0$. Then there exist $\ep_2(s,\ep_1)>0$ and $0<T'(s,\ep_1,\ep_2)\leq T$ such that
\begin{equation}
\sup_{0\leq t\leq T'} \|F(t)-G(t)\|_{\Lr_{s,\ep_2}^\infty} \lesssim_{s,T,\ep_1,\ep_2} \|F_0-G_0\|_{\Lr_{s,\ep_1}^\infty}.
\end{equation}
\end{prop}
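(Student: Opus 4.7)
The plan is to exploit the linearity of the spectral hierarchy \eqref{eq:WKE_hier} together with the quantitative estimates already established in \cref{prop:Duhcon,prop:uniq}, reducing the continuity claim to a one-line consequence of the bound \eqref{eq:solbnd}.

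First I would set $H_0 \coloneqq F_0 - G_0 \in \Lr_{s,\ep_1}^\infty$, observing that $\|H_0\|_{\Lr_{s,\ep_1}^\infty} \leq \|F_0\|_{\Lr_{s,\ep_1}^\infty} + \|G_0\|_{\Lr_{s,\ep_1}^\infty}$. Because equation \eqref{eq:Duh} is affine-linear in the solution with $F_0$ in the datum slot, and both $F$ and $G$ satisfy \eqref{eq:Duh} on $[0,T]$, the difference $H \coloneqq F - G$ lies in $C([0,T];\Lr_{s,\ep_1}^\infty)$ and satisfies
\begin{equation}
H(t) = H_0 + \int_0^t \Cr[H(\tau)]\, d\tau.
\end{equation}

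Next I would invoke \cref{prop:Duhcon} with initial datum $H_0$: there exist $\ep_2=\ep_2(s,\ep_1) \in (0,\ep_1)$ and $T'=T'(s,\ep_1,\ep_2) \in (0,T]$ satisfying the balance $\ep_2 e^{T'C_s\ep_1^{-2}}\ep_1^{-1}<1$ and $C_sT'\ep_1^{-2}<1$, for which the Duhamel series
\begin{equation}
\widetilde{H}(t) \coloneqq \sum_{j=0}^\infty \frac{t^j}{j!}\Cr^j[H_0]
\end{equation}
converges absolutely in $C([0,T'];\Lr_{s,\ep_2}^\infty)$, solves \eqref{eq:Duh} with datum $H_0$, and obeys the quantitative estimate
\begin{equation}
\sup_{0\leq t\leq T'} \|\widetilde{H}(t)\|_{\Lr_{s,\ep_2}^\infty} \leq C(s,\ep_1,\ep_2,T')\,\|H_0\|_{\Lr_{s,\ep_1}^\infty}.
\end{equation}
Since $\ep_2<\ep_1$ implies the trivial continuous embedding $\Lr_{s,\ep_1}^\infty \hookrightarrow \Lr_{s,\ep_2}^\infty$, both $H$ and $\widetilde{H}$ belong to $C([0,T'];\Lr_{s,\ep_2}^\infty)$ and solve \eqref{eq:Duh} with the same datum $H_0$. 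Uniqueness in this class (\cref{prop:uniq}, possibly after further shrinking $T'$ to meet its hypothesis on smallness in terms of $s,\ep_2$) forces $H = \widetilde{H}$ on $[0,T']$, and hence
\begin{equation}
\sup_{0\leq t\leq T'}\|F(t)-G(t)\|_{\Lr_{s,\ep_2}^\infty} \lesssim_{s,T,\ep_1,\ep_2} \|F_0-G_0\|_{\Lr_{s,\ep_1}^\infty},
\end{equation}
which is the assertion of the proposition.

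There is essentially no genuine obstacle here, since all the analytic heavy lifting (convergence of the Duhamel series, the loss of derivatives absorbed by dropping $\ep_1$ to $\ep_2$, and uniqueness) has already been carried out in \cref{ssec:WPcon,ssec:WPu}. The only bookkeeping point is to make sure that the pair $(\ep_2,T')$ simultaneously satisfies the hypotheses of \cref{prop:Duhcon} \emph{and} \cref{prop:uniq}; this is achieved by first selecting $\ep_2$ small enough that $\ep_2/\ep_1 < e^{-1/2}$, and then shrinking $T'$ so that $\max\{C_s',C_s''\}T'\ep_1^{-2} \le 1/2$, exactly as in the proof of convergence.
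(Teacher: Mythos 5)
Your proposal is correct and follows essentially the same route as the paper: set $H_0 = F_0 - G_0$, feed it into the Duhamel series from \cref{prop:Duhcon} to get the quantitative bound, and invoke the uniqueness result \cref{prop:uniq} to identify this series with $F - G$. The only additional bookkeeping you supply (the embedding $\Lr_{s,\ep_1}^\infty \hookrightarrow \Lr_{s,\ep_2}^\infty$ and the compatibility of $(\ep_2,T')$ with both propositions' hypotheses) is a welcome but minor refinement of the same argument.
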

\begin{proof}
Let us first set $H_0\coloneqq F_0-G_0$, and define
\begin{equation}
H(t) \coloneqq \sum_{k=0}^\infty \frac{t^k}{k!}\Cr^k[H_0].
\end{equation}
\cref{prop:Duhcon}, specifically inequality \eqref{eq:Dubnd} shows that there is a constant $C_s>0$ such that
\begin{equation}
\|H\|_{C([-T',T']; \Lr_{s,\ep_2}^\infty)} < \frac{\ep_2e^{T'C_s\ep_1^{-2}}\ep_1^{-1}}{1-\ep_2 e^{T'C_s\ep_1^{-2}}\ep_1^{-1}} \|H_0\|_{\Lr_{s,\ep_1}^\infty} + \frac{(C_s T'\ep_1^{-2})^2}{(1-C_s T'\ep_1^{-2})} \|H_0\|_{\Lr_{s,\ep_2}^\infty}
\end{equation}
provided that $\ep_1,\ep_2,T'$ are chosen to satisfy the constraints \eqref{eq:bal}. Moreover, $H$ is a solution to the spectral hierarchy with initial datum $H_0$. Since $F-G$ is also solution to the spectral hierarchy with initial datum $H_0$ on the interval $[0,T']$, \cref{prop:uniq} implies that $F-G=H$ on this interval.
\end{proof}

\section{Super-statistical solutions}
\label{sec:SS}
We conclude this article with some remarks on super-statistical solutions, briefly mentioned at the end of \cref{ssec:intromr}. To start off, we note that the spectral correlation functions $\{f_{L,\lambda}^{(m)}\}_{m=1}^\infty$ (recall their definition from \eqref{eq:cordef}) satisfy the \emph{admissibility} or \emph{consistency} condition
\begin{equation}
\int_{\R^3}d\xi_{m+1} f_{L,\lambda}^{(m+1)}(\xi_1,\ldots,\xi_{m+1}) = Cf_{L,\lambda}^{(m)}(\xi_1,\ldots,\xi_m),
\end{equation}
where $C>0$ is some fixed constant. Indeed, by Fubini-Tonelli, the definition of the empirical spectrum \eqref{eq:esdef}, and conservation of mass for the NLS, we have that
\begin{align}
\int_{\R^3}d\xi_{m+1} f_{L,\lambda}^{(m+1)}(\xi_1,\ldots,\xi_{m+1}) &= \E\paren*{\int_{\R^3}d\xi_{m+1} f_{L,\lambda}^{\otimes m+1}(\xi_1,\ldots,\xi_{m+1})} \nn\\
&=\E\paren*{ f_{L,\lambda}^{\otimes m}(\xi_1,\ldots,\xi_m) } L^{-d}\sum_{k\in\Z_L^d} \phi(k),
\end{align}
where $\phi$ is the same deterministic function from the initial data formula \eqref{eq:ID}. So in the kinetic limit, we expect that--but at present cannot prove--solutions to the spectral hierarchy satisfy the condition
\begin{equation}
\label{eq:adm}
\int_{\R^3}d\xi_{m+1}f^{(m+1)}(\xi_1,\ldots,\xi_{m+1}) = \paren*{\int_{\R^3}\phi(k)dk} f^{(m)}(\xi_1,\ldots,\xi_{m}).
\end{equation}
Without loss of generality, we may assume that the prefactor equals one, so that $f^{(m)}$ has unit mass.

We argue now silimiarly as to in \cite[Section 5]{Spohn1981}. Given data $F_0 = (f_0^{(m)})_{m=1}^\infty \in \Lr_{s,\ep}^\infty$ such that $f_0^{(m)}\geq 0$, is symmetric under permutation, and satisfies the admissibility condition \eqref{eq:adm}, the Hewitt-Savage/de Finetti theorem \cite{HS1955}, implies that there exists a probability measure $\rho$ on the set $\P(\R^3)$ of Borel probability measures on $\R^3$, such that
\begin{equation}
f_0^{(m)} = \int_{\P(\R^3)}d\rho(f_0) f_0^{\otimes m},
\end{equation}
with equality in the sense of measures. For any nonnegative test function $\psi\in\Sc(\R^3)$,
\begin{equation}
\int_{(\R^3)^m}df_0^{\otimes m}(\uxi_{1;m})\psi^{\otimes m}(\uxi_{1;m}) = \paren*{\int_{\R^3}df_0(\xi)\psi(\xi)}^{m},
\end{equation}
which implies that
\begin{align}
0\leq \int_{\P(\R^3)}d\rho(f_0)\paren*{\int_{\R^3}df_0(\xi)\psi(\xi)}^{m} &=\int_{(\R^3)^m}d\uxi_{1;m}f_0^{(m)}(\uxi_{1;m})\psi^{\otimes m}(\uxi_{1;m}) \nn\\
&\leq \frac{\|F_0\|_{\Lr_{s,\ep}^\infty}}{\ep^m} \paren*{\int_{\R^3}d\xi \jp{\xi}^{-s}\psi(\xi)}^m
\end{align}
for every $m\in\N$, where we use our assumption that $F_0=(f_0^{(m)})_{m=1}^\infty \in \Lr_{s,\ep}^\infty$. Hence, for any $R>0$, Chebyshev's inequality implies that
\begin{align}
\rho\paren*{\int_{\R^3}df_0(\xi)\psi(\xi) \geq R} \leq \frac{\|F_0\|_{\Lr_{s,\ep}^\infty}}{(\ep R)^m}\paren*{\int_{\R^3}d\xi \jp{\xi}^{-s}\psi(\xi)}^m, \qquad \forall m\in\N.
\end{align}
Choosing
\begin{equation}
R> \ep^{-1}\int_{\R^3}d\xi \jp{\xi}^{-s}\psi(\xi)
\end{equation}
and letting $m\rightarrow\infty$, we conclude that with $\rho$-probability one,
\begin{equation}
\int_{\R^3}df_0(\xi)\psi(\xi)  \leq \ep^{-1}\int_{\R^3}d\xi \jp{\xi}^{-s}\psi(\xi), \qquad \forall \psi\in\Sc(\R^3), \ \psi\geq 0.
\end{equation}
Implicitly, we have used the fact that $\Sc(\R^3)$ is a separable space and that the countable union of $\rho$-null sets is again $\rho$-null. Since $\Sc(\R^3)$ is dense in the space,
\begin{equation}
L_{-s}^1 \coloneqq \{\psi \in \Sc'(\R^3) : \|\jp{\xi}^{-s}\psi\|_{L^1} < \infty\},
\end{equation}
it follows that with $\rho$-probability one, $f_0$ defines a positive, bounded linear functional on this space. Consequently, the Riesz representation theorem implies that $f_0$ can be identified with a nonnegative function in $L^\infty(\R^3)$ such that $\|\jp{\xi}^sf_0\|_{L^\infty} < \infty$.

Now the above only shows that for $f_0\in\supp(\rho)$, $\|\jp{\xi}^sf_0\|_{L^\infty}<\infty$ $\rho$-almost surely; we do not know that $f_0$ is continuous on $\R^3$. For the sake of argument, though, suppose that
\begin{equation}
\supp(\rho)\subset \{f_0\in L_s^\infty(\R^3): \|f_0\|_{L_s^\infty} \leq R\},
\end{equation}
for some fixed $R>0$ and $s>2$. For $t>0$, let $S_t$ denote the data-to-solution map
\begin{equation}
L_s^\infty \rightarrow L_s^\infty, \qquad f_0 \mapsto f(t),
\end{equation}
where $f$ denotes the solution at time $t$ to the WKE \eqref{eq:WKE} with initial datum $f_0$. The work of Germain et al. \cite[Theorem 2.1]{GIT2020} shows that $f\in C([0,T];L_s^\infty)$ with
\begin{equation}
T\gtrsim_s R^{-2} \qquad \text{and} \qquad \sup_{0\leq t\leq T} \|f(t)\|_{L_s^\infty} \leq 2R.
\end{equation}
Moreover, the map $f_0\mapsto S_t f_0$ is continuous. Now there exists some $T\gtrsim_s R^{-2}$ such that
\begin{equation}
f^{(m)}(t) \coloneqq \int_{L_s^\infty}d\rho(f_0) (S_tf_0)^{\otimes m},
\end{equation}
defines a solution to the hierarchy \eqref{eq:WKE_hier} in $C([0,T];\Lr_{s,\ep}^\infty)$, for any $\ep < 1/(2R)$, and with initial datum
\begin{equation}
f_0^{(m)}\coloneqq \int_{L_s^\infty}d\rho(f_0) f_0^{\otimes m}.
\end{equation}
By \cref{thm:main} $(f^{(m)})_{m=1}^\infty$ is the unique solution. Thus, we have shown that
\begin{equation}
f^{(m)}(t) = \int_{L_s^\infty}d(S_t\# \rho)(f_0) f_0^{\otimes m},
\end{equation}
where $S_t\#\rho$ denotes the pushforward of $\rho$ under the map $S_t$.

\begin{remark}
Nowhere in the proof of \cref{thm:main} did we invoke an admissibility condition of the form \eqref{eq:adm}. Our result holds for general initial data $F_0$ belonging to the space $\Lr_{s,\ep}^\infty$, for $s>2$. These non-admissible solutions, though, do not seem to be physically meaningful for the reasons discussed above.
\end{remark}

\bibliographystyle{siam}
\bibliography{WT}

\end{document}